\documentclass[11pt,reqno]{article}

\usepackage{amssymb,amsmath,amsthm,mathtools,wasysym,calc,verbatim,enumitem,tikz,pgfplots,hyperref,url,mathrsfs,fullpage,bbm,comment}
\usepackage[noadjust]{cite}
\usepackage{authblk}
\usepackage{comment,fullpage}
\usepackage[normalem]{ulem}
\mathtoolsset{showonlyrefs}
\pgfplotsset{compat=1.18}
\newtheorem{theorem}{Theorem}
\newtheorem{definition}[theorem]{Definition}

\newtheorem{proposition}[theorem]{Proposition}
\newtheorem{lemma}[theorem]{Lemma}
\newtheorem{claim}[theorem]{Claim}
\newtheorem{corollary}[theorem]{Corollary}

\newtheorem{fact}[theorem]{Fact}

\newtheorem*{claim*}{Claim}

\theoremstyle{remark}
\newtheorem*{remark*}{Remark}

\usepackage{thmtools, thm-restate}

\numberwithin{theorem}{section}

\DeclarePairedDelimiter\norm{\lVert}{\rVert}

\renewcommand{\phi}{\varphi}

\renewcommand{\leq}{\le}
\renewcommand{\geq}{\ge}
\newcommand{\one}{\mathbf{1}}

\newcommand{\eps}{\varepsilon}

\newcommand{\cE}{\mathcal E}
\newcommand{\cG}{\mathcal G}

\newcommand{\cD}{\mathcal{D}}

\newcommand{\cL}{\mathcal L}

\usepackage{cancel}

\newcommand{\cS}{\mathcal S}
\newcommand{\cP}{\mathcal P}

\newcommand{\cI}{\mathcal I}

\newcommand{\E}{\mathbb{E}}

\newcommand{\Var}{\operatorname{Var}}
\newcommand{\tr}{\operatorname{tr}}
\def\1{\mathbbm{1}}
\def\bz{{\bf z}}
\newcommand{\Cov}{\operatorname{Cov}}
\newcommand{\Tr}{\operatorname{Tr}}
\newcommand{\diag}{\operatorname{diag}}
\newcommand{\op}{\operatorname{op}}

\usepackage{tikz-cd}
\usetikzlibrary{positioning,arrows.meta,calc}

\newcommand{\sol}[1]{%
  \mkern1mu\overline{\mkern-1mu #1 \mkern-1mu}\mkern1mu}

\renewcommand{\le}{\leqslant}
\renewcommand{\ge}{\geqslant}
\renewcommand{\P}{\mathbb{P}}

\newcommand{\R}{\mathbb R}

\newcommand{\bh}{\mathbf h}

\newcommand{\bX}{\mathbf X}
\newcommand{\bY}{\mathbf Y}

\newcommand{\EE}{\mathbb E}

\newcommand{\PP}{\mathbb P}

\newcommand{\bx}{\mathbf x}

\newcommand{\bg}{\mathbf g}
\newcommand{\bw}{\mathbf w}
\newcommand{\bm}{\mathbf m}

\usepackage{thm-restate}
\usepackage{array}



\title{Rank-1-perturbed trickledown theorems: Mixing time of Glauber dynamics for the Sherrington-Kirkpatrick model up to $\beta\le\tfrac{1}{2}+\eps$}

\author[1]{Mathews Boban}
\author[1]{Anqi Li}
\author[1]{Shayan Oveis Gharan}
 
\affil[1]{University of Washington, \textsf{\{matheb,aqli,shayan\}@cs.washington.edu}}
\date{}

\begin{document}
\maketitle
\begin{abstract}
    We introduce a new family of trickledown theorems, a.k.a., local to global technique to bound the spectral gap of the Glauber dynamics for multi-state spin systems. In this technique instead of upper-bounding the influence matrix of a link of co-dimension 2 by $\lambda I$ (where $\lambda$ is the second eigenvalue of the link), we upper-bound the influence matrix after a carefully chosen rank-1 shift. The rank-1 shift allows for a significantly smaller upper-bound but it comes at the cost of bounding the average loss due to rank-1 perturbations. 

    As an application we use this method to show that the natural Glauber dynamics mixes in polynomial time to generate samples from the Sherrington-Kirkpatrick model for $\beta\leq 1/2+\eps$, for an absolute constant $\eps>0$. At the heart of the proof we manage to bound the loss due to rank-1 perturbations by {\em averaging} over all links of co-dimension 2. 
\end{abstract}
\section{Introduction}
A multi-state spin system is composed of a set $V=\{1,\dots,n\}$ of particles. Each site $i$ is associated with a set of spins $S_v$. Each possible configuration of the system is an element of the set ${\cal S}:=S_1 \times \dots\times S_n$. Let $\mu$ be a probability distribution over all configurations.

Given such a $\mu$, often one is interested in estimating certain properties of $\mu$, for example, its moments or an expectation of a given function of the spins. The MCMC method is often a method of choice in practice: One designs a Markov chain with stationary distribution $\mu$. Then, proves that this chain has an efficient mixing time, namely polynomial in $n$. This in turn allows one to easily generate samples of $\mu$ by running the Markov chain.

The natural Markov chain is the Glauber dynamics: Given a state $\bx\in {\cal S}$, we choose $1\leq i\leq n$ uniformly at random. Then, we forget the spin of the $i$-th particle and re-sample it according to $\mu$ conditioned on $\bx_{-i}$, i.e., for any $x\in S_i,$ we choose $x$ with probability
$$\frac{\mu(x_1,\dots,x_{i-1},x,x_{i+1},\dots,x_n)}{\sum_{x'\in S_i}\mu(x_1,\dots,x_{i-1},x',x_{i+1},\dots,x_n)}$$

One fundamental question is how long it takes for this chain to mix. Let $\frac{1}{n}\cP$ be the transition probability matrix of the Glauber dynamics Markov chain (the factor $1/n$ is for a notational convenience as will be clear later; essentially it implies every vertex is chosen at a unit rate w.r.t. $\cP$) then the mixing time is defined to be
$$ \max_{\bx\in \cS} \min \bigg\{t: \bigg\|1_\bx^\top \bigg(\frac1{n}\cP \bigg)^t  - \mu \bigg\|_{\mathrm TV}\leq \tfrac{1}{4} \bigg\}.$$

To bound the mixing time of $\frac1{n}\cP$ a method of choice is to bound its spectral gap. 
\begin{theorem}\label{thm:spec-gap}
Suppose that for any function $f:{\cal S}\to\R$ satisfying $\E_{\mu} f = 0$, 
$$ \langle \cL f,f\rangle_\mu\geq \eps\langle f,f\rangle_\mu,$$
where $\cL=nI- \cP$ and for two functions $f,g:{\cal S}\to \R$, $\langle f,g\rangle_\mu=\E_{X\sim\mu}[f(X)g(X)].$ Then, the mixing time of $\frac{1}{n}\cP$ is bounded by $$O\Bigg(\frac{n\cdot \log\Big(\frac{1}{\mu_{\min}}\Big)}{\eps}\Bigg)$$
where $\mu_{\min}=\min_x\mu(x)$.
\end{theorem}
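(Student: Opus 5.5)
The plan is the standard spectral-gap-to-mixing-time argument for a reversible chain. Write $P=\frac1n\cP$. First I will record the basic structure of $P$. Each single-site resampling step is the operator $\Pi_i f(\bx)=\E_\mu[f\mid \bx_{-i}]$. This is an orthogonal projection in $L^2(\mu)$: it is self-adjoint and idempotent, since it is a conditional expectation. Hence
$$P=\frac1n\sum_{i=1}^n \Pi_i$$
is self-adjoint with respect to $\langle\cdot,\cdot\rangle_\mu$ and positive semidefinite, and $\mu$ is stationary. Since $\cL=nI-\cP=n(I-P)$, the hypothesis reads $\langle (I-P)f,f\rangle_\mu\ge \frac{\eps}{n}\langle f,f\rangle_\mu$ for all $f\perp \one$. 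Equivalently, every eigenvalue of $P$ on $\one^\perp$ lies in $[0,1-\eps/n]$. Positivity of $P$ matters here: it removes any eigenvalue near $-1$, so no lazification is needed.

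Next I will bound the $L^2$ contraction. Fix a start state $\bx$ and let $h_t=P^t(1_\bx/\mu(\bx))$, which is the density of the time-$t$ distribution with respect to $\mu$ (using reversibility, $1_\bx^\top P^t$ has density $P^t(1_\bx/\mu(\bx))$). Write $h_0-\one\perp\one$. By the spectral theorem,
$$\norm{h_t-\one}_\mu\le (1-\eps/n)^t\norm{h_0-\one}_\mu .$$
Then compute $\norm{h_0-\one}_\mu^2=\frac1{\mu(\bx)}-1\le \frac1{\mu_{\min}}$.

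Finally I will pass to total variation via Cauchy--Schwarz:
$$\big\|1_\bx^\top P^t-\mu\big\|_{\mathrm{TV}}=\tfrac12\E_\mu|h_t-1|\le\tfrac12\norm{h_t-\one}_\mu\le \tfrac12 e^{-\eps t/n}\mu_{\min}^{-1/2}.$$
Choosing $t=\frac{n}{\eps}\cdot\frac12\log(1/\mu_{\min})+O(n/\eps)$ makes this at most $1/4$. This gives the claimed $O\big(n\log(1/\mu_{\min})/\eps\big)$ bound, noting that $\log(1/\mu_{\min})\ge \log 2$ whenever $\mu$ is not a point mass, so the additive $O(n/\eps)$ is absorbed.

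There is no real obstacle. The only point needing care is the self-adjointness and positive semidefiniteness of $P$, which follow from each $\Pi_i$ being a conditional-expectation projection. That fact is what lets the Dirichlet form bound control the full spectrum on $\one^\perp$, not just the top of it.
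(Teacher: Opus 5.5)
Your argument is correct. The paper does not actually prove this theorem: it quotes it as the standard spectral-gap-to-mixing-time bound, and what you wrote is exactly that standard $L^2$ argument. It uses the density $h_t$, the contraction $\norm{h_t-\one}_\mu\le(1-\eps/n)^t\norm{h_0-\one}_\mu$ on $\one^\perp$, and Cauchy--Schwarz to pass to total variation.

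The one point specific to this setting is handled correctly. Your $P=\frac1n\cP$ is an average of the orthogonal projections $P_i$, which the paper itself notes are self-adjoint idempotents, so $P$ is positive semidefinite. Hence the hypothesis, a bound on the relative gap, already controls the operator norm of $P^t$ on $\one^\perp$ without any lazification. It also forces $\eps\le n$, so the step $(1-\eps/n)^t\le e^{-\eps t/n}$ is legitimate.
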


Over the last few decades, several techniques have been developed to bound the mixing time of such a chain such as (path) coupling, canonical paths, and spectral independence. Of particular interest to us is a family of local to global theorems that often go by the name of trickledown theorems (or more classically by the name of Dobrushin's condition or path coupling). 

In such techniques given two arbitrary particles $i,j$ one bounds the {\em worst case} interaction between $i,j$ given the status of the rest of the particles. This is classically studied as the worst case influence of $j$ to $i$ in the sense that by changing the spin of particle $j$ how much the distribution of $i$ can change in total variation distance \cite{D70, DGJ08, DGJ09}. More modern techniques show that it is enough to bound the worst case ``spectral influence" of $j$ to $i$; that is the maximum eigenvalue of the correlation matrix of $X_i, X_j$ in the worst case when all other particles are fixed (for concreteness such a matrix is in $\R^{(S_i\cup S_j)\times (S_i\cup S_j)}$ with entries corresponding to the conditional probabilities, i.e.,  $\P[X_i|X_j=s])$ for entries in the row corresponding to spin $s$ for site $j$). 

Let $\lambda_{i,j}$ be the (spectral) influence of $j$ to $i$ and let $\Lambda$ be the matrix of all pairwise influences; then it has been  shown that if $\|\Lambda\|_{\op}<1$ the natural Markov chain called the Glauber dynamics mixes rapidly, see e.g., \cite{H06}. 

The aforementioned technique has found many applications e.g., in studying graph coloring or multi-state hard core model or even in studying graphical models \cite{D70, DS84, H06, BPCPSDV22, LO25}. Although it is very powerful, it comes short of analyzing many applications of the natural Glauber dynamics. Ideally, one would like an average local-to-global theorem which allows for an ``average spectral influence" of particle $j$ to $i$ instead of taking the worst case spin of all other particles. 
The main result of this manuscript 
is such a technique that we call  
rank-1-perturbed trickledown theorem. 

\paragraph{{\bf Rank-1-perturbed trickledown theorem:}}
Given the $n$-particle spin system, and function $f:S_1\times\dots \times S_n\to\R$ let $P_if$ be the function which averages $f$ on $S_i$ such that the spin of all other particles are fixed, namely:
$$ P_if(x_1,\dots,x_n)=\E_\mu[f(X_1,\dots,X_n) | X_j=x_j,\forall j\neq i].$$

Let $\cP=\sum_i P_i$.
Then, it follows that the transition probability matrix of the Glauber dynamics is $\frac1n{\cal P}.$ We let ${L}_i=I-P_i$ and $\cL =\sum_i{L}_i.$
It is not hard to see that the operator $P_i$ is a projection, i.e., for any function $f$, $P_i^2f=P_if$ and it is self-adjoint with respect to the inner product defined below. 
Similarly, $L_i$'s are projections. Define the inner-product 
$$ \langle f,g\rangle_\mu=\E_{X\sim\mu}[f(X)g(X)].$$
It is not hard to see that $P_i,L_i$ are self-adjoint with respect to this inner-product, i.e., for two functions $f,g$
\[\langle f, P_i g \rangle = \E_{\mu}[f\, \E_{\mu}[g \, | \, X_{-i}]] = \E_{\mu}[ \E_{\mu}[f \, | \, X_{-i}] \,\E_{\mu}[g \, | \, X_{-i}]] = \E_{\mu}[\E_{\mu}[f \, | \, X_{-i}]\, g] = \langle P_if , g \rangle \]
and since $L_i = I-P_i$, the self-adjointness of $L_i$ follows.
Additionally, we have the following equivalent expressions of the Dirichlet form
\begin{equation}\label{eq:dirichlet-form}
    \cE(f) := \E_{\mu}[f\cL f] =  \sum_i\E_{\mu}[f L_if] = \sum_i\E_{\mu}[L_if L_if] = \sum_i \E_{\mu}\big[(L_if)^2 \big].
\end{equation}

\begin{fact}\label{fact:LLvsL}
    If for any function $f:{\cS}\to \R$,
    $$ \langle \cL f,\cL f\rangle\geq \eps\langle \cL f,f\rangle=\eps\cdot \cE(f),$$
    then the spectral gap of the Glauber dynamics $\frac1{n}\cP$ is at least $\eps/n.$
\end{fact}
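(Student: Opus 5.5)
The plan is to read the hypothesis as a spectral statement about the self-adjoint, positive semidefinite operator $\cL$ on $L^2(\mu)$, and then convert it into the Poincaré inequality of Theorem~\ref{thm:spec-gap}.

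First we record the basic structure. Each $L_i$ is a self-adjoint projection with respect to $\langle\cdot,\cdot\rangle_\mu$, so $\cL=\sum_i L_i$ is self-adjoint and positive semidefinite. Its kernel contains the constants, because $L_i\one=0$ for every $i$. The space of functions on $\cS$ is finite dimensional, so we may expand $f=\sum_k c_k v_k$ in a $\mu$-orthonormal eigenbasis $\{v_k\}$ of $\cL$ with eigenvalues $\lambda_k\ge 0$.

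Next we translate the hypothesis into a statement about eigenvalues. Plugging $f=v_k$ into $\langle\cL f,\cL f\rangle\ge\eps\langle\cL f,f\rangle$ gives $\lambda_k^2\ge\eps\lambda_k$. Hence every eigenvalue satisfies either $\lambda_k=0$ or $\lambda_k\ge\eps$.

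We then need to identify the kernel. We want $\ker\cL$ to be exactly the constants, so that on mean-zero functions we get $\langle\cL f,f\rangle\ge\eps\langle f,f\rangle$. If $\cL f=0$, then $\cE(f)=\sum_i\E_\mu[(L_if)^2]=0$ by \eqref{eq:dirichlet-form}, so $P_if=f$ for all $i$ on the support of $\mu$. This means $f$ is invariant under every single-site resampling.

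This step is where a genuine subtlety lies. Concluding that $f$ is constant requires the Glauber dynamics to be irreducible (connected) on $\mathrm{supp}(\mu)$. Irreducibility is implicit in the setting: for instance, the models considered have full support, or else one assumes connectivity. Under that assumption, $f$ is constant along each single-coordinate move and therefore constant on the support. In fact, the conclusion about the spectral gap of $\frac1n\cP$ only concerns its second eigenvalue, so the fact can be read as saying that the nonzero spectrum of $\cL$ lies in $[\eps,\infty)$. With irreducibility, this is the same as a gap on mean-zero functions.

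With the kernel identified, we conclude as follows. A mean-zero $f$ is orthogonal to the constants, so it only has components with $\lambda_k\ge\eps$, and therefore $\langle\cL f,f\rangle=\sum_k\lambda_k c_k^2\ge\eps\langle f,f\rangle$. Since $\frac1n\cP=I-\frac1n\cL$, every nontrivial eigenvalue of $\frac1n\cP$ is at most $1-\eps/n$. This gives spectral gap at least $\eps/n$, as claimed, and Theorem~\ref{thm:spec-gap} then applies.

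The main obstacle is only the kernel/irreducibility point; everything else is routine spectral calculus.
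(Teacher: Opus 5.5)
Your proof is correct and is essentially the paper's argument. The paper substitutes $f=\cL^{-1/2}g$ (using the pseudo-inverse) to obtain $\langle \cL g,g\rangle\ge\eps\langle g,g\rangle$ for $g\perp\ker\cL$, which is the same spectral fact as your observation that $\lambda_k^2\ge\eps\lambda_k$ forces $\lambda_k\in\{0\}\cup[\eps,\infty)$. Your explicit identification of $\ker\cL$ with the constants via irreducibility fills in a step the paper's proof leaves implicit; this step is automatic for the SK model, where $\mu$ has full support on $\{\pm1\}^n$.
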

\begin{proof}
Given an arbitrary function $g$, since $\cL\succeq 0$, we let $f=\cL^{-1/2} g$ (to be precise, we need to take the pseudo-inverse of $\cL$ since $\cL$ has a 0 eigenvalue). 
Having this,
    $$ \langle \cL g,g\rangle=\langle \cL f,\cL f\rangle\geq\eps\cdot\langle \cL f,f\rangle = \langle g,g\rangle,$$
    where all the inner-products are with respect to $\mu$. This concludes the proof.
\end{proof}
We emphasize that the above machinery of using the ``averaged curvature condition'' (\cite[Equation 1.16.6, Proposition 4.8.3]{BGL14}) was recently employed by G\"{o}bel, Jenssen, Michelen, Pappik, Perkins and Schiller \cite{GJMPPS26} to re-prove a recent trickledown theorem of Chen, Chen, Chen, Yin, and Zhang \cite{CCCYZ25}. Their proof was a motivation for us to consider this technique for the SK model.

\begin{definition}\label{def:rank-1-perturbed}
    We say $\cP$ exhibits a rank-1-perturbed trickledown condition if there exists a family of vectors $\{\bm(X)\} \in \R^n,$ for all $X\in {\cS}$ such that for any function $f:{\cS}\to\R$ and any $1\leq i<j\leq n$ and any (feasible) spin $x_{-i-j}$ (for all particles except $i,j$) we have
    \begin{align} \label{eq:rank1perturbed}&\E_{X\sim\mu}[L_if(X)L_jf(X) | x_{-i-j}] \geq -a_{i,j}\E[L_if(X)L_jf(X)m_i(X)m_j(X)|x_{-i-j}] \\
    &\quad- b_{i,j} \E[L_if(X)^2+L_jf(X)^2|x_{-i-j}]
    \end{align}
\end{definition}

The following abstracts our main technique. 
\begin{proposition}\label{prop:rank-1-perturbed}
    Suppose $\cP$ exhibits rank-1-perturbed trickledown condition w.r.t.\ vectors ${\bf m}(\bx)$ with corresponding constants $\{a_{i,j},b_{i,j}\}_{1\leq i,j\leq n}$. If for every $i$,
    \begin{equation}\label{eq:rank-1-perturbed-condition}
        \sum_j b_{i,j} \E[L_if(X)^2]+ \lambda_{\max}(A)\E[m_i(X)^2L_if(X)^2] \leq (1-\eps)\E[L_if(X)^2],
    \end{equation}
    where $A\in \R^{n\times n}$ has entries $a_{i,j}$, then $\langle \cL f, \cL f\rangle\geq \eps\cdot\cE(f)$.
\end{proposition}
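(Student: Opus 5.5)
Expand $\langle \cL f,\cL f\rangle_\mu = \sum_{i,j}\E[L_if\,L_jf]$. The diagonal terms give $\sum_i \E[(L_if)^2]=\cE(f)$, using that each $L_i$ is a self-adjoint projection and the identity \eqref{eq:dirichlet-form}. It then suffices to show that the off-diagonal part satisfies
\begin{equation}
\sum_{i\neq j}\E[L_if\,L_jf]\ \ge\ -(1-\eps)\sum_i\E[(L_if)^2].
\end{equation}

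For each unordered pair $i<j$, take the rank-1-perturbed trickledown inequality \eqref{eq:rank1perturbed}, which holds conditionally on every feasible $x_{-i-j}$. Average it over $x_{-i-j}\sim\mu$ using the tower property. This gives
\begin{equation}
\E[L_if\,L_jf]\ \ge\ -a_{i,j}\E[L_if\,L_jf\,m_im_j]-b_{i,j}\E[(L_if)^2+(L_jf)^2].
\end{equation}
Now sum over ordered pairs $i\neq j$, which counts each unordered pair twice. I will assume $a_{i,j},b_{i,j}$ are symmetric and that $A$ has zero diagonal, or equivalently that its diagonal is absorbed; the statement is phrased with $\{a_{i,j},b_{i,j}\}_{1\le i,j\le n}$, so this is the intended convention. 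Each $(L_if)^2$ then appears with total coefficient $\sum_j b_{i,j}$, up to the factor-of-two bookkeeping. The normalization of the $b$-constants should be chosen so that this matches the first term of \eqref{eq:rank-1-perturbed-condition}.

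The perturbation term is handled pointwise in $X$. Define the vector $v(X)\in\R^n$ by $v_i(X)=m_i(X)L_if(X)$. Then
\begin{equation}
\sum_{i\neq j}a_{i,j}L_if\,L_jf\,m_im_j = v(X)^\top A\,v(X)\ \le\ \lambda_{\max}(A)\sum_i m_i(X)^2(L_if(X))^2,
\end{equation}
which is the Rayleigh quotient bound. Taking expectations gives
\begin{equation}
\sum_{i\ne j}\E[L_if\,L_jf]\ \ge\ -\sum_i\Big(\sum_j b_{i,j}\E[(L_if)^2]+\lambda_{\max}(A)\E[m_i^2(L_if)^2]\Big).
\end{equation}
By hypothesis \eqref{eq:rank-1-perturbed-condition}, the right-hand side is at least $-(1-\eps)\sum_i\E[(L_if)^2]$. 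Adding the diagonal contribution yields $\langle\cL f,\cL f\rangle\ge\eps\,\cE(f)$. Combined with Fact~\ref{fact:LLvsL}, this gives the spectral gap bound.

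The main obstacle is purely bookkeeping. The conditional inequality must be applied only where it holds, namely pointwise in $x_{-i-j}$, before averaging. The factor of two from ordered versus unordered pairs must be matched against the normalization in \eqref{eq:rank-1-perturbed-condition}. And the diagonal entries of $A$ must be handled so that the quadratic form $v^\top Av$ involves only the off-diagonal products. If $A$ had a nonzero diagonal, I would replace it by its off-diagonal part. The Rayleigh step uses $v^\top Av\le\lambda_{\max}(A)\|v\|^2$, which holds for symmetric $A$ whatever the sign of $\lambda_{\max}(A)$, so it goes through either way; only the nonnegativity of $\lambda_{\max}(A)$, needed for the hypothesis to be meaningful, would need checking.
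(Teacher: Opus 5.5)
Your argument is the paper's argument: average the conditional inequality over $x_{-i-j}$, sum over pairs, and control the perturbation term by $\lambda_{\max}(A)\sum_i\E[m_i^2(L_if)^2]$. The only difference is cosmetic. You apply $v(X)^\top Av(X)\le\lambda_{\max}(A)\|v(X)\|^2$ pointwise and then average, while the paper averages first and applies Lemma~\ref{lem:amgm} to $C=\E[\bw\bw^\top]\succeq 0$; these are the same inequality. Your worry about the sign of $\lambda_{\max}(A)$ is moot: with zero diagonal, $\tr A=0$ forces $\lambda_{\max}(A)\ge 0$, and the argument never uses the sign anyway.

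However, the factor of two you defer to ``bookkeeping'' is a real discrepancy, and the paper's proof slips at exactly the same spot: it writes $\sum_i(1-\sum_j b_{i,j})\E[(L_if)^2]$. The definition puts the same $b_{i,j}$ in front of $\E[(L_if)^2+(L_jf)^2]$ for each unordered pair. So for symmetric $b$ and zero-diagonal $A$, summing gives
\[
\sum_{i\neq j}\E[L_if\,L_jf]=2\sum_{i<j}\E[L_if\,L_jf]\ \ge\ -A\bullet\E[\bw\bw^\top]-2\sum_i\Big(\sum_{j\neq i}b_{i,j}\Big)\E[(L_if)^2].
\]
The $b$-term carries $2\sum_j b_{i,j}$ while the $a$-term does not, so your last display with $\sum_j b_{i,j}$ does not follow.

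No choice of convention rescues the statement as written. Take $n=2$ and $\mu(x_1,x_2)\propto e^{\theta x_1x_2}$, with $t=\tanh\theta$. Cauchy--Schwarz and AM--GM give the condition with $a_{1,2}=0$ and $b_{1,2}=\tfrac12$ for every $f$, so the hypothesis holds with $\eps=\tfrac12$. Now take $f=x_1+x_2$. Then $L_1f=x_1-tx_2$ and $L_2f=x_2-tx_1$, with $\E[(L_if)^2]=1-t^2$ and $\E[L_1fL_2f]=-t(1-t^2)$. Hence $\langle\cL f,\cL f\rangle=(1-t)\,\cE(f)<\tfrac12\cE(f)$ once $t>\tfrac12$.

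The correct hypothesis has $2\sum_j b_{i,j}$ in place of $\sum_j b_{i,j}$ in \eqref{eq:rank-1-perturbed-condition}. With that change your proof goes through verbatim, and this corrected form is what the paper actually uses later (for example, the overview bounds $2\sum_{j\neq i}b_{ij}$). So instead of saying the normalization ``should be chosen so that this matches,'' state the corrected hypothesis explicitly.
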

\begin{proof}
    Let $w_i(X)=L_if(X) m_i(X)$. Multiplying both sides of \eqref{eq:rank1perturbed} by $\sum_{X\in{\cS}:X_{-i-j}=\sigma_{-i-j}}\mu(X)$ and summing up over all $i,j$ we get
    \begin{align*}
        \langle\cL f,\cL f\rangle&\geq -\E[{\bf w} {\bf w}^\top]\bullet A+ \sum_i (1-\sum_j b_{i,j})\E[\cL_i f(X)^2]
        %n(1-\max_i \sum_j b_{i,j})\langle \cL f,f\rangle
        \\
        &\geq -\lambda_{\max}(A)\tr(\E[{\bf w}{\bf w}^\top])+\sum_i (1-\sum_j b_{i,j})\E[\cL_i f(X)^2]\\
        &= -\lambda_{\max}(A)\cdot \E[\|{\bf w}\|^2]+\sum_i (1-\sum_j b_{i,j})\E[\cL_i f(X)^2]\\
        &=\sum_i \left(\left(1-\sum_j b_{i,j}\right)\E[\cL_i f(X)^2]-\lambda_{\max}(A)\E[m_i(X)^2\cL_if(X)^2]\right)\\
        &\geq \eps\cdot \cE(f),
    \end{align*}
    where the second inequality follows by Lemma~\ref{lem:amgm}. \qedhere
\end{proof}

To put the above statement in context, suppose that we let $a_{i,j}=0$; then, we have to essentially let $b_{i,j}$ be the (spectral) influence of $j$ to $i$, i.e., the worst maximum eigenvalue of the correlation matrix when all other spins are fixed.  Then, the above theorem would imply that if $\sum_j b_{i,j}\leq 1-\eps$, for all $i$, we get a $\eps/n$ spectral gap. The above statement allows to shift part of the ``correlation'' between particles $i,j$ by a common rank-1 matrix $\bm(X)\bm(X)^\top$ as long as  this shift on the average is upper-bounded for every particle $i$. We note that it is also possible to shift the correlation by a sum of several rank-1 matrices as long as we can analogously bound the shifts. We expect to find many future applications of this technique in analyzing multi-state spin systems.

\subsection{Application: The Sherrington-Kirkpatrick (SK) model}
Given a 2-state spin system with $n$ particles, i.e., $S_i=\{-1,+1\}$ for all $i$, we use $J$ to denote the interaction matrix between these particles. 
Let $\mu_J$ be the probability distribution of the corresponding Ising model with this interaction matrix, namely for any vector $\bx\in \{\pm1\}^n$,

\begin{equation}\label{eq:muJ}
     \mu_J(\bx)
  = \frac{1}{Z_n}\exp\bigg(\frac12\bx^\top J \bx\bigg),
\end{equation}
where $Z_n:=\sum_{\bx} \exp(\bx^\top J\bx)$ is the partition function.
Of particular interest to our work is the Sherrington-Kirkpatrick (SK) model in which the interaction matrix $J$ is a random Gaussian ensemble matrix, namely $J$ is a symmetric matrix with zero diagonal and the non-diagonal entries are independent Gaussians with mean zero and variance $\frac{\beta^2}{n}$.
This model is extensively studied recently as a canonical {\em average case} problem in approximate counting and sampling with the hope that the techniques developed here can be extended more generally to sampling problems over random constraint satisfaction problems.

The following is our main theorem.

\begin{theorem}\label{thm:gap-L}
    Let $\cL_J=\sum_i L_i=\sum_i (I-P_i)$. There exists an absolute constant $\eps \geq 5 \cdot 10^{-5}$ with the following property. If $\beta < \tfrac{1}{2} + \eps$, then for sufficiently large $n>n_0$ with high probability over the randomness of the interaction matrix $J$,
    $$\langle \cL_J f,\cL_J f\rangle_\mu\geq C \cdot \langle \cL_J f,f\rangle_\mu$$
    where $C$ is a function of $\eps$.
\end{theorem}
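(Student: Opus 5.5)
We verify the rank-1-perturbed trickledown condition (Definition~\ref{def:rank-1-perturbed}) for $\mu_J$ and then apply Proposition~\ref{prop:rank-1-perturbed}.

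\emph{Local computation on a link.} Fix $i<j$ and $x_{-i-j}$. The conditional law of $(X_i,X_j)$ is an Ising measure on two spins with external fields $h_i=\sum_{k\neq i,j}J_{ik}x_k$, $h_j$ defined analogously, and coupling $J_{ij}$. The conditional local fields $h_i(X)=\sum_{k\ne i}J_{ik}X_k$ are the natural objects to work with. On $\{\pm1\}^2$ we write $L_if$ as a function of $x_i$ alone (times a factor depending on $x_j$), and similarly for $L_jf$. The quantity $\E[L_ifL_jf\mid x_{-i-j}]$ is then a bilinear form in these two one-dimensional objects, with kernel given by the $2\times 2$ correlation of the link. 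Expanding to first order in $J_{ij}=O(\beta/\sqrt n)$ gives
\[
\E[L_ifL_jf\mid x_{-i-j}] \approx -J_{ij}\,\E\bigl[L_if\,L_jf\,(X_i-\tanh h_i)(X_j-\tanh h_j)\bigr] \pm O(J_{ij}^2)\,\E[(L_if)^2+(L_jf)^2].
\]
This suggests choosing $m_i(X)=X_i-\tanh(h_i(X))$ (or the appropriately centered variant), $a_{i,j}=J_{ij}$ (up to $1+o(1)$), and $b_{i,j}=O(J_{ij}^2)$. Since $\sum_j J_{ij}^2\approx\beta^2$, we need the $b$ constants to carry a small prefactor. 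To get that, we would do the two-spin computation exactly, so that the rank-1 term absorbs the full first-order part and $b_{i,j}\le c\,J_{ij}^2$ with explicit $c$.

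\emph{Global condition.} Whp $\lambda_{\max}(J)\le 2\beta+o(1)$ and $\sum_j b_{i,j}\le c\beta^2+o(1)$. Condition~\eqref{eq:rank-1-perturbed-condition} then reduces to
\[
2\beta\,\E[m_i^2(L_if)^2]\le (1-c\beta^2-\eps)\,\E[(L_if)^2].
\]
The crude bound $m_i^2\le (1+|\tanh h_i|)^2$ is too weak. Instead we use that $L_if$ depends only on $x_{-i}$ through $P_i$: $(L_if)^2$ equals $\operatorname{Var}(f\mid x_{-i})$ times $(X_i-\tanh h_i)^2/(1-\tanh^2h_i)$. This gives
\[
\E[m_i^2(L_if)^2\mid x_{-i}]=\E[m_i^4\mid x_{-i}]/\operatorname{sech}^2 h_i\cdot(\ldots),
\]
and $\E[m_i^4]/\E[m_i^2]=1-2\tanh^2h_i\cdot(\ldots)\le 1$. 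So we get $\E[m_i^2(L_if)^2]\le \E[(L_if)^2]$, which yields exactly $\beta<1/2$.

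\emph{Beyond $1/2$ (main obstacle).} The factor $\E[m_i^4]/\E[m_i^2]$ equals $1-\tanh^2h_i$-type quantities, which are strictly less than $1$ once $|h_i|$ is bounded away from $0$. In SK, $h_i$ is approximately Gaussian with variance $\approx\beta^2(1-q)$, which is of constant order. The step I expect to be hardest is making this averaging rigorous, since the worst $f$ could concentrate $(L_if)^2$ on configurations where $h_i\approx 0$. I would try to handle this by averaging over links: use the spectral slack of $J$ in directions where $\bw=(L_if\,m_i)_i$ is spread out, and bound the mass of $(L_if)^2$ on $\{|h_i|<\delta\}$ via concentration of $h_i$ and a whp bound on how many $i$ can simultaneously have small fields. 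This gains a small absolute $\eps$, such as $5\cdot10^{-5}$. Fact~\ref{fact:LLvsL} then finishes the argument.
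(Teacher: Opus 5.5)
There is a genuine gap, and it starts with the sign of your rank-1 shift. Your $m_i=X_i-\tanh h_i=X_i-\sol{\bX}_i$ is the case $s=-1$ of the paper's family $m_i=x_i+s\sol{\bX}_i$. Using $L_if=\kappa(\bx_{-i})(X_i-\sol{\bX}_i)$ (Fact~\ref{fact:L-one-dim}), one gets exactly $\E[m_i^2(L_if)^2\mid\bx_{-i}]=\kappa^2\E[(X_i-\sol{\bX}_i)^4\mid\bx_{-i}]=(1+3\sol{\bX}_i^2)\,\E[(L_if)^2\mid\bx_{-i}]$. So $\E[m_i^4]/\E[m_i^2]=1+3\tanh^2h_i\ge 1$, not $\le 1$. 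The inequality $\E[m_i^2(L_if)^2]\le\E[(L_if)^2]$ on which your $\beta<1/2$ step rests is therefore false. The loss is also largest exactly where $|h_i|$ is large, which is where you wanted to gain for $\beta>1/2$.

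Your first-order matching cannot detect the sign. The leading $\phi(x)$--$\phi(y)$ coefficient $-J\sigma_{X_0}\sigma_{Y_0}$ of $\E[L_XfL_Yf]$ is cancelled by $J\E[(X+s\sol{\bX})(Y+s\sol{\bY})L_XfL_Yf]$ for every $s$, so the shift has to be chosen for the global bookkeeping. The paper takes $0<s<1$. Then Claim~\ref{claim:w-computation} gives $\E[w_i^2]=\E[(1-s(2-s)\sol{\bX}_i^2)(L_if)^2]$, a genuine gain of $s(2-s)\|J\|_{\op}\cD(f)$.

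Second, even with the right sign, Proposition~\ref{prop:rank-1-perturbed} with a uniform $b_{ij}=cJ_{ij}^2$ cannot reach $\beta=1/2$, because $c$ cannot be made small. In the rescaled coordinates of Lemma~\ref{lem:explicit-A}, the exact two-spin form has off-diagonal entries $-\tfrac{1-s}{2}\sol{X}_0\sigma_{Y_0}$ and $-\tfrac{s^2+2s-1}{2}\sol{X}_0\sol{Y}_0\sigma_{X_0}\sigma_{Y_0}$. Positive semidefiniteness then forces $c$ above an absolute constant for every $s$; for instance $c\ge 1/4$ at $s=1$, taking $\sol{X}_0^2=\sol{Y}_0^2=1/2$. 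The gain $s(2-s)\sol{\bX}_i^2$ vanishes at $\sol{\bX}_i=0$, so you are left with the condition $2\beta+c\beta^2<1$. Your own reduction already contains this $c\beta^2$ term, so the claimed threshold $\beta<1/2$ does not follow.

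The missing idea is that all those off-diagonal entries vanish when $\sol{X}_0=\sol{Y}_0=0$, so the error can be \emph{weighted by the fields}. Theorem~\ref{lem:main} proves the local inequality with error $\tfrac{(\lambda J)^2}{2}\E[(L_Xf)^2+(L_Yf)^2]+\tfrac{dJ^2}{2}\E[\sol{\bX}(y)^2(L_Xf)^2+\sol{\bY}(x)^2(L_Yf)^2]$. Here $\lambda$ is arbitrarily small and $d\ge\max\{2s^2,1-2s-s^2\}$; note this condition lies outside Definition~\ref{def:rank-1-perturbed}. The weighted loss $d\beta^2\cD(f)$ is then beaten by the gain, because $2s(2-s)\beta>d\beta^2$ (e.g.\ $s=1/3$, $d=2/9$).

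Only with the leftover $(2s(2-s)\beta-d\beta^2)\cD(f)+\E[\bw^\top(\|J\|_{\op}I-J)\bw]$ does your beyond-$1/2$ sketch become the paper's Case~2. That case further needs the quantitative delocalization bound of Lemma~\ref{lem:sparse-delocalization} and the uniform-in-$\bx$ count of small fields from Lemma~\ref{lem:typ-large}. Your outline names these ingredients but does not supply them.
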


Now, using Fact~\ref{fact:LLvsL} we have the following immediate consequence.

\begin{corollary}\label{cor:gap-L}
     There exists an absolute constant $\eps >0$ with the following property. When $\beta<1/2 + \eps$, the spectral gap of $\frac{1}{n}\cP_J$ is $\Omega(1/n)$, and therefore Glauber dynamics for the SK model mixes in time $O(n^2)$.
\end{corollary}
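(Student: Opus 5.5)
The statement to prove is Corollary~\ref{cor:gap-L}. I would derive it directly from Theorem~\ref{thm:gap-L}, Fact~\ref{fact:LLvsL} and Theorem~\ref{thm:spec-gap}, with no new estimates. The plan has three steps.

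\textbf{Step 1 (apply the main theorem).} Fix the absolute constant $\eps$ given by Theorem~\ref{thm:gap-L}. For $\beta<\tfrac12+\eps$, $n>n_0$, and with high probability over $J$, every $f:\{\pm1\}^n\to\R$ satisfies
\[
\langle \cL_J f,\cL_J f\rangle_\mu\ge C\,\langle \cL_J f,f\rangle_\mu = C\,\cE(f),
\]
where $C=C(\eps)>0$ does not depend on $n$.

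\textbf{Step 2 (spectral gap).} This is exactly the hypothesis of Fact~\ref{fact:LLvsL} with $\eps$ replaced by $C$. Hence $\langle \cL_J g,g\rangle_\mu\ge C\langle g,g\rangle_\mu$ for every $g$ orthogonal to the constants, i.e.\ for every $g$ with $\E_\mu g=0$. The only subtlety in the proof of that fact is the pseudo-inverse. It is harmless here because the kernel of $\cL_J$ is exactly the constants: $\cE(f)=\sum_i\E[(L_if)^2]=0$ forces $f$ to be invariant under resampling each coordinate, and $\mu_J$ has full support on $\{\pm1\}^n$. Dividing by $n$, the spectral gap of $\frac1n\cP_J$ is at least $C/n=\Omega(1/n)$.

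\textbf{Step 3 (mixing time).} Applying Theorem~\ref{thm:spec-gap} with gap parameter $C$ bounds the mixing time by $O\big(n\log(1/\mu_{\min})/C\big)$. It remains to bound $\log(1/\mu_{\min})=O(n)$ with high probability. Write $\mu_{\min}\ge e^{-\max_\bx \frac12\bx^\top J\bx}\big/\big(2^n e^{\max_\bx\frac12\bx^\top J\bx}\big)$. For every $\bx\in\{\pm1\}^n$ we have $|\bx^\top J\bx|\le n\|J\|_{\op}$, and for a GOE-type matrix with entry variance $\beta^2/n$ one has $\|J\|_{\op}\le 2\beta+o(1)$ with high probability. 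Therefore
\[
\log(1/\mu_{\min})\le n\log 2+n\|J\|_{\op}=O(n).
\]
Intersecting this event with the event of Step 1 still has high probability, and the mixing time is $O(n\cdot n/C)=O(n^2)$, since $C$ is a constant depending only on $\eps$.

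No step here is a real obstacle; all the difficulty lives in Theorem~\ref{thm:gap-L}. The only points needing care are the standard operator-norm bound for $J$ used in Step 3 and the fact that the constant $C$ is independent of $n$.
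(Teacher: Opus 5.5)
Your argument is correct and has the same skeleton as the paper's: Theorem~\ref{thm:gap-L} with Fact~\ref{fact:LLvsL} gives the $\Omega(1/n)$ gap, and Theorem~\ref{thm:spec-gap} reduces the mixing bound to showing $\log(1/\mu_{\min})=O(n)$ with high probability.

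You differ only in that last estimate. The paper proves $\max_{\bx}|\bx^\top J\bx|\le 2\beta n$ directly. For fixed $\bx$, $\bx^\top J\bx$ is a centered Gaussian of variance $2\beta^2(n-1)$, so it exceeds $2\beta n$ in absolute value with probability at most $2e^{-n}$. A union bound over the $2^n$ configurations then costs only $2(2/e)^n=o(1)$.

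You instead use $|\bx^\top J\bx|\le n\|J\|_{\op}$ together with Proposition~\ref{prop:Bai-Yin}. Your route needs no new probabilistic input, since the Bai--Yin event is already one of the events on which Theorem~\ref{thm:gap-L} is proved. It also yields the same $(2\beta+o(1))n$ bound. The paper's route is elementary and self-contained, with an explicit exponentially small failure probability. Proposition~\ref{prop:Bai-Yin}, by contrast, only gives convergence in probability. That suffices here, as would any crude bound $\|J\|_{\op}=O(\beta)$.

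Your observation that $\ker\cL_J$ is exactly the constants (since $\mu_J$ has full support) usefully makes explicit what the pseudo-inverse step of Fact~\ref{fact:LLvsL} leaves implicit.

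One cosmetic slip: in your displayed lower bound on $\mu_{\min}$, the numerator should be $e^{-\max_{\bx}\frac12|\bx^\top J\bx|}$, because $\min_{\bx}\bx^\top J\bx$ can be smaller than $-\max_{\bx}\bx^\top J\bx$. Since you then bound $|\bx^\top J\bx|$ for every $\bx$, your conclusion $\log(1/\mu_{\min})\le n\log 2+n\|J\|_{\op}$ is unaffected.
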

\begin{proof}
   We will show that with probability $1 - o_n(1)$, we have
   \begin{equation}\label{eq:pi-min}
       \mu_{\min} := \min_{\bx \in \{\pm 1 \}^n}\mu_{J}(\bx)  \geq \exp(-(\ln 2 + 4\beta)n).
   \end{equation}

   With \eqref{eq:pi-min}, it follows from Theorem~\ref{thm:spec-gap} that the mixing time is $O(\log(1/\mu_{\min})/\Omega(1/n)) = O(n^2)$, as desired.
   
  It remains to prove $\eqref{eq:pi-min}$. By \eqref{eq:muJ}, a step towards this is to show that with probability $1 - o_n(1)$, we have for all $\bx \in \{ \pm 1 \}^n$,
  \begin{equation}\label{eq:bounded-energy}
      |\bx^\top J \bx| \leq 2\beta n.
  \end{equation}
 Assuming that we are on the high probability event for which \eqref{eq:bounded-energy} is true, we can bound $Z_n = \sum_{\bx \in \{ \pm 1 \}^n} \exp\Big(\frac12\bx^\top J \bx\Big) \leq 2^n \exp(\beta n) $. Consequently, this gives us \eqref{eq:pi-min}: \[\min_{\bx \in \{\pm 1 \}^n}\mu_{J}(\bx) \geq \frac{\exp(-\beta n)}{2^n \exp(\beta n)}= \exp(-(\ln 2 + 2 \beta)n).\] 
  
  Finally, it remains to prove \eqref{eq:bounded-energy}. We fix $\bx \in \{ \pm 1 \}^n$.  Since $\Var(\bx^\top J \bx)  = \frac{4\beta^2}{n} \sum_{i<j}\Var(g_{ij}x_ix_j) = \frac{4\beta^2}{n}\binom n2 = 2\beta^2(n-1)$ where $g_{ij} \sim \mathcal{N}(0,1)$ for $i,j \in [n]$, we have $\bx^\top J \bx \sim \mathcal{N}(0, 2\beta^2(n-1))$. Consequently, we may apply the Gaussian tail bound to the standard normal random variable $\frac{\bx^\top J x}{\beta\sqrt{2(n-1)}}$ to obtain, %for each $\bx \in \{ \pm 1 \}^n$, that 
  \[\P\bigl[|\bx^\top J\bx|>2\beta n\bigr] \leq \P\Bigg[\frac{|\bx^\top J\bx|}{\beta\sqrt{2(n-1)}}>\sqrt{2n}\Bigg]
\leq 2\exp\left(
        \frac{-(\sqrt{2n})^2}{2}
    \right)  \leq 2e^{-n}.\] 
    Taking a union bound over all $\bx \in \{ \pm 1 \}^n$, 
   \[ \P\big(\max_{\bx \in \{ \pm 1 \}^n}|\bx^\top J \bx| > 2n \big) \leq 2^{n+1} \exp(-n) = o_n(1). \]
   This shows \eqref{eq:bounded-energy}. 
\end{proof}

\paragraph{Concurrent work.} 
This work is done concurrently and independently of the work of Wang \cite{W26} who obtains related results based on the Bochner framework, although our detailed proof techniques are different.

\paragraph{Acknowledgements}
The authors have used AI systems for many heavy computations involved in this project especially in proving Theorem \ref{lem:main} and also for vibe-coding Mathematica to solve for the parameters in the proof of Theorem \ref{thm:gap-L}.
The first author would like to thank Leon Schiller for discussing with him their paper~\cite{GJMPPS26}. This paper is the main inspiration for our work. The first author would also like to thank Thomas Rothvoss, Victor Reis, Arvin Sahami, Dan Mikulincer, Ramon van Handel and Santosh Vempala for discussions on stochastic localization. The second author would like to thank Tselil Schramm and her lab for fruitful discussions on spin systems. The last author's research is supported by an NSF
grant CCF-2203541, a Simons Investigator Award 928589, and a Lazowska Endowed Professorship in Computer Science \& Engineering.

\subsection{Previous Work}

Physicists \cite{SZ81}, \cite{MPV86} predicted that Glauber dynamics for the SK model would converge fast for all $\beta<1$ four decades ago.
For quite some time there were almost no significant computational results in the community. In 2017, Bauerschmidt and Bodineau~\cite{BB19} proved a log-Sobolev inequality for $\mu_{J,h}$ when $J$ is a positive semi-definite matrix with norm $\|J\|_{\op} < 1$ by decomposing the Ising measure as a strongly log-concave mixture of product measures.

A few years later Eldan, Koehler and Zeitouni~\cite{EKZ22} proved that Glauber dynamics for $\mu_{J,h}$ for PSD $J$ with norm at most 1  has spectral gap at least $\frac{C}{n}$, for a universal constant $C>0$, by decomposing the Ising measure into a mixture of Ising measures with rank-one interaction matrices. Consequently, Glauber dynamics mixes in  $O(n^2)$-steps for  SK model when $\beta < \frac14$. 
It is worth noting that the condition $\|J\|_{\op} < 1$ is tight in the worst case for PSD matrices, as the matrix $\beta J$ with all entries in $J$ being $\frac1n$, a.k.a. the Curie--Weiss model, does not mix in polynomial time for $\beta > 1$.

Anari, Jain, Koehler, Pham and Vuong~\cite{AJKPV22} proved a modified log-Sobolev inequality for $\mu_{J,h}$ under the same assumptions using the entropic independence machinery and the needle decomposition result by Eldan, Koehler and Zeitouni \cite{EKZ22}. Consequently they obtained an optimal $O(n \log n)$-mixing time for $\beta < \frac14$ for the SK model. Later, Chen and Eldan~\cite{CE25} reproved this result  using an entropy-annealing theorem and the log-Sobolev inequality result by Bauerschmidt and Bodineau~\cite{BB19}. 

Later Anari, Koehler and Vuong~\cite{AKV24} managed to extend trickledown theorems \cite{Opp18,ALO24} to the stochastic localization machinery, and used it to show that if $J$ is PSD, $\|J\|_{\op} < 4\beta$ and diagonal entries of $J$ are $2\beta$, then Glauber dynamics for $\mu_{J,h}$ satisfies a modified-log Sobolev inequality, for $\beta$ up to approximately $0.295$. 

On a separate line of work,
El Alaoui, Montanari and Sellke~\cite{AMS22} gave an efficient statistical algorithm to sample from the SK model for $\beta$ up to $\frac12$  in Wasserstein distance (as opposed to the  total-variation distance). This was later improved to the full regime of $\beta<1$ by Celentano~\cite{Celetano24}. So, in a sense these results provide a substantial indication that polynomial mixing bound for Glauber dynamics should be reachable.

Very recently, Davies, Lee, Sandhu and Shi~\cite{DLSS26} proved a weak-Poincare inequality for Glauber dynamics for SK model for $\beta < \frac12$. That is, they prove a lower-bound on the spectral gap bound only for ``smooth functions". This implies fast-mixing in total-variation distance from a ``warm start".

\subsection{Proof overview}
By Fact~\ref{fact:LLvsL}, to bound the spectral gap of $\cL$ it suffices to give a lower bound for 
\begin{equation}\label{eq:expandL-overview}
    \frac{\langle \cL f, \cL f \rangle}{\cE(f)} =\frac{\sum_i \E[(L_if)^2] + \sum_{i \neq j} \E[(L_if)(L_jf)]}{\cE(f)} \underset{\eqref{eq:dirichlet-form}}{=} 1 + \frac{\sum_{i \neq j} \E[(L_if)(L_jf)]}{\cE(f)}.
\end{equation}

The term $\E[(L_if)(L_jf)]$ corresponds to the 2-link given by a two-state system corresponding the the spins $x_i$, $x_j$. We further "pin" $x_{-i-j}$, and assume that $\mu_{J_{ij},a,b}(x_i,x_j)=\exp(J_{ij}x_ix_j+ax_i+bx_j)$ is the conditional measure.

\paragraph{$a=b=0$ Case.} First, as a warm-up, we consider the two-state measure on sites $i,j$ as a two state system with interaction $J_{ij}$ on particles $x_i,x_j$: $\mu_{J_{ij}}(x_i,x_j)=\exp(J_{ij}x_ix_j)$.

For any function $f:\{\pm1\}^2$ we prove the following inequality which implies that $\mu$ exhibits a rank-1-perturbed trickledown condition with vectors $m_i(\bx)=x_i$, $a_{i,j}=J_{i,j}$, and $b_{i,j}=|J_{ij}|^3/6$,

\begin{align}\label{eq:rank-1-perturbed}
    \E_{\mu_{J_{ij}}}[(L_if)(L_jf) \, | \, X_{-i-j}] \geq - J_{ij} \E_{\mu_{J_{ij}}}[X_iX_j (L_if)(L_jf) \, | \, X_{-i-j}] - \frac{|J_{ij}|^3}{6}\E_{\mu_{J_{ij}}}\big[(L_if)^2 + (L_jf)^2\,| \, X_{-i-j}\big].
\end{align}

Assuming \eqref{eq:rank-1-perturbed}, we are in a position to apply Proposition~\ref{prop:rank-1-perturbed}. 
 
Using standard facts about GOE random matrices, we have $2 \sum_{j \neq i} b_{ij}  \leq o(1)$ for all $i \in [n]$ (see Lemmas~\ref{lem:row-sum} and ~\ref{lem:max-entry}). This shows that \eqref{eq:rank-1-perturbed-condition} holds for $\eps = 1 - \| J \|_{\op}-o(1)$:
\begin{equation}\label{eq:rank-1-naive-sum}
     \bigg(\frac{1}{6} \sum_j  |J_{ij}|^3 \bigg) \cdot \E[L_if(X)^2]+ \lambda_{\max}(J)\E[X_i^2 L_if(X)^2] \underset{X_i^2 = 1 }{\leq} (1-\lambda_{\max}(J) - o(1))\E[L_if(X)^2].
\end{equation}
Therefore, since $\lambda_{\max}(J) \to  2 \beta + o(1)$ (see Proposition~\ref{prop:Bai-Yin}), it follows that for $\beta < 1/2 - \eps$, $\cL$ has a spectral gap of $\geq \eps$. 

\paragraph{Proof of \eqref{eq:rank-1-perturbed}.}  
For simplicity, we drop the pinned variables $x_{-i-j}$. 

We observe that $\exp(J_{ij}x_ix_j) = \cosh J_{ij}(1 + x_ix_j\sol{J}_{ij})$ where $\sol{J}_{ij} = \tanh J_{ij}$ (see the proof of Lemma~\ref{lem:mu}) which implies $\mu_{J}(x_i,x_j) = \big(1+x_ix_j\sol{J}_{ij} \big)\mu_0(x_i,x_j)$. It follows that
\begin{equation}\label{eq:LiLjfsimplecase} \E_{\mu_{J_{ij}}}[(L_if)(L_jf)]=\E_{\mu_0}[(1+X_iX_j\sol{J})(L_if)(L_jf)].\end{equation}
We don't know a direct lower-bound on this quantity.

Instead, first we use $\mu_{J_{ij}}(X_i = x_i\, | \, X_j = x_j) =  
\frac{1 + \sol{J}_{ij}x_ix_j}{2}$ to write
\begin{align}\label{eq:Li-L0}
    L_if(x_i,x_j) &= f(x_i,x_j) - \E_{X_i}[f(X_i, x_j) \, | \, x_j] \notag\\ &= f(x_i,x_j) -\left( \frac{1 + \sol{J}_{ij}x_ix_j}{2}\cdot f(x_i,x_j) +  \frac{1 - \sol{J}_{ij}x_ix_j}{2}  \cdot f(-x_i, x_j) \right) \notag \\ &=\frac{1 - \sol{J}_{ij}x_ix_j}{2}(f(x_i,x_j) - f(-x_i,x_j)) =  \big(1- \sol{J}_{ij} x_ix_j \big)L^0_{i}f(x_i, x_j)
\end{align}
where $L^0_{i} f(x_i,x_j) = \frac{f(x_i,x_j, x_{-i-j}) - f(-x_i, x_j, x_{-i-j})}{2}$ is the operator corresponding to $\mu_0$ (i.e. the product measure when $J_{ij}=0$ and there is no interaction between the spins at sites $i$ and $j$). Similarly, 
$L_jf(x_i,x_j)= \big(1- \sol{J}_{ij} x_ix_j \big)L^0_{j}f(x_i,x_j)$. 
Having this we can continue from \eqref{eq:LiLjfsimplecase} 
\begin{align*}
    \E_{\mu_{J_{ij}}}[(L_if)(L_jf)]=\E_{\mu_0}[(1+X_i X_j\sol{J}_{ij})(1-X_i X_j\sol{J}_{ij})^2 (L^0_i f)(L^0_j f) ].
\end{align*}
Now, notice if we introduce a $1+X_iX_j\sol{J}$ term in the LHS, the RHS simplifies to $(1-\sol{J}_{ij}^2)^2\E[(L_i^0f)(L_j^0 f)]$. This is the main motivation for the rank-1-perturbation. In particular, we can write
\begin{align}\label{eq:box}
    \E_{\mu_{J_{ij}}}\big[ \big(1+ \sol{J}_{ij}X_iX_j \big) (L_if)(L_jf)\big] 
    \underset{\substack{x_i^2 = x_j^2 = 1 \\ \text{independence}}}{=}\E_{\mu_0}\big[ \big(1 - \sol{J}_{ij}^2\big) (L_{i,0}f)\big]  \E_{\mu_0}\big[ \big(1 - \sol{J}_{ij}^2\big) (L_{j,0}f)\big] \underset{\text{independence}}{\geq} 0.
\end{align}

Having figured out the right rank-1 shift, we have 
\begin{align*}
    &\E_{\mu_{J_{ij}}}[(L_if)(L_jf) \, | \, \bX_{-i-j}]  \\&\underset{\text{AM-GM}}{\geq}  -J_{ij} \E_{\mu_{J_{ij}}}[X_iX_j (L_if)(L_jf) \, | \, \bX_{-i-j}] -\frac{|J_{ij} - \sol{J}_{ij}|}{2} \E_{\mu_{J_{ij}}}\big[(L_if)^2 + (L_jf)^2 \, | \, \bX_{-i-j}\big]
    \\
    &\geq - J_{ij} \E_{\mu_{J_{ij}}}[X_iX_j (L_if)(L_jf) \, | \, \bX_{-i-j}] - \frac{|J_{ij}|^3}{6}\E_{\mu_{J_{ij}}}\big[(L_if)^2 + (L_jf)^2\,| \, \bX_{-i-j}\big],
\end{align*}
showing \eqref{eq:rank-1-perturbed}. 

\paragraph{General Case: $a,b$ are not necessarily zero.}
Now, we consider the more general case for 
$\mu_{J_{ij},a,b}$ with local fields $a= \sum_{k \neq i,j}J_{ik}x_k$ (respective $b= \sum_{k\neq i,j}J_{jk} x_k$) experienced by site $i$ (respective site $j$) with respect to the pinned states $\bx_{-i-j}$. In this case we need to lower bound
$\E_{\mu_{J_{ij},a,b}}[L_if \cdot L_jf]$. We prove the following inequality in Lemma~\ref{lem:main}.
   \begin{align}\label{eq:local-ineqb} &\E\big[(L_i f)(L_jf)\big] + J_{ij}\cdot \E\big[m_i(x_i,x_j)m_j(x_i,x_j)(L_if)(L_jf)\big] + \frac{(\lambda J_{ij})^2}{2}\cdot \E\big[(L_if)^2
    + (L_jf)^2\big]\notag\\&+ \frac{dJ_{ij}^2}{2}\E\big[\sol{\bX}_i(x_j)^2\cdot L_if(x_i,x_j)^2 + \sol{\bX}_j(x_i)^2\cdot  L_jf(x_i,x_j)^2\big] \geq 0.
    \end{align}
    where $\sol{{\bf X}}_i$ is the conditional expectation of $X_i$ given $\bx_{-i}$, $\lambda=o(1)$ is a small constant (that upper-bounds $J_{ij}$ entry) and $d,s$ are universal constants that one can optimize. 
    Note that the corresponding ${\bf m}(\bx)$ vectors are defined by $m_i(\bx)=x_i+s\sol{\bx}_i.$ 
    
    In this case we write
    $$\mu_{J_{i,j},a,b}(x_i,x_j) (1+ \sol{J}_{ij} \sol{X}_{i,0} \sol{X}_{j,0}) \underset{\text{Lemma~\ref{lem:Radon-Nikodym}}}{=} \mu_{0,a,b}(x_i,x_j) \big(1 + \sol{J}_{ij} x_ix_j \big),$$ where $\sol{X}_{i,0} = \E_{\mu_{0,a,b}}[X_i] = \tanh a$ (see \eqref{eq:tanh}; particularly, $\sol{X}_{i,0} = 0$ when $a=0$). Using this, in analogy with \eqref{eq:Li-L0} we have \[\big(1+ \sol{J}_{ij}\sol{X}_{i,0} x_j \big) L_i f(x_i, x_j) = \big(1 - \sol{J}_{ij} x_i x_j \big) L^0_{i}f(x_i,x_j).\] 
    
    Now, we run a calculation similar to \eqref{eq:LiLjfsimplecase}, which gives
    \begin{align*}
        \E_{\mu_{J_{ij},a,b}}[(L_if)(L_jf)] = \E_{\mu_{0,a,b}}\Bigg[(1-\sol{J}_{ij}^2)\frac{1-\sol{J}_{ij}X_iX_j}{(1+\sol{J}_{ij}\sol{X}_{i,0}\sol{X}_{j,0})(1+\sol{J}_{ij}\sol{X}_{i,0}X_j)(1+\sol{J}_{ij}\sol{X}_{j,0}X_i)}(L^0_if)(L^0_jf)\Bigg].
    \end{align*}
    This suggests that we should multiply the LHS with $(1+\sol{J}_{i,j}X_iX_j)(1+\sol{J}_{ij}\sol{X}_{i,0}\sol{X}_{j,0})(1+\sol{J}_{ij}\sol{X}_{i,0}X_j)(1+\sol{J}_{ij}\sol{X}_{j,0}X_i)$. We write
    \begin{align*}
     &\E_{\mu_{J_{ij},a,b}}[(1+(X_i+\sol{X}_i)(X_j+\sol{X}_j)+O(\sol{J}_{ij}^2))(L_if)(L_jf)] =\\
     &\underset{|\sol{J}_{ij}|\leq1/2}{\geq}\E_{\mu_{J_{ij},a,b}}[(1+\sol{J}_{i,j}X_iX_j)(1+\sol{J}_{ij}\sol{X}_{i,0}\sol{X}_{j,0})(1+\sol{J}_{ij}\sol{X}_{i,0}X_j)(1+\sol{J}_{ij}\sol{X}_{j,0}X_i)(L_if)(L_jf)]\\
     &=(1-\sol{J}_{ij}^2)^2\E_{\mu_0}[L^0_if]\E_{\mu_0}[L^0_jf]\geq 0,
    \end{align*}
    where the last inequality similarly follows from $\mu_0$ being an independent distribution.

    Using AM-GM the above rearranges to the following analogous inequality as \eqref{eq:rank-1-perturbed}
    \begin{align}\label{eq:rank-1-perturbeda}
    &\E_{\mu_{{J_{ij}},a,b}}[(L_if)(L_jf) \, | \, \bX_{-i-j}] \\
    &\geq - J_{ij} \E_{\mu_{J_{ij}}}[(X_i+\sol{\bX}_i(X_j)) L_if\cdot  (X_j+\sol{\bX}_j(X_i)) L_jf \, | \, \bX_{-i-j}] - O\big(\sol{J}_{ij}^2 \big)\E_{\mu_{J_{ij}}}\big[(L_if)^2 + (L_jf)^2\,| \, \bX_{-i-j}\big].
\end{align}

In comparison to the earlier computations in \eqref{eq:rank-1-perturbed}, the error terms here are of order $O\big( \sol{J}_{ij}^2 \big)$ rather than $O\big( \sol{J}_{ij}^3 \big)$. When we take a sum over all the links, by Lemma~\ref{lem:row-sum}, we have $\sum_{j\ne i}J_{ij}^2=\beta^2+o(1)$ for all $i \in [n]$, while as we saw earlier Lemma~\ref{lem:max-entry} implies that $\sum_{j\ne i}|J_{ij}|^3=o(1)$ for all $i \in [n]$. That is, while we could dispense with the $O\big(\big|\sol{J}_{ij}\big|^3\big)$ error terms in \eqref{eq:rank-1-naive-sum}, the quadratic error contributes a non-negligible constant order term. 

For this reason, we work with a slightly different rank one term in Lemma~\ref{lem:main}, where we take $m_i(\bx)=x_i+s\sol{\bx}_i$ for some $s <1$ instead of $x_i + \sol{\bx}_i$ as suggested by our computations above; the $(1-s)\sol{\bx}_i$ difference here will be used to compensate for the aforementioned $O\big( \sol{J}_{ij}^2 \big)$ error term. As a result, the proof of Lemma~\ref{lem:main} (see Section~\ref{sec:local}) is a fair bit more involved than the preceding sketch. 

\paragraph{$\beta <1/2$ Case.}  Having \eqref{eq:local-ineqb}, and averaging over all pairs $i,j$ in a way similar to the proof of Proposition~\ref{prop:rank-1-perturbed}, we essentially obtain (again omitting some error terms)
\[ \langle \cL f, \cL f \rangle \geq (1 - \|J\|_{\op} -o_n(1)) \cE(f) + \underbrace{s(2-s) \|J \|_{\op} \sum_i |\sol{\bX}_i|^2(L_if)^2}_{(A)} + \underbrace{\E[\bw^\top (\|J \|_{\op} I - J)\bw]}_{(B)}. \]

At this point, we note that $A, B \geq 0$. If we were to just drop these terms, then because $\|J \|_{\sf op} \to 2 \beta + o_n(1)$, we would obtain a spectral gap $\eps/n$ for Glauber dynamics in the range $\beta < \tfrac{1}{2}-\eps$. The range $\beta <\tfrac{1}{2}-\eps$ is in fact tight in the context of proving a spectral gap for Glauber dynamics on Ising models for complete graphs with interaction matrix $J$ such that $\|J\|_{\op} = 2 \beta$. This can be seen from the Curie-Weiss model with the convention $J_{ij} = \beta/n$ for $i \neq j$ and $J_{ii} = 0$ for all $i \in [n]$. It is known that the Curie-Weiss model undergoes a phase transition at $\beta = 1/2$, after which Glauber dynamics no longer has a spectral gap of order $\Omega(1/n)$; more precisely, \cite{DLP09,LLP10}\footnote{We note that these works have a different normalization for the inverse temperature parameter $\beta$. These works also establish a stronger notion of ``cut-off phenomenon'' for the Curie-Weiss mdoel.} show that when $\beta > 1/2$ the spectral gap is $\exp(-\Omega(n))$.

\paragraph{$\mathbf{\beta<1/2+\eps}$ Case. }
To prove a spectral gap for Glauber dynamics for the SK model in the range of $\beta \in [1/2-\eps, 1/2 + \eps)$, we need to use additional properties of the GOE interaction matrix (beyond simply the spectral radius fact $\|J\|_{\op} \to 2\beta +o_n(1)$) to show $A \gtrsim \cE(f)$ or $B \gtrsim  \cE(f)$.

To that end, we use two properties of GOE random matrices. First, that their eigenvectors are delocalized, namely that the mass of the eigenvector is spread across all coordinates $i\in [n]$ rather than being concentrated on few (see Lemma~\ref{lem:sparse-delocalization}). Second, since the entries of GOE random matrices are i.i.d. Gaussians, we can also show that w.h.p. for all $\bx \in \{ \pm 1 \}^n$, $(1- \delta)$-fraction (think of $\delta$ small) of sites $i \in[n]$ are such that $\sol{\bX}_i = \tanh\left(\sum_{k \neq i} J_{ik}x_k\right)$ is large in magnitude (see Lemma~\ref{lem:typ-large}); we denote these sites $H_{>}$. Notice that if we could show that $\sol{\bX}_i$ is large for all $i$, then we would be immediately done. 

As such, we consider the following two cases:
\begin{enumerate}
\item [(ii)] If $f$ is chosen such that $L_if$ is small for $i \in H_{>}$, then since $w_i \approx L_if$, this would imply that most of the mass of $\bw$ is concentrated on the $\delta n$ coordinates $i \not \in H_{>}$. By the contrapositive of Lemma~\ref{lem:sparse-delocalization}, this would imply that $\bw$ does not correlate well with any eigenvector of $J$, and in particular the one corresponding to the largest eigenvalue. Therefore, in this case we get \[B \gtrsim \|\bw\|_2^2  \gtrsim \sum_{i=1}^n(L_if)^2 \underset{\eqref{eq:dirichlet-form}}{\geq}2\eps\cdot  \cE(f).\]
    \item [(i)] Suppose $B \approx 0$; that is, $\bw$ is close to the direction of the largest eigenvalues of $J$. By Lemma~\ref{lem:sparse-delocalization}, $\bw$ is delocalized. Since $w_i = (X_i + \sol{\bX}_i)\cdot  L_i f$, this would also imply that $L_if$ are all roughly equal in magnitude. Consequently, \[ A = s(2-s)\|J \|_{\op}\sum_i |\sol{\bX}_i|^2(L_if)^2 \gtrsim \left( \frac{1}{n}\sum_{i \in H_{>}}|\sol{\bX}_i|^2 \right) \sum_i(L_if)^2 \underset{\eqref{eq:dirichlet-form}}{\geq} 2\eps \cdot \cE(f). \]
    
\end{enumerate}
We refer the reader to Section~\ref{sec:proof-main} where we carry out this dichotomy to prove a spectral gap for $\cL$ up to $\beta \leq \tfrac{1}{2} + 5 \cdot 10^{-5}$. We did not optimize this part of the proof, and expect that it is possible to extend our techniques to obtain a spectral gap for $\cL$ in an even larger range of $\beta$.

\section{Preliminaries}\label{sec:prelim}
\subsection{Notation and basic facts}

We will use capital letters to denote random variables and bold letters to denote vectors.
For a vector $\bx\in \R^n$ we write
$$ \norm{\bx}^2_{\leq m}:=\min_{S\subseteq [n], |S|=m} \sum_{i\in S} x_i^2$$
 We say that the matrix $J \in \R^{n \times n}$ is \emph{$\beta$-GOE} if $J$ is symmetric, and $J_{ij} \stackrel{\sf i.i.d.}{\sim} \mathcal{N}(0, \beta^2/n)$ for $i \neq j$ and $J_{ii} = 0$ for $i \in [n]$. We use $\Phi(x)$ to denote the CDF of the standard normal random variable. 

\begin{lemma}\label{lem:amgm}
Let $C \succeq 0$ and $A \in \R^{n \times n}$ be a symmetric matrix. Then 
\[ A \bullet C \leq  \lambda_{\max}(A) \Tr(C).\]
\end{lemma}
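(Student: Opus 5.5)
The plan is to diagonalize $C$ and use that $A \bullet C = \Tr(AC)$. Since $C \succeq 0$ is symmetric, write its spectral decomposition
\[ C = \sum_{k=1}^n \sigma_k \, \mathbf{u}_k \mathbf{u}_k^\top, \qquad \sigma_k \geq 0, \]
where $\{\mathbf{u}_k\}$ is an orthonormal basis. Linearity of the Frobenius inner product then gives
\[ A \bullet C = \sum_{k} \sigma_k \, (A \bullet \mathbf{u}_k\mathbf{u}_k^\top) = \sum_k \sigma_k \, \mathbf{u}_k^\top A \mathbf{u}_k . \]

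Next, bound each quadratic form by the Rayleigh quotient characterization of the top eigenvalue of the symmetric matrix $A$: $\mathbf{u}_k^\top A \mathbf{u}_k \leq \lambda_{\max}(A)\|\mathbf{u}_k\|^2 = \lambda_{\max}(A)$. Because each weight $\sigma_k$ is nonnegative, multiplying preserves the inequality, and summing yields
\[ A\bullet C \leq \lambda_{\max}(A) \sum_k \sigma_k = \lambda_{\max}(A)\Tr(C). \]

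An equivalent, basis-free route is to note that $\lambda_{\max}(A) I - A \succeq 0$. The Frobenius inner product of two PSD matrices is nonnegative, since it equals $\Tr(C^{1/2}(\lambda_{\max}(A) I - A)C^{1/2}) \geq 0$, and expanding this gives the claim. There is no real obstacle here. The only point needing care is that the sign of $\lambda_{\max}(A)$ is irrelevant, because the argument uses only $\sigma_k \geq 0$ and never the positivity of $A$.

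For its use in Proposition~\ref{prop:rank-1-perturbed}, take $C=\E[\mathbf{w}\mathbf{w}^\top]$. This matrix is PSD as an average of rank-one PSD matrices, so the lemma applies directly.
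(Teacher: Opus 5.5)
Your proof is correct. The basis-free route you mention, nonnegativity of $\Tr\big(C^{1/2}(\lambda_{\max}(A)I - A)C^{1/2}\big)$, is exactly the paper's argument, and your primary route is the same argument written out in the eigenbasis of $C$: the spectral decomposition of $C$ plus the Rayleigh-quotient bound $\mathbf{u}_k^\top A\mathbf{u}_k \le \lambda_{\max}(A)$.
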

\begin{proof}
    Since $\lambda_{\max}(A)I - A \succeq 0$, it follows that $C^{1/2}(\lambda_{\max}(A)I - A)C^{1/2} \succeq 0$ as well. Consequently, 
    \[ 0 \leq \Tr\big(C^{1/2}(\lambda_{\max}(A)I - A)C^{1/2}\big) = \Tr((\lambda_{\max}(A)I - A)C) =  \lambda_{\max}(A) \Tr(C) - A \bullet C. \qedhere \]
\end{proof}

\begin{fact}\label{fact:L-one-dim}
    Let $X\in \{\pm1\}$ be a random variable with expectation $\sol{X}$. For a function $f:\{-1,1\}\to\R$, let $L$ be defined as $Lf(x)=f(x)-\E_X f(X)$. Then, there is a constant $\kappa$ (a function of $f$) such that
    $$ Lf(X)=\kappa\cdot(X-\sol{X}).$$
\end{fact}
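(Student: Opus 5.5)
Since $X$ takes only the two values $\pm1$, $Lf$ is a function on a two-point set, so it is determined by its values $Lf(1)$ and $Lf(-1)$. I will show that $(Lf(1),Lf(-1))$ is proportional to $(1-\sol{X},\,-1-\sol{X})$, which is exactly the claim.

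First, the affine functions $x\mapsto \alpha x+\gamma$ span all functions on $\{-1,1\}$. Indeed, taking
$$\alpha=\frac{f(1)-f(-1)}{2},\qquad \gamma=\frac{f(1)+f(-1)}{2}$$
gives $f(x)=\alpha x+\gamma$.

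Next, apply $L$. Since $\E_X f(X)=\alpha\sol{X}+\gamma$, we get
$$Lf(x)=\alpha x+\gamma-(\alpha\sol{X}+\gamma)=\alpha(x-\sol{X}).$$
Hence $\kappa=\tfrac{f(1)-f(-1)}{2}$ works, and it depends only on $f$ (and not on $\sol{X}$, incidentally). Evaluating at $x=X$ yields $Lf(X)=\kappa(X-\sol{X})$.

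There is no real obstacle. The only point to check is that the decomposition $f(x)=\alpha x+\gamma$ holds for every function on $\{\pm1\}$, and this is immediate from solving the two linear equations $f(\pm1)=\pm\alpha+\gamma$. If $\sol{X}=\pm1$ the distribution is degenerate and the identity still holds trivially almost surely.
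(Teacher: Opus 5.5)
Your proposal is correct and matches the paper's proof: both write $f(x)=\kappa x+c$ with $\kappa=\frac{f(1)-f(-1)}{2}$ (the paper's $c=\kappa+f(-1)$ equals your $\gamma=\frac{f(1)+f(-1)}{2}$) and subtract the mean to get $Lf(X)=\kappa(X-\sol{X})$. Your degenerate-case remark is unnecessary, since the identity $Lf(x)=\kappa(x-\sol{X})$ holds pointwise for both $x\in\{\pm1\}$.
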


\begin{proof}
    In fact, we claim that we may take 
    \[ \kappa = \frac{f(1) - f(-1)}{2}.\]
   We also define $c = \kappa + f(-1)$, so that $f(x) = c+ \kappa x$ for $x \in \{ \pm 1 \}$. This in turn implies that 
   \[ Lf(X) = f(X) - \E[f(X)] = \big(c+\kappa X \big) - \big(c + \kappa \sol{X} \big) = \kappa \big(X - \sol{X} \big).\qedhere\]
\end{proof}

\subsection{Ising measure}\label{sec:Ising}

In this section, we collect some basic facts about the two-state Ising measure: let $J, a,b \in \R$ and we define the measure to be $\mu_{J,a,b}(x,y) = \frac{1}{Z} \exp(Jxy + a x + b y)$ for $x,y\in \{ \pm 1 \}$.

At $J=0$, we can check that we have the product measure 
\[ \mu_{0,a,b}(x,y) = \frac{\exp(ax)}{2 \cosh a} \cdot \frac{\exp(bx)}{2 \cosh b}.\]
Because of this factorization, it follows that 
\begin{equation}\label{eq:tanh}
     \E_{\mu_{0,a,b}}[X] = \tanh a =: \sol{X}_0,\quad \E_{\mu_{0,a,b}}[Y] = \tanh b =: \sol{Y}_0 \quad\text{and} \quad \E_{\mu_{0,a,b}} [XY] = \sol{X}_0 \sol{Y}_0
\end{equation}

For simplicity, we define $\sol{J} = \tanh J$ and also let 
\[ \sigma_{X_0} = \sqrt{1 - \sol{X}_0^2} \quad\text{and}\quad \sigma_{Y_0}= \sqrt{1 - \sol{Y}_0^2}.\]
We also define 
\[\sol{\bX}(y) = \E[X  \, | \, y]\quad\text{and}\quad \sol{\bY}(x) = \E[Y \, | \, x].\]

  \begin{lemma}\label{lem:mu}
         For $\mu=\mu_{J,a,b}$, we have \[\mu(x,y) = \frac{(1+\sol{X}_0x)(1+ \sol{Y}_0y)(1+\sol{J} xy)}{4(1+\sol{J}\sol{X}_0\sol{Y}_0 )}, \quad \overline{\bX}(y) = \frac{\sol{X}_0 + \sol{J}y}{1+ \sol{X}_0 \sol{J}y} \quad\text{and}\quad \sol{\bY}(x) = \frac{\sol{Y}_0 + \sol{J} x}{1 + \sol{Y}_0 \sol{J}x}\]
     \end{lemma}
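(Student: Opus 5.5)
The plan is a direct computation based on the identity $e^{t z} = \cosh t\,(1+z\tanh t)$, valid for $z\in\{\pm1\}$. Applying it to each of the three factors of the unnormalized weight gives
\[
\exp(Jxy+ax+by)=\cosh J\cosh a\cosh b\,(1+\sol{J}xy)(1+\sol{X}_0x)(1+\sol{Y}_0y),
\]
using $\sol{X}_0=\tanh a$ and $\sol{Y}_0=\tanh b$ from \eqref{eq:tanh}.

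Next I compute the normalizer by summing this product over $x,y\in\{\pm1\}$. Expanding the product $(1+\sol{J}xy)(1+\sol{X}_0x)(1+\sol{Y}_0y)$, every monomial containing an odd power of $x$ or of $y$ sums to zero. The only surviving terms are $1$ and $\sol{J}\sol{X}_0\sol{Y}_0\,x^2y^2$. Hence
\[
\sum_{x,y}(1+\sol{J}xy)(1+\sol{X}_0x)(1+\sol{Y}_0y)=4(1+\sol{J}\sol{X}_0\sol{Y}_0).
\]
The prefactor $\cosh J\cosh a\cosh b$ cancels in the ratio, which yields the claimed formula for $\mu(x,y)$.

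For the conditional expectation, I fix $y$ and write the weight in $x$ as $(1+\sol{X}_0x)(1+\sol{J}y\,x)$, since the factor $(1+\sol{Y}_0y)$ is constant in $x$ and cancels. Summing over $x$ gives the normalizer $2(1+\sol{X}_0\sol{J}y)$. Summing $x$ times the weight gives $2(\sol{X}_0+\sol{J}y)$. The ratio is therefore
\[
\sol{\bX}(y)=\frac{\sol{X}_0+\sol{J}y}{1+\sol{X}_0\sol{J}y}.
\]
The formula for $\sol{\bY}(x)$ follows by the symmetric argument, exchanging the roles of $x$ and $y$ (and of $a$ and $b$).

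There is no real obstacle here; the only point needing care is the parity bookkeeping when expanding the product, to see that exactly the two terms $1$ and $\sol{J}\sol{X}_0\sol{Y}_0$ survive the sum.
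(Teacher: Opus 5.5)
Your proposal is correct and matches the paper's proof essentially step for step. Both factor the weight via $e^{tz}=\cosh t\,(1+z\tanh t)$, both evaluate the normalizer $4(1+\sol{J}\sol{X}_0\sol{Y}_0)$ by summing over $\{\pm1\}^2$, and both obtain $\sol{\bX}(y)$ from the one-variable sums $2(1+\sol{X}_0\sol{J}y)$ and $2(\sol{X}_0+\sol{J}y)$.
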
  
 \begin{proof}
          First, for $x\in \{\pm1\}$, 
          \begin{equation}\label{eq:exp-cosh-snh}
              \exp(cx) = \cosh (c)+ x \sinh(c) = \cosh(c)(1 + x \tanh(c)).
          \end{equation}
          Recall that $\sol{X}_0=\tanh(a),\sol{Y}_0=\tanh(b)$, so
    \[\exp(ax) = \cosh(a) \big(1+\sol{X}_0 x\big), \quad \exp(by) = \cosh(b) \big(1+ \sol{Y}_0 y\big) \quad\text{and}\quad \exp(Jxy) = \cosh(J) \big(1 + \sol{J} xy).\]
    The numerator of $\mu(x,y)$ is given by $ \cosh(a)\cosh(b)\cosh(J)(1+ \sol{X}_0 x)(1+ \sol{Y}_0 y)(1+ \sol{J} xy)$. Now, since 
    \[ \sum_{x,y \in \{ \pm 1 \}}(1+ \sol{X}_0 x)(1+ \sol{Y}_0 y)(1+ \sol{J} xy) = 4(1+\sol{J}\sol{X}_0\sol{Y}_0 ), \]
    it follows that 
    \[ \mu(x,y) = \frac{(1+\sol{X}_0x)(1+ \sol{Y}_0y)(1+\sol{J} xy)}{4(1+\sol{J} \sol{X}_0\sol{Y}_0)}.\]
    Next, we also have 
    \[ \sum_{x \in \{ \pm 1 \}} (1+ \sol{X}_0x)(1 + \sol{J}xy) = 2 (1+ \sol{X}_0\sol{J} y),\]
    so that 
    \[ \overline{\bX}(y) = \sum_{x \in \{ \pm 1 \}} \frac{(1+ \sol{X}_0x)(1+\sol{J}xy)}{2(1+\sol{X}_0\sol{J} y)} \cdot x = \frac{\sol{X}_0 + \sol{J}y}{1+ \sol{X}_0 \sol{J}y}. \]
    Symmetrically we get the result for $\sol{\bY}(x)$.
     \end{proof}

\subsection{Taylor expansion of two-state Ising model around zero interaction}

We do a Taylor expansion of the two-state spin system for the Ising measure and various functions of the underlying spins, when the interaction $J$ is very close to 0. This helps us later on when we want to prove a rank-1 perturbed trickle-down condition (Definition~\ref{def:rank-1-perturbed}) for a conditioned two-state spin system.

\begin{lemma}\label{lem:bfx}
    For all $x \in \{\pm 1\}$ and all $y \in \{\pm1 \}$ and $\big|\sol{J} \big| \leq \frac{1}{2}$, we have $\sol{\bX}(y) = \sol{X}_0 + \sol{J}\sigma_{X_0}^2y  \pm  O(|J|^2)$ and $\sol{\bY}(x) = \sol{Y}_0 + \sol{J} \sigma_{Y_0}^2 x \pm O(|J|^2)$.
\end{lemma}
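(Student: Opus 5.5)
We start from the exact formula of Lemma~\ref{lem:mu},
\[
\sol{\bX}(y)=\frac{\sol{X}_0+\sol{J}y}{1+\sol{X}_0\sol{J}y},
\]
and subtract the claimed leading term $\sol{X}_0+\sol{J}\sigma_{X_0}^2 y$ exactly. Over the common denominator $1+\sol{X}_0\sol{J}y$, the numerator is
\[
\sol{X}_0+\sol{J}y-\bigl(\sol{X}_0+\sol{J}\sigma_{X_0}^2y\bigr)\bigl(1+\sol{X}_0\sol{J}y\bigr).
\]
Use $\sigma_{X_0}^2=1-\sol{X}_0^2$ and $y^2=1$. The terms of order zero and one in $\sol{J}$ cancel: the constant term is $\sol{X}_0-\sol{X}_0=0$. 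The linear term is $\sol{J}y-\sol{J}(1-\sol{X}_0^2)y-\sol{X}_0^2\sol{J}y=0$. What remains is
\[
-\sol{J}^2\,\sigma_{X_0}^2\,\sol{X}_0\,y^2=-\sol{J}^2\sigma_{X_0}^2\sol{X}_0 .
\]
This gives the exact identity
\[
\sol{\bX}(y)-\sol{X}_0-\sol{J}\sigma_{X_0}^2y=-\frac{\sol{J}^2\,\sol{X}_0\,\sigma_{X_0}^2}{1+\sol{X}_0\sol{J}y}.
\]

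Next we bound the error term uniformly. Since $|\sol{X}_0|=|\tanh a|<1$ and $|\sol{J}|\le 1/2$, the denominator is at least $1-|\sol{X}_0\sol{J}|\ge 1/2$. The factor $|\sol{X}_0|\sigma_{X_0}^2$ is at most $1$. So the error has absolute value at most $2\sol{J}^2$. Finally, $|\sol{J}|=|\tanh J|\le |J|$, hence the error is at most $2J^2=O(|J|^2)$. The constant is absolute and independent of $a$, $b$ and $y$, which is the uniformity needed later when the lemma is summed over links. The statement for $\sol{\bY}(x)$ follows by the symmetric computation, swapping $(a,\sol{X}_0,y)$ with $(b,\sol{Y}_0,x)$.

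There is no real obstacle here. The only points needing care are to do the cancellation exactly rather than by a formal Taylor expansion, so that the constant is uniform in the local field $a$, and to note that the hypothesis $|\sol{J}|\le 1/2$ is exactly what keeps the denominator bounded away from $0$.
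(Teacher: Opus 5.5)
Your proof is correct and is essentially the paper's argument. Both derive the exact identity $\sol{\bX}(y)-\sol{X}_0-\sol{J}\sigma_{X_0}^2y=-\sol{J}^2\sol{X}_0\sigma_{X_0}^2/(1+\sol{X}_0\sol{J}y)$ from Lemma~\ref{lem:mu}, then bound it by $2\sol{J}^2$ using that the denominator is at least $1/2$ when $|\sol{J}|\le 1/2$; your remark $|\tanh J|\le |J|$ just makes explicit the passage to $O(|J|^2)$ that the paper leaves implicit.
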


 \begin{proof}
         By the expression that we have computed in Claim~\ref{lem:mu}, it follows that 
         \begin{align*}
             \big|\sol{\bX}(y) - \sol{X}_0 - \sol{J} \sigma_{X_0}^2y\big| &= \bigg| \frac{\sol{X}_0 + \sol{J} y}{1 + \sol{J} \sol{X}_0 y} - \sol{X}_0 - \sol{J} (1- \sol{X}_0^2) y \bigg|\\
             &= \bigg|\frac{\sol{J} y(1-\sol{X}_0^2)(1 - (1+\sol{J}\sol{X}_0 y))}{1 + \sol{J} \sol{X}_0 y} \bigg|\\
             &= \bigg|\frac{\sol{J}^2 y^2(1-\sol{X}_0^2) \sol{X}_0}{1 + \sol{J} \sol{X}_0 y} \bigg|\underset{|y|,|\sol{X}_0|\leq 1,\sol{J}\leq 1/2}{\leq} 2 \sol{J}^2
         \end{align*}
         
         for small enough $|J|$, so that $\sol{\bX}(y) = \sol{X}_0 + \sol{J} \sigma_{X_0}^2 y \pm O(|J|^2)$ as desired. The result for $\sol{\bY}(x)$ follows by symmetry. 
    \end{proof}

\begin{lemma}\label{lem:Radon-Nikodym}
    For $|\sol{J}|\leq 1/2$ we have 
    \[ \frac{\mu_{J,a,b}(x,y)}{\mu_{0,a,b}(x,y)} = \frac{1+\sol{J} xy}{1 + \sol{J} \sol{X}_0 \sol{Y}_0}=1 + \sol{J} \big(xy - \sol{X}_0\sol{Y}_0 \big) + \mu^R_{J,a,b}(x,y)\]
\end{lemma}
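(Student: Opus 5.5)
The first equality follows directly from Lemma~\ref{lem:mu}. At $J=0$ we have $\sol{J}=0$, so Lemma~\ref{lem:mu} gives $\mu_{0,a,b}(x,y)=\frac{(1+\sol{X}_0x)(1+\sol{Y}_0y)}{4}$. For general $J$ it gives $\mu_{J,a,b}(x,y)=\frac{(1+\sol{X}_0x)(1+\sol{Y}_0y)(1+\sol{J}xy)}{4(1+\sol{J}\sol{X}_0\sol{Y}_0)}$. Dividing the second expression by the first, the factors $(1+\sol{X}_0x)(1+\sol{Y}_0y)/4$ cancel. This cancellation is legitimate because $|\sol{X}_0|,|\sol{Y}_0|<1$ for finite $a,b$, so these factors are strictly positive. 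The hypothesis $|\sol{J}|\le 1/2$ gives $1+\sol{J}\sol{X}_0\sol{Y}_0\ge 1/2>0$, so the denominator is nonzero. The result is $\frac{1+\sol{J}xy}{1+\sol{J}\sol{X}_0\sol{Y}_0}$.

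The second equality is really the definition of the remainder $\mu^R_{J,a,b}(x,y)$. I would therefore set
$$\mu^R_{J,a,b}(x,y):=\frac{1+\sol{J}xy}{1+\sol{J}\sol{X}_0\sol{Y}_0}-1-\sol{J}(xy-\sol{X}_0\sol{Y}_0)$$
and compute it in closed form, since later sections will need a quantitative bound on it. Putting everything over the common denominator and simplifying the numerator,
$$(1+\sol{J}xy)-(1+\sol{J}\sol{X}_0\sol{Y}_0)\bigl(1+\sol{J}(xy-\sol{X}_0\sol{Y}_0)\bigr)
=-\sol{J}^2\sol{X}_0\sol{Y}_0\,(xy-\sol{X}_0\sol{Y}_0).$$
This yields the explicit formula
$$\mu^R_{J,a,b}(x,y)=-\frac{\sol{J}^2\,\sol{X}_0\sol{Y}_0\,(xy-\sol{X}_0\sol{Y}_0)}{1+\sol{J}\sol{X}_0\sol{Y}_0}.$$
Using $|\sol{X}_0|,|\sol{Y}_0|\le 1$, $|xy-\sol{X}_0\sol{Y}_0|\le 2$ and $1+\sol{J}\sol{X}_0\sol{Y}_0\ge 1/2$, I would record the bound $|\mu^R_{J,a,b}(x,y)|\le 4\sol{J}^2=O(J^2)$, which matches the style of Lemma~\ref{lem:bfx}. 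As a sanity check, $\E_{\mu_{0,a,b}}[\mu^R_{J,a,b}]=0$. This holds because $\E_{\mu_{0,a,b}}[XY]=\sol{X}_0\sol{Y}_0$ by \eqref{eq:tanh}, and it is consistent with both the Radon–Nikodym derivative and its linear term having $\mu_{0,a,b}$-mean $1$ and $0$ respectively.

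None of these steps is a real obstacle. The only place needing care is the algebraic simplification of the numerator, together with checking that the factors being divided out are nonzero; the latter is exactly where $|\sol{J}|\le 1/2$ is used.
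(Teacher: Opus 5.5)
Your proof is correct and is essentially the paper's argument. The paper gets the first equality by factoring $Z_{J,a,b}=\cosh J\cdot Z_{0,a,b}\,(1+\sol{J}\sol{X}_0\sol{Y}_0)$ via $\exp(Jxy)=\cosh J\,(1+\sol{J}xy)$, rather than by dividing two instances of Lemma~\ref{lem:mu}; your closed form for $\mu^R_{J,a,b}$ and the bound $|\mu^R_{J,a,b}|\le 4\sol{J}^2$ are exactly what the paper derives. One small correction: the hypothesis $|\sol{J}|\le 1/2$ is needed only for that bound, not for the nonzero denominator, since $1+\sol{J}\sol{X}_0\sol{Y}_0>0$ for every $J$.
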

where $|\mu^R_{J,a,b}(x,y)| = O\big(\sol{J}^2 \big)$ for all $x,y \in \{ \pm 1 \}$. 
\begin{proof}
    We first prove the first equality. By an analysis similar to \eqref{eq:exp-cosh-snh}, we have 
    \[ \exp(Jxy) = (\cosh J)\cdot (1+ \sol{J} xy).\]
    In particular, this means that we may write 
    \begin{align*}
        Z_{J,a,b} &= \cosh J\sum_{x,y \in \{ \pm 1 \}} \exp(ax+ by)\big(1 + \sol{J} xy \big) \\
        &= (\cosh J) Z_{0,a,b} \E_{\mu_{0,a,b}}(1+\sol{J}xy) \\
        &= (\cosh J)Z_{0,a,b}(1+\sol{J}\sol{X}_0 \sol{Y}_0).
    \end{align*}
    Consequently, it follows that
\[ \mu_{J,a,b}(x,y) = \frac{\exp(Jxy + ax + by)}{Z_{J,a,b}} = \frac{(\cosh J)(1+ \sol{J} xy) \exp(ax +by)}{Z_{0,a,b} (\cosh J)(1+\sol{J} \sol{X}_0 \sol{Y}_0)} = \frac{\mu_{0,a,b}(x,y) (1+\sol{J}xy)}{1 + \sol{J} \sol{X}_0 \sol{Y}_0}. \]
    Rearranging this proves the first equality of the lemma.

    Next, we prove the second equality. To that end, we note that 
    \begin{equation}\label{eq:rad-nik-first}
        \Bigg|\frac{\mu_{J,a,b}(x,y)}{\mu_{0,a,b}(x,y)} -1 \Bigg|=  \Bigg|\frac{1+ \sol{J}xy}{1 + \sol{J} \sol{X}_0 \sol{Y}_0} - 1 \Bigg|= \Bigg|\frac{\sol{J}\big(xy - \sol{X}_0 \sol{Y}_0 \big)}{1+ \sol{J} \sol{X}_0 \sol{Y}_0}\Bigg|
    \end{equation}
    and so 
    \[\big | \mu_{J,a,b}^R(x,y) \big| := \Bigg|\frac{\mu_{J,a,b}(x,y)}{\mu_{0,a,b}(x,y)} -1 - \sol{J}\big(xy - \sol{X}_0 \sol{Y}_0 \big) \Bigg| = \sol{J}^2 \Bigg|\big(xy - \sol{X}_0\sol{Y}_0 \big) \frac{\sol{X}_0 \sol{Y}_0}{1 + \sol{J} \sol{X}_0 \sol{Y}_0} \Bigg| \underset{|\sol{J}|\leq 1/2}{\leq} 4 \sol{J}^2\]
     which rearranges to the second equality.
\end{proof}
.
\begin{lemma}[Master Lemma]\label{lem:master}
    Let $F \colon \R \times \{ \pm 1 \}^2 \to \R$ be a twice-differentiable function such that
    \begin{equation}\label{eq:F-Taylor}
        F(\sol{J} ,x,y) = F(0,x,y) + \sol{J} \dot{F}(0,x,y) +  \sol{J}^2 \ddot{F}(\sol{J}^*(x,y),x,y)
    \end{equation}
    for some $\sol{J}^*(x,y)$ between $0$ and $\sol{J}$, where $\dot{F}(0,x,y)=\frac{d}{d\sol J}F\big(\sol{J},x,y\big)\bigg|_{\sol{J} = 0}$ and $\ddot{F}(\sol{J}^*(x,y),x,y)=\frac{d^2}{d\sol{J}^2}F\big(\sol{J},x,y \big)\biggr|_{\sol{J} = \sol{J}^*(x,y)}.$ 
    If for some constant $C>0$ and all $x,y\in \{\pm1\}$ and $|\sol{J}|\leq 1/2$ we have
    \begin{equation}\label{eq:F-conditions}
        \E_{\mu_{0,a,b}}\big[|F(0,x,y)|\big], \;
        \E_{\mu_{0,a,b}}\big[\big|\dot{F}(0,x,y)\big|\big],\;\E_{\mu_{0,a,b}}\big[\big|\ddot{F}(\sol{J}^*(x,y),x,y) \big| \big] \leq C  
    \end{equation}
    then for $\big|\sol{J} \big|\leq \tfrac{1}{2}$, we have 
    \[ \E_{\mu_{J,a,b}}\big[F\big(\sol{J},x,y \big)\big] = \E_{\mu_{0,a,b}}[F(0,x,y)] + \sol{J} \big( \E_{\mu_{0,a,b}}[\dot{F}(0,x,y)] + \E_{\mu_{0,a,b}}[(XY - \sol{X}_0 \sol{Y}_0) F(0,x,y) \big] \big) \pm O(C \cdot \sol{J}^2).\]
\end{lemma}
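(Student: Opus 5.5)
The plan is to change measure from $\mu_{J,a,b}$ to the product measure $\mu_{0,a,b}$ using Lemma~\ref{lem:Radon-Nikodym}. After that, everything reduces to a Taylor expansion in $\sol{J}$ on the four-point space $\{\pm1\}^2$. Write the Radon--Nikodym derivative as
\[
R(x,y):=\frac{\mu_{J,a,b}(x,y)}{\mu_{0,a,b}(x,y)}=1+\sol{J}\big(xy-\sol{X}_0\sol{Y}_0\big)+\mu^R_{J,a,b}(x,y),
\qquad |\mu^R_{J,a,b}(x,y)|\le 4\sol{J}^2 .
\]
Then
\[
\E_{\mu_{J,a,b}}\big[F(\sol{J},X,Y)\big]=\E_{\mu_{0,a,b}}\big[R(X,Y)\,F(\sol{J},X,Y)\big].
\]

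Next I substitute the expansion \eqref{eq:F-Taylor} for $F(\sol{J},x,y)$ and multiply out against $R$. This produces three groups of terms.
\begin{itemize}
\item The zeroth- and first-order terms are
\[
\E_{\mu_0}[F(0,X,Y)]+\sol{J}\,\E_{\mu_0}[\dot F(0,X,Y)]+\sol{J}\,\E_{\mu_0}\big[(XY-\sol{X}_0\sol{Y}_0)F(0,X,Y)\big].
\]
This is exactly the main part of the claimed formula.
\item There is the term $\sol{J}^2\,\E_{\mu_0}[R\,\ddot F(\sol{J}^*(X,Y),X,Y)]$. For $|\sol{J}|\le 1/2$ we have $|R|\le 3/(1-1/4)=4$, since $|1+\sol{J}xy|\le 3/2$ and $|1+\sol{J}\sol{X}_0\sol{Y}_0|\ge 1/2$. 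So by \eqref{eq:F-conditions} this term is at most $4C\sol{J}^2$ in absolute value.
\item The remaining cross terms are
\[
\sol{J}^2(XY-\sol{X}_0\sol{Y}_0)\dot F(0,X,Y)\quad\text{and}\quad \mu^R_{J,a,b}\,\big(F(0,X,Y)+\sol{J}\dot F(0,X,Y)\big).
\]
Here $|XY-\sol{X}_0\sol{Y}_0|\le 2$, $|\mu^R_{J,a,b}|\le 4\sol{J}^2$ and $|\sol{J}|\le 1/2$. Their $\mu_0$-expectations are therefore bounded by a constant times $\sol{J}^2$ times $\E_{\mu_0}|F(0,X,Y)|$ or $\E_{\mu_0}|\dot F(0,X,Y)|$, hence are $O(C\sol{J}^2)$, again by \eqref{eq:F-conditions}.
\end{itemize}

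The only delicate point is the remainder $\ddot F$ evaluated at $\sol{J}^*(x,y)$, which depends on the point $(x,y)$. Because the space has only four points, there is no integrability issue. I just need to keep the pointwise bound $|R|\le 4$ and then apply the hypothesis on $\E_{\mu_0}|\ddot F|$ directly. Collecting the three groups gives the stated formula with an $O(C\sol{J}^2)$ error, with an explicit absolute constant (something like $20$).
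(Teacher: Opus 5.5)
Your proposal is correct and follows essentially the same route as the paper: change measure to $\mu_{0,a,b}$ via Lemma~\ref{lem:Radon-Nikodym}, multiply the density $1+\sol{J}(xy-\sol{X}_0\sol{Y}_0)+\mu^R_{J,a,b}$ against the expansion \eqref{eq:F-Taylor}, read off the constant and linear terms, and bound the rest by $O(C\sol{J}^2)$ using $|\mu^R_{J,a,b}|\le 4\sol{J}^2$ and \eqref{eq:F-conditions}. One small arithmetic slip: the two bounds you cite give $|R|\le (3/2)/(1/2)=3$, not $3/(1-1/4)$, but your stated bound $|R|\le 4$ still holds, so nothing changes.
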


\begin{proof}
We begin by remarking that the form of $F$ in \eqref{eq:F-Taylor} follows from Taylor's theorem. For simplicity of the notation we write $F(J)$ to denote $F(J,x,y)$.
    We now write 
    \begin{align*}
        \E_{\mu_{J,a,b}}[F(J)] &= \E_{\mu_{0,a,b}} \bigg[\frac{\mu_{J,a,b}(x,y)}{\mu_{0,a,b}(x,y)} \big( F(0) + \sol{J} \dot{F}(0) + \sol{J}^2 \ddot{F}\big(\sol{J}^* \big) \big) \bigg] \\
        &\underset{\text{Lemma~\ref{lem:Radon-Nikodym}}}{=}\E_{\mu_{0,a,b}} \big[ \big(1 + \sol{J} \big(xy - \sol{X}_0\sol{Y}_0 \big) + \mu_{J,a,b}^R(x,y) \big) \big( F(0) + \sol{J} \dot{F}(0) + \sol{J}^2 \ddot{F}\big(\sol{J}^* \big) \big) \big) \big].
    \end{align*}
    The constant term of the above is $\E_{\mu_{0,a,b}}[F(0)]$, the linear term in $\sol{J}$  is 
    \[ \sol{J} \big(\E_{\mu_{0,a,b}}\big[ \dot{F}(0) \big] + \E_{\mu_{0,a,b}} \big[\big(XY - \sol{X}_0 \sol{Y}_0\big) F(0)\big ] \big). \]

    The quadratic term in $\sol{J}$ is 
    \begin{align*}
    &\E_{\mu_{0,a,b}}\big[ \mu^R_{J,a,b} \cdot F(0) \big] + \sol{J}^2\E_{\mu_{0,a,b}}\left[ (xy - \sol{X}_0 \sol{Y}_0)  \dot{F}(0) \right] + \E_{\mu_{0,a,b}}\big[ \big(1 + \sol{J} \big(xy - \sol{X}_0\sol{Y}_0 \big) + O\big(\sol{J}^2) \big) \sol{J}^2 \ddot{F}\big(\sol{J}^* \big) \big)\big] \\
        & \hspace{-0.5em}\underset{\substack{|x|,|y|,\big|\sol{X}_0 \big|, \big|\sol{Y}_0\big| \leq 1\\ \text{Lemma~\ref{lem:Radon-Nikodym}}: \, \big|\mu^R_{J,a,b} \big| \leq O\big( \sol{J}^2\big)}}{\leq}O\big(\sol{J}^2\big)\E_{\mu_{0,a,b}}[ |F(0)|] +  2\sol{J}^2\E_{\mu_{0,a,b}}\left[   \big|\dot{F}(0)\big| \right] + \big(1 + 2\sol{J} + O\big(\sol{J}^2) \big) \sol{J}^2 \E_{\mu_{0,a,b}}\big[\big|  \ddot{F}\big(\sol{J}^*  \big) \big|  \big]\\
        &\underset{\eqref{eq:F-conditions}, \, \big|\sol{J}\big| \leq 1/2}{\leq} O\big(C \cdot \sol{J}^2\big). \qedhere
    \end{align*}
\end{proof}

\subsection{GOE random matrices}

\begin{proposition}[{\cite{BY88}}]\label{prop:Bai-Yin}
Let $\beta>0$, and let $J$ be $\beta$-GOE. Then $\|J \|_{\op}\to 2 \beta$ in probability.
\end{proposition}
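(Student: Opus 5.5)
We will not reprove \cite{BY88} in full generality. Instead we give a self-contained argument for the Gaussian case, which is all that is used later. By scaling $J\mapsto J/\beta$ we may assume $\beta=1$. Since $\|J\|_{\op}=\max(\lambda_{\max}(J),\lambda_{\max}(-J))$ and $-J$ has the same law as $J$, it suffices to show two things for every fixed $\eta>0$: with probability $1-o(1)$ we have $\lambda_{\max}(J)\le 2+\eta$, and with probability $1-o(1)$ we have $\lambda_{\max}(J)\ge 2-\eta$. The upper bound is then applied to both $J$ and $-J$ and combined by a union bound.

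\emph{Upper bound in expectation.} We write $\lambda_{\max}(J)=\sup_{\|u\|=1}X_u$, where $X_u:=u^\top Ju=2\sum_{i<j}J_{ij}u_iu_j$ is a centered Gaussian process. For unit vectors $u,v$ its increments satisfy
\[
\E(X_u-X_v)^2=\frac{4}{n}\sum_{i<j}(u_iu_j-v_iv_j)^2\le \frac{2}{n}\sum_{i,j}(u_iu_j-v_iv_j)^2=\frac{2}{n}\bigl(2-2\langle u,v\rangle^2\bigr).
\]
Since $1-t^2=(1-t)(1+t)\le 2(1-t)$ for $t=\langle u,v\rangle\in[-1,1]$, and $2(1-t)=\|u-v\|^2$, the right-hand side is at most $\frac{4}{n}\|u-v\|^2$. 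We compare with the process $Y_u:=\frac{2}{\sqrt n}\langle g,u\rangle$, where $g\sim\mathcal N(0,I_n)$; it satisfies $\E(Y_u-Y_v)^2=\frac4n\|u-v\|^2$. The Sudakov--Fernique inequality then gives
\[
\E\lambda_{\max}(J)\le \E\sup_{\|u\|=1}Y_u=\frac{2}{\sqrt n}\E\|g\|\le 2 .
\]

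\emph{Concentration.} View $\lambda_{\max}(J)$ as a function of the independent standard Gaussians $g_{ij}$, $i<j$, with $J_{ij}=g_{ij}/\sqrt n$. By Weyl's inequality, $\lambda_{\max}$ is $1$-Lipschitz in the Frobenius norm. Each $g_{ij}$ enters $J$ twice with weight $1/\sqrt n$, so $\lambda_{\max}$ is $\sqrt{2/n}$-Lipschitz in $(g_{ij})$. Gaussian concentration yields
\[
\P\bigl[\lambda_{\max}(J)>\E\lambda_{\max}(J)+\eta\bigr]\le e^{-n\eta^2/4}.
\]
Combined with the expectation bound, this gives $\lambda_{\max}(J)\le 2+\eta$ with probability $1-o(1)$, and the same holds for $-J$.

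\emph{Lower bound.} Here we invoke Wigner's semicircle law. The empirical spectral distribution of $J$ converges weakly in probability to the semicircle law on $[-2,2]$, which assigns positive mass $c_\eta>0$ to $[2-\eta,2]$. Hence, with probability $1-o(1)$, at least $c_\eta n/2$ eigenvalues lie above $2-\eta$, and in particular $\lambda_{\max}(J)\ge 2-\eta$. If one prefers to avoid the semicircle law, a moment computation works instead: show $\frac1n\E\tr J^{2k}\to C_k$ (the Catalan numbers) together with variance $o(1)$ for each fixed $k$. Since $\lambda_{\max}^{2k}\ge\frac1n\tr J^{2k}$ and $C_k^{1/2k}\to 2$, this suffices.

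The main obstacle is the upper bound, since a crude $\epsilon$-net argument only gives $\|J\|_{\op}=O(1)$ and not the sharp constant $2$. The comparison step above handles this. The only care needed there is the inequality $1-\langle u,v\rangle^2\le\|u-v\|^2$ together with correct bookkeeping of the zero diagonal and the symmetric double counting. Both only help, since discarding the diagonal terms can only decrease the increment variance.
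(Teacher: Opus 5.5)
Your argument is correct, but the paper gives no proof at all. It simply cites Bai--Yin \cite{BY88}, which treats general Wigner matrices with finite fourth moment, gives almost sure convergence, and uses a truncation plus high-moment combinatorial argument.

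You instead exploit Gaussianity. Your upper bound $\E\lambda_{\max}(J)\le 2$ is exactly the Sudakov--Fernique comparison the paper itself runs in the proof of Lemma~\ref{lem:sparse-delocalization}: your increment computation is the same, and Claim~\ref{claim:expectedXz} with $\rho=0$ returns $2\beta$. Lipschitz Gaussian concentration then plays the role the paper gives to Borell--TIS.

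This buys you a short, self-contained upper bound with a non-asymptotic tail $\P[\lambda_{\max}(J)>2+\eta]\le e^{-n\eta^2/4}$, valid for every $n$, using only tools already in the paper. What you give up is generality in the entry distribution and almost sure convergence. Your lower bound also still rests on an outside theorem, the semicircle law, though a much softer one than Bai--Yin.

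There is one small slip in your alternative moment argument. The inequality $\lambda_{\max}(J)^{2k}\ge\frac1n\tr J^{2k}$ is false in general, because $\tr J^{2k}$ also counts the negative eigenvalues, so it only controls $\max_i|\lambda_i|$. The correct inequality is $\|J\|_{\op}^{2k}\ge\frac1n\tr J^{2k}$. That is all you need here, since the proposition concerns $\|J\|_{\op}$ rather than $\lambda_{\max}(J)$.
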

\begin{lemma}\label{lem:row-sum}
    Let $J\in \R^{n\times n}$ be the $\beta$-GOE matrix. Then, for every $\sum_j J_{ij}^2\to  \beta^2.$
\end{lemma}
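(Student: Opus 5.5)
The claim is that for a $\beta$-GOE matrix $J$, for every row $i$, $\sum_{j} J_{ij}^2 \to \beta^2$. The natural reading is uniformly in $i$ and in probability: $\max_i |\sum_{j\neq i} J_{ij}^2 - \beta^2| \to 0$. This uniform form is what the paper's later usage requires, since the bound must hold simultaneously for all sites $i$. The plan is a chi-squared concentration bound for a fixed row, followed by a union bound over the $n$ rows.

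Fix $i$. Since $J_{ii}=0$ and the off-diagonal entries $J_{ij}$, $j\neq i$, are i.i.d.\ $\mathcal N(0,\beta^2/n)$, we can write
\[ \sum_{j} J_{ij}^2 = \frac{\beta^2}{n}\sum_{j\neq i} g_{ij}^2, \]
where the $g_{ij}$ are i.i.d.\ standard normals. Thus $\frac{n}{\beta^2}\sum_j J_{ij}^2$ is a $\chi^2_{n-1}$ random variable. Its mean is $\E\sum_j J_{ij}^2=\beta^2(n-1)/n$, which differs from $\beta^2$ by only $O(1/n)$.

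For concentration, apply the Laurent--Massart bound to $Q\sim\chi^2_{n-1}$ with $k=n-1$: for all $t>0$,
\[ \P\bigl[|Q-k|\ge 2\sqrt{kt}+2t\bigr]\le 2e^{-t}. \]
Take $t=2\log n$. Then with probability at least $1-2n^{-2}$,
\[ \Bigl|\sum_j J_{ij}^2-\beta^2\tfrac{n-1}{n}\Bigr|\le \frac{\beta^2}{n}\bigl(2\sqrt{2n\log n}+4\log n\bigr)=O\Bigl(\beta^2\sqrt{\tfrac{\log n}{n}}\Bigr). \]
If one prefers to avoid citing Laurent--Massart, the same conclusion follows from a direct Chernoff bound: the moment generating function $\E e^{\lambda(g^2-1)}=e^{-\lambda}(1-2\lambda)^{-1/2}$ is finite for $\lambda<1/2$, and this yields sub-exponential tails of the same order.

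A union bound over the $n$ rows then shows that, with probability at least $1-2/n$,
\[ \max_i\Bigl|\sum_j J_{ij}^2-\beta^2\Bigr| = O\Bigl(\beta^2\sqrt{\log n/n}\Bigr)=o(1). \]
This proves the convergence in probability, uniformly over all rows.

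None of these steps is a serious obstacle. The only point needing care is to get a tail bound strong enough to survive the union bound over $n$ rows; Chebyshev's inequality alone gives only $O(1/n)$ failure probability per row, which is not enough. The exponential $\chi^2$ tail handles this comfortably.
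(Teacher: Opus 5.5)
Your proof is correct and follows the same route as the paper, which simply observes that $\sum_j J_{ij}^2$ is a sum of independent squared Gaussians and hence concentrates around its mean. You supply the details the paper leaves implicit: an explicit $\chi^2$ tail bound, the $O(1/n)$ correction coming from $J_{ii}=0$ (so the mean is $\beta^2(n-1)/n$), and the union bound over rows that the paper's later use ``for all $i\in[n]$'' actually requires.
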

\begin{proof}
    Observe that 
    $J_{i1},\dots,J_{in}$ are independent Gaussians with variance $\beta^2/n$. Therefore $\sum_j J_{ij}^2$ is highly concentrated around its expected value which is $\beta^2.$
\end{proof}

\begin{lemma}\label{lem:max-entry}
Let $\beta >0$ ane $J$ be $\beta$-GOE. Then \[\max_{1\le i,j\le n}|J_{ij}| \to 0\] in probability.
\end{lemma}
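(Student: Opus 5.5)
The plan is a Gaussian tail bound combined with a union bound over the entries, exactly as in the proof of \eqref{eq:bounded-energy} in Corollary~\ref{cor:gap-L}. Since $J$ is symmetric with zero diagonal, the maximum over $1\le i,j\le n$ equals the maximum over the $\binom n2$ off-diagonal entries $J_{ij}$, $i<j$. Each of these is distributed as $\frac{\beta}{\sqrt n} g_{ij}$ with $g_{ij}\sim\mathcal N(0,1)$. Fix any $\delta>0$; we must show $\P[\max_{i<j}|J_{ij}|>\delta]\to 0$.

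For a single entry, the standard Gaussian tail bound $\P[|g|>t]\le 2\exp(-t^2/2)$ gives
\[
\P\big[|J_{ij}|>\delta\big]=\P\Big[|g_{ij}|>\tfrac{\delta\sqrt n}{\beta}\Big]\le 2\exp\Big(-\frac{\delta^2 n}{2\beta^2}\Big).
\]
A union bound over the at most $n^2$ pairs then yields
\[
\P\Big[\max_{i,j}|J_{ij}|>\delta\Big]\le 2n^2\exp\Big(-\frac{\delta^2 n}{2\beta^2}\Big)\xrightarrow[n\to\infty]{}0,
\]
since for fixed $\beta$ and $\delta$ the exponential decay dominates the polynomial factor. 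This is precisely convergence to $0$ in probability.

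There is no real obstacle. The only point worth recording, since it is what is used later to make $\sum_j|J_{ij}|^3=o(1)$, is the quantitative form. Taking $\delta=\beta\sqrt{6\log n/n}$ in the same computation gives $\max_{i,j}|J_{ij}|\le \beta\sqrt{6\log n/n}$ with probability at least $1-2n^{-1}$. Combined with Lemma~\ref{lem:row-sum}, this gives $\sum_j|J_{ij}|^3\le \max_{k,l}|J_{kl}|\cdot\sum_j J_{ij}^2=O(\beta^3\sqrt{\log n/n})$ uniformly in $i$.
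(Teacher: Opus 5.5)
Your proof is correct and is the same as the paper's: a Gaussian tail bound $\P(|J_{ij}|>t)\le 2\exp(-nt^2/(2\beta^2))$ for each off-diagonal entry, then a union bound over the $\binom n2$ pairs. Your closing quantitative remark is not needed for the lemma. If you want it uniform in $i$, you need the row sums controlled for all rows at once, and the simplest way is $\sum_j J_{ij}^2=(J^2)_{ii}\le\|J\|_{\op}^2$.
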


\begin{proof}
Fix $t>0$. Then the Gaussian tail bound gives
\[
  \mathbb P\bigl(|J_{ij}|>t\bigr)
  =
  \mathbb P_{Z \sim \mathcal{N}(0,1)}\left(|Z|>\frac{t\sqrt n}{\beta}\right)
  \le
  2\exp\left(-\frac{nt^2}{2\beta^2}\right).
\]

By taking a union bound over the $\frac{n(n-1)}{2}$ off-diagonal entries, we get 
\begin{align*}
  \mathbb P\left(
    \max_{1\le i,j\le n}|J_{ij}|>t
  \right) =
  \mathbb P\left(
    \max_{1\le i<j\le n}|J_{ij}|>t
  \right) \le
  \sum_{1\le i<j\le n}
  \mathbb P\bigl(|J_{ij}|>t\bigr)\le
  \frac{n(n-1)}{2}\exp\left(-\frac{nt^2}{2\beta^2}\right),
\end{align*}
which converges to zero as $n\to\infty$.
\end{proof}

\begin{lemma}[GOE eigenvector delocalization]\label{lem:sparse-delocalization}
    Let $\delta \in (0,1)$ and $\rho, \beta > 0$. Suppose $J$ is $\beta$-GOE. Then with high probability over the randomness in $J$, for all $\bz \in \R^n$, we have 
        \[ \bz^\top (2 \beta I - J)\bz \geq \frac{\pi \beta \rho}{6.1(\beta + \rho)}\delta^3 \cdot \| \bz \|_2^2 - \rho \norm{\bz}^2_{\leq \delta n}.\]
\end{lemma}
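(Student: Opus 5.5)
We argue through the spectral decomposition of $J$ and split $\bz$ into the part along the top of the spectrum and the rest. Write $J=\sum_k \lambda_k \mathbf{u}_k\mathbf{u}_k^\top$ and fix a threshold $\eta>0$, to be tuned later. Let $T=\{k:\lambda_k\ge 2\beta-\eta\}$ be the top eigen-window and $\Pi_T$ the projection onto $\mathrm{span}\{\mathbf u_k:k\in T\}$. By Proposition~\ref{prop:Bai-Yin}, w.h.p.\ $\lambda_{\max}(J)\le 2\beta+o(1)$, so $2\beta I-J\succeq -o(1)I$. Off the window the gap is at least $\eta$, which gives
\[
\bz^\top(2\beta I-J)\bz\;\ge\;\eta\,\|(I-\Pi_T)\bz\|_2^2-o(1)\|\bz\|_2^2 .
\]
It therefore suffices to show that if $\|\Pi_T\bz\|_2^2$ is a large fraction of $\|\bz\|_2^2$, then $\norm{\bz}^2_{\le\delta n}$ is comparable to $\|\bz\|_2^2$.

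The key input is semicircle-law control near the edge. The semicircle density for $\beta$-GOE near $2\beta$ behaves like $\frac{1}{\pi\beta^{3/2}}\sqrt{2\beta-\lambda}$. Integrating gives $|T|\approx \frac{2}{3\pi}(\eta/\beta)^{3/2} n$ w.h.p.; this is the source of the factor $\pi$ and of the cubic power once $\eta$ is chosen $\propto \delta^2$. The second input is uniform delocalization of the window projection. For any fixed set $S\subseteq[n]$ with $|S|=\delta n$, we need
\[
\|\Pi_T \one_S\text{-restriction}\|_{\op}^2=\big\|\Pi_T D_S\big\|_{\op}^2\;\le\; c\,\frac{|T|}{n}\;\text{or}\;\approx \frac{|T|}{n}+\text{small},
\]
where $D_S$ is the coordinate projection onto $S$. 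Heuristically, Haar-distributed eigenvectors make $\Pi_T$ behave like a uniformly random $|T|$-dimensional projection, so $\|D_S\Pi_T D_S\|_{\op}\lesssim (\sqrt{|T|/n}+\sqrt{\delta})^2$. I would prove this by first invoking the rotational invariance of GOE, so that the eigenvector basis is Haar and independent of the eigenvalues. I would then apply concentration for random projections (a Gaussian-matrix singular value bound) together with a union bound over the $\binom{n}{\delta n}$ sets $S$. The entropy $e^{n\delta\log(e/\delta)}$ must be beaten by the concentration, which forces the bound to hold only up to a constant-factor loss. This is the main obstacle. The union bound over sparse supports is what costs powers of $\delta$, and getting a clean explicit constant like $6.1$ requires careful bookkeeping.

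Combining the two inputs goes as follows. Let $S$ be the set of the $(1-\delta)n$ largest coordinates of $\bz$ (in absolute value), so that $\|\bz_{S^c}\|^2=\norm{\bz}^2_{\le\delta n}$ and $\bz=\bz_S+\bz_{S^c}$. Then
\[
\|\Pi_T\bz\|\le \|\Pi_T\bz_S\|+\|\bz_{S^c}\|.
\]
The set $S$ is large rather than small, so we apply the delocalization bound instead to show $\|\Pi_T\bz\|^2\le(1-\theta)\|\bz\|^2+C\|\bz_{S^c}\|^2$, using $|T|/n$ small. Equivalently, we show $\|(I-\Pi_T)\bz\|^2\ge \theta\|\bz\|^2-C'\norm{\bz}^2_{\le\delta n}$, with $\theta$ depending on $\delta$ and $|T|/n$. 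Substituting into the first display gives
\[
\bz^\top(2\beta I-J)\bz\ge \eta\theta\|\bz\|^2-\eta C'\norm{\bz}^2_{\le\delta n}.
\]
Finally, we choose $\eta$ in terms of $\rho$ so that $\eta C'\le\rho$. This balancing gives the harmonic-type factor $\beta\rho/(\beta+\rho)$, with $\eta\theta$ producing the $\delta^3$ term. The $o(1)$ error is absorbed for large $n$.
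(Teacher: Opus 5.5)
\textbf{There is a genuine gap.} The step that carries all the content is missing, and the input you state points the wrong way.

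\textbf{The delocalization input.} A bound like $\|D_S\Pi_TD_S\|_{\op}\lesssim(\sqrt{|T|/n}+\sqrt{\delta})^2$ for $|S|=\delta n$ says that window vectors cannot \emph{concentrate} on a small set. Your combining step needs the opposite: that they cannot \emph{avoid} any small set. Precisely, you need $\theta_0:=\min_{|A|=\delta n}\min_{\mathbf v\in\operatorname{range}\Pi_T,\,\|\mathbf v\|=1}\|\mathbf v_A\|^2>0$, equivalently $\|\Pi_TD_S\|_{\op}^2\le 1-\theta_0$ for every $|S|=(1-\delta)n$. The former does not imply the latter: a vector spread evenly over $(1-\delta)n$ coordinates and vanishing on the rest is delocalized, yet has $\theta_0=0$.

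The route you sketch cannot prove this either. Let $U_T$ be an orthonormal basis of the window. For a fixed $A$, $\sigma_{\min}(D_AU_T)^2\approx(\sqrt{\delta}-\sqrt{|T|/n})^2$. But a concentration tail $e^{-cnt^2}$ beats the $\binom{n}{\delta n}=e^{\Theta(n\delta\log(1/\delta))}$ subsets only once $t\gtrsim\sqrt{\delta\log(1/\delta)}$, which already exceeds the mean. The worst-case value is genuinely much smaller: already for the top eigenvector, $\theta_0\le\norm{\mathbf u_1}^2_{\le\delta n}\approx\pi\delta^3/6$. So you would need small-ball estimates plus a net over the window, and the loss is polynomial in $\delta$, not a constant factor. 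Separately, the paper's $\beta$-GOE has zero diagonal, so it is not exactly orthogonally invariant. This is fixable by an $O(\sqrt{\log n/n})$ diagonal perturbation, but you have to say so.

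\textbf{The constants.} Even granting $\theta_0$, the scheme cannot reach the stated constant. This matters: in Section~\ref{sec:proof-main} the lemma is applied with $\rho=0.004693$ and $(1-\delta)^3=0.358^3$, where $c>1.098\cdot 10^{-4}$ must beat $1.09\cdot 10^{-4}$.
\begin{itemize}
\item \emph{Small $\rho$.} Your bound $\|\Pi_T\bz\|\le\sqrt{1-\theta_0}\,\|\bz_S\|+\|\bz_{S^c}\|$ gives no information once $\|\bz_{S^c}\|\gtrsim\tfrac{\theta_0}{2}\|\bz\|$. Hence any pair $(\theta,C')$ derived from it satisfies $\theta\lesssim C'\theta_0^2/4$. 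Then $\eta C'\le\rho$ forces $\eta\theta\lesssim\rho\theta_0^2/4=O(\rho\delta^6)$, against the claimed $\approx\frac{\pi}{6.1}\rho\delta^3$ for $\rho\ll\beta$.
\item \emph{Large $\rho$.} The multiplier is the window width $\eta$, not $\beta$. Moreover $\eta$ must keep $|T|<\delta n$, since otherwise some window vector vanishes on $\delta n$ coordinates and $\theta_0=0$. That means $\eta\lesssim\beta\delta^{2/3}$, costing another power of $\delta$.
\item \emph{Provenance of $\pi$ and $\delta^3$.} Your attribution does not hold up. The semicircle count puts $\pi$ in the denominator of $|T|/n$. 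And $\eta\propto\delta^2$ times $\theta\le\theta_0\lesssim\delta^3$ gives $\delta^5$, not $\delta^3$.
\end{itemize}
Arguing with angles instead, $\|\bz_{S^c}\|\ge\sqrt{\theta_0}\,\|\Pi_T\bz\|-\|(I-\Pi_T)\bz\|$, improves the combination to about $\frac{\eta\rho}{\eta+\rho}\theta_0$. This still has $\eta$ in place of $\beta$.

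\textbf{What the paper does.} It avoids eigenvectors entirely.
\begin{itemize}
\item Sudakov--Fernique, with the deterministic shift $m_{\bz}=-\rho\norm{\bz}^2_{\le\delta n}$, compares $\bz^\top J\bz$ on the sphere with the linear process $\frac{2\beta}{\sqrt n}\bg^\top\bz$.
\item Completing the square, $2\beta\bar{\bg}^\top\bz=\beta(1+\|\bar{\bg}\|^2)-\beta\|\bar{\bg}-\bz\|^2$, together with $\norm{\bar{\bg}}_{\le\delta n}\le\|\bar{\bg}-\bz\|+\norm{\bz}_{\le\delta n}$, yields the factor $\frac{\beta\rho}{\beta+\rho}$.
\item The bound $\E\norm{\bg}^2_{\le\delta n}\ge n\pi\delta^3/6$, coming from the Gaussian density at $0$, yields $\pi\delta^3$.
\item Borell--TIS supplies concentration, turning $6$ into $6.1$.
\end{itemize}
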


One way of interpreting the above statement is that if $\bz$ is an eigenvector of $J$ with eigenvalue close to $\lambda_{\max}(J) \approx 2 \beta$, then we expect $\bz$ to have its mass ``spread out'' among its coordinates and in particular have substantial mass in every subset of $\delta n$ coordinates. 

\begin{proof}[Proof of Lemma~\ref{lem:sparse-delocalization}]
    For $\bz \in \mathbb{S}^{n-1}$, we define the centered Gaussian processes 
    \[ X_{\bz} = \bz^\top J \bz \quad\text{and}\quad Y_{\bz} = \frac{2\beta}{\sqrt{n}} \bg^\top \bz \quad\text{where} \; \bg\sim \mathcal{N}(0,I_n). \]

    We note that for $\bz,\tilde{\bz} \in \mathbb{S}^{n-1} $ that 
        \begin{align*}
        \E\big[(X_{\bz}-X_{\tilde{\bz}})^2 \big] &=4\,\E\bigg[\big(\sum_{i<j}J_{ij}(z_iz_j-\tilde{z}_i\tilde{z}_j)\big)^2\bigg]\\&\underset{}{=} \frac{4 \beta^2}{n} \sum_{i < j} (z_i z_j - \tilde{z}_i\tilde{z}_j)^2 \leq \frac{2\beta^2}{n} \big\| \bz\bz^\top - \tilde{\bz}\tilde{\bz}^\top \big \|_F^2 \\
        &= \frac{4\beta^2}{n} \big[\big(1 - (\bz^\top \tilde{\bz})^2 \big) \big] \underset{1-x^2 \leq 2(1-x)}{\leq} \frac{4\beta^2}{n} \big[ 2\big(1-\bz^\top \tilde{\bz} \big) \big] \\
        &= \frac{4 \beta^2}{n} \|\bz - \tilde{\bz} \|_2^2 = \E \big[ (Y_{\bz} - Y_{\tilde{\bz}})^2 \big].
    \end{align*}

    \begin{theorem}[Variant of Sudakov-Fernique comparison inequality {\cite{V00}}]\label{thm:Sudakov-Fernique}
    Let $\cI$ be a countable index set. Suppose $\{X_i\}_{i \in \cI}$ and $\{Y_i \}_{i \in \cI}$ are two mean-zero Gaussian processes  such that 
    \[ \E(X_i - X_j)^2 \leq \E(Y_i - Y_j)^2 \; \forall i,j\in\cI.\]
    Let $\{m_i \}_{i \in \cI}$ be such that $m_i \in \R$ for $i \in \cI$ independent of the underlying probability measure. Then 
    \[ \E \sup_i (X_i + m_i) \leq \E \sup_i (Y_i + m_i)\; \forall i\in\cI.\]
\end{theorem}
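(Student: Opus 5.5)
The plan is to prove the comparison first for a finite index set $\cI=\{1,\dots,N\}$ by Gaussian interpolation with a smoothed maximum, and then pass to countable $\cI$ by monotone convergence. The deterministic shifts $m_i$ cause no difficulty: they only reweight the softmax probabilities. The key point is that the difference of the two Gaussian processes enters only through the increments $\E(X_i-X_j)^2$ and never through the variances separately.

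\emph{Setup.} For the finite case, I may assume $X$ and $Y$ are independent, since only the marginal laws matter. For $\theta>0$ define the soft maximum
\[ F_\theta(\bx)=\theta^{-1}\log\sum_{i=1}^N e^{\theta(x_i+m_i)}. \]
It satisfies
\[ \max_i(x_i+m_i)\le F_\theta(\bx)\le \max_i(x_i+m_i)+\theta^{-1}\log N. \]
Its derivatives are $\partial_iF_\theta=p_i$ and $\partial_{ij}F_\theta=\theta(p_i\delta_{ij}-p_ip_j)$, where $p_i=e^{\theta(x_i+m_i)}/\sum_k e^{\theta(x_k+m_k)}$. Set $Z(t)=\sqrt t\,X+\sqrt{1-t}\,Y$ and $\phi(t)=\E F_\theta(Z(t))$.

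\emph{Derivative of $\phi$.} By Gaussian integration by parts (Stein's lemma, justified because $F_\theta$ has bounded first and second derivatives),
\[ \phi'(t)=\tfrac12\sum_{i,j}\big(\E X_iX_j-\E Y_iY_j\big)\,\E\,\partial_{ij}F_\theta(Z(t)). \]
Write $\Delta_{ij}=\E X_iX_j-\E Y_iY_j$. Substituting the Hessian and using $\sum_jp_j=1$ to symmetrize the diagonal part,
\[ \sum_{i,j}\Delta_{ij}(p_i\delta_{ij}-p_ip_j)=\tfrac12\sum_{i,j}p_ip_j(\Delta_{ii}+\Delta_{jj}-2\Delta_{ij}). \]
The bracket equals $\E(X_i-X_j)^2-\E(Y_i-Y_j)^2\le 0$ by hypothesis. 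Hence $\phi'(t)\le 0$, so $\E F_\theta(X)=\phi(1)\le\phi(0)=\E F_\theta(Y)$. Sandwiching with the softmax bounds and letting $\theta\to\infty$ gives $\E\max_i(X_i+m_i)\le\E\max_i(Y_i+m_i)$ for every finite index set.

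\emph{Countable index sets.} Enumerate $\cI$ and let $\cI_N$ be the first $N$ indices. Then $\max_{i\in\cI_N}(X_i+m_i)$ increases to $\sup_{i\in\cI}(X_i+m_i)$, and it is bounded below by the integrable variable $X_{i_1}+m_{i_1}$. The same holds for $Y$. Monotone convergence applies on both sides, allowing the value $+\infty$, and the inequality passes to the limit.

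\emph{Main obstacle.} The step needing care is the algebraic identity that converts the covariance differences $\Delta_{ij}$ into increment differences. This is exactly where $\sum_ip_i=1$ is used, and it is why only the hypothesis on increments is needed rather than a hypothesis on covariances. The rest is routine. The only technical points are the integrability needed for Stein's lemma and for differentiating $\phi$ under the expectation, both of which are immediate for finite $N$ since $F_\theta$ is smooth with bounded derivatives.
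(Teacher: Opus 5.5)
The paper gives no proof of this statement. It imports it from Vitale \cite{V00}, whose result extends Sudakov--Fernique to Gaussian vectors with equal but not necessarily zero means. The version here follows by applying that result to $X_i+m_i$ and $Y_i+m_i$: with equal means, $\E\bigl((X_i+m_i)-(X_j+m_j)\bigr)^2=\E(X_i-X_j)^2+(m_i-m_j)^2$, so the hypothesis is unchanged.

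Your smart-path argument is a genuinely self-contained alternative, and it is correct. The interpolation formula for $\phi'(t)$, the symmetrization $\sum_{i,j}\Delta_{ij}(p_i\delta_{ij}-p_ip_j)=\tfrac12\sum_{i,j}p_ip_j(\Delta_{ii}+\Delta_{jj}-2\Delta_{ij})$ via $\sum_j p_j=1$, and the resulting sign are all right. You also correctly note that the shifts enter only through the weights $p_i$ and leave the form of the Hessian untouched. The monotone-convergence step for countable $\cI$ is valid too, including when the right-hand side is $+\infty$, because the partial maxima increase and are bounded below by the integrable variable $X_{i_1}+m_{i_1}$. This is exactly what the paper needs when it applies the theorem to a countable dense subset of $\mathbb{S}^{n-1}$.

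One detail to tidy. Your formula for $\phi'(t)$ holds only on $(0,1)$, since differentiating $\sqrt{t}$ and $\sqrt{1-t}$ produces the factors $t^{-1/2}$ and $(1-t)^{-1/2}$. So conclude $\phi(1)\le\phi(0)$ from $\phi'\le 0$ on $(0,1)$ together with continuity of $\phi$ on $[0,1]$. That continuity follows by dominated convergence, because $F_\theta$ is $1$-Lipschitz in the $\ell^\infty$ norm (its gradient is a probability vector).

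Compared with the citation, your route costs some length. In exchange it makes transparent why deterministic shifts are harmless. The same computation, with $\theta$ optimized, also gives Chatterjee's quantitative bound when the increment hypothesis holds only up to an additive error.
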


    Applying above to $\{X_{\bz}\}_{\bz \in \mathbb{S}^{n-1}}$, $\{Y_{\bz} \}_{\bz \in \mathbb{S}^{n-1}}$ and $m_{\bz} = -\rho \norm{\bz}^2_{\leq \delta n}$ for $\bz \in \mathbb{S}^{n-1}$, we obtain 
    \begin{equation}\label{eq:quad-to-lin}
        \E \sup_{\bz \in \mathbb{S}^{n-1}} \bigg \{ X_{\bz} - \rho \norm{\bz}^2_{\leq \delta n} \bigg \} \leq \E \sup_{\bz \in \mathbb{S}^{n-1}} \bigg \{ \frac{2\beta}{\sqrt{n}}g^\top \bz - \rho \norm{\bz}^2_{\leq \delta n } \bigg\}.
    \end{equation}
    We note that $\mathbb{S}^{n-1}$ is not enumerable but we can apply the theorem to a countable dense subset of $\mathbb{S}^{n-1}.$ First, we bound the RHS:
\begin{claim}\label{claim:expectedXz}
    $$ \E \sup_{\bz \in \mathbb{S}^{n-1}} \bigg \{ \frac{2\beta}{\sqrt{n}}\bg^\top \bz - \rho  \norm{\bz}^2_{\leq \delta n} \bigg\} \leq 2\beta-\frac{\pi \beta\rho}{120(\beta+\rho)}\delta^3$$
\end{claim}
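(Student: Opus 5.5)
The plan is to bound the supremum pointwise in $\bg$ by a deterministic inequality, and only then take expectations. Write $\bu=\bg/\sqrt n$ and $\bw=\bu/\|\bu\|_2$. The key structural fact is a triangle-type inequality for the quantity $s(\bz):=\big(\norm{\bz}^2_{\leq \delta n}\big)^{1/2}=\min_{|T|=\delta n}\|\bz_T\|_2$. Let $T_{\bz}$ be a minimizing set for $\bz$. Then
\[ s(\bw)\le \|\bw_{T_{\bz}}\|_2\le \|\bz_{T_{\bz}}\|_2+\|(\bw-\bz)_{T_{\bz}}\|_2\le s(\bz)+\|\bw-\bz\|_2 .\]
So if $\bz$ is close to the direction of $\bg$, it must inherit most of the ``small-coordinate mass'' of $\bw$.

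For a fixed unit $\bz$, set $d=\|\bz-\bw\|_2$. Since $\langle \bw,\bz\rangle=1-d^2/2$, we have
\[ \frac{2\beta}{\sqrt n}\bg^\top\bz-\rho\norm{\bz}^2_{\leq\delta n}=2\beta\|\bu\|_2\big(1-\tfrac{d^2}{2}\big)-\rho\, s(\bz)^2\le 2\beta\|\bu\|_2-\beta\|\bu\|_2 d^2-\rho\big(s(\bw)-d\big)_+^2 .\]
Minimizing $\beta\|\bu\| d^2+\rho(a-d)_+^2$ over $d\ge0$ with $a=s(\bw)$ gives the value $\frac{\beta\|\bu\|\rho}{\beta\|\bu\|+\rho}a^2$. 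Hence, pointwise,
\[ \sup_{\bz\in\mathbb S^{n-1}}\Big\{\tfrac{2\beta}{\sqrt n}\bg^\top\bz-\rho\norm{\bz}^2_{\le\delta n}\Big\}\le 2\beta\|\bu\|_2-\frac{\beta\rho\,\|\bu\|_2}{\beta\|\bu\|_2+\rho}\cdot\frac{\norm{\bu}^2_{\le\delta n}}{\|\bu\|_2^2}.\]

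Next I take expectations. The first term uses $\E\|\bu\|_2\le(\E\|\bu\|_2^2)^{1/2}=1$. For the subtracted term, I will restrict to the event $\|\bu\|_2^2\in[1-\eta,1+\eta]$, which has probability $1-e^{-\Omega(n)}$ by $\chi^2$ concentration; off this event I simply drop the term, since it is nonnegative. On the event, the prefactor is at least $\frac{\beta\rho}{\beta+\rho}(1-O(\eta))$.

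It remains to lower bound $\E\norm{\bu}^2_{\le\delta n}=\E\frac1n\sum_{i\le\delta n}g_{(i)}^2$, where $|g_{(1)}|\le|g_{(2)}|\le\cdots$ are the order statistics of $|g_i|$. By concentration of empirical quantiles (DKW, or Bernstein for the counts $\#\{i:|g_i|\le t\}$), this is $\ge \E[Z^2\one_{|Z|\le q}]-o(1)$, where $q$ solves $2\Phi(q)-1=\delta$ with a slight downward adjustment for the quantile fluctuation. Since $\phi\le 1/\sqrt{2\pi}$ we get $q\ge\delta\sqrt{\pi/2}$, and on $[0,q]$ we have $\phi\ge\phi(1)$ because $q\le 1$ for $\delta<0.68$. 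Therefore
\[\E[Z^2\one_{|Z|\le q}]\ge \frac{2\phi(1)}{3}q^3\ge c\,\pi\,\delta^3\]
for an explicit $c$. Combining the pieces gives a bound of the form $2\beta-c'\frac{\pi\beta\rho}{\beta+\rho}\delta^3$, and the constant $1/120$ leaves room to absorb the $1-O(\eta)$ and $o(1)$ losses for $n$ large.

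The main obstacle I expect is this last step: turning the Gaussian order-statistic mass into a clean explicit constant, uniformly in $\delta\in(0,1)$, while controlling the random normalization $\|\bu\|_2$ and the quantile fluctuations. The deterministic reduction is short once the triangle inequality for $s(\cdot)$ is in hand. If $\delta$ near $1$ causes trouble with $q\le 1$, I would use monotonicity in $\delta$ and treat that range with a cruder bound.
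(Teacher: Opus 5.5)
Your argument is correct, and its deterministic core is the same as the paper's: the triangle inequality on the set of smallest coordinates of $\bz$, then merging the two quadratic penalties into the factor $\frac{\beta\rho}{\beta+\rho}$. The difference is in the expectation step.

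\textbf{First difference: the normalization.} The paper completes the square against the unnormalized $\bar{\bg}=\bg/\sqrt n$, via $2\beta\bar{\bg}^\top\bz=\beta(1+\|\bar{\bg}\|^2)-\beta\|\bar{\bg}-\bz\|^2$. Its pointwise bound $\beta(1+\|\bar{\bg}\|^2)-\frac{\beta\rho}{\beta+\rho}\|\bar{\bg}\|^2_{\le\delta n}$ therefore has a deterministic coefficient, and its expectation is computed exactly.

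Normalizing to $\bw$ gives you a marginally sharper pointwise bound, since $2\|\mathbf{u}\|\le 1+\|\mathbf{u}\|^2$. The price is a random prefactor and your $\chi^2$-concentration step. In that step you should also record, for your event $E=\{\|\mathbf{u}\|^2\in[1-\eta,1+\eta]\}$, that $\E[\|\mathbf{u}\|^2_{\le\delta n}\one_{E^c}]\le\E[\|\mathbf{u}\|^2\one_{E^c}]=e^{-\Omega(n)}$. This is needed because you afterwards lower bound the unrestricted expectation $\E\|\mathbf{u}\|^2_{\le\delta n}$ rather than its restriction to $E$.

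\textbf{Second difference: the order-statistic mass.} The paper avoids quantile concentration by using the deterministic inequality $\min_{|S|=\delta n}\sum_{i\in S}g_i^2\ge\delta n\, t^2-\sum_{i=1}^n(t^2-g_i^2)_+$. Its expectation is an explicit Gaussian integral, and the choice $t=\sqrt{\pi/2}\,\delta$ gives $\E\|\bg\|^2_{\le\delta n}\ge n\pi\delta^3/6$ for every $n$ and every $\delta$, with no case split.

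\textbf{What each route buys.} The paper's proof is non-asymptotic and yields the constant $\pi/6$ rather than $\pi/120$. Yours holds only for $n\ge n_0(\delta,\beta,\rho)$. With the crude lower bound on the Gaussian density over $[0,q]$, it gives $e^{-1/2}\pi\delta^3/6$ for $\delta\le 0.68$.

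Both suffice for the claim as stated, and the large-$n$ restriction is harmless for the with-high-probability use. Note, however, that the constant $6.1$ in Lemma~\ref{lem:sparse-delocalization}, and hence the tight numerics in Section~\ref{sec:proof-main}, actually rely on the $\pi/6$ delivered by the paper's proof. Your cruder constant would not support them.
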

\begin{proof}
Let $\bar{\bg}=\bg/\sqrt{n}$. First, using $\norm{\bz}=1$ we observe that
\begin{align}\label{eq:gtzbound}
    2\beta\bar{\bg}^\top \bz - \rho \norm{\bz}^2_{\leq \delta n}=\beta(1+\norm{\bar{\bg}}^2)-(\beta\norm{\bar{\bg}-\bz}^2+\rho\norm{\bz}^2_{\leq \delta n})\leq \beta(1+\norm{\bar{\bg}}^2)-\frac{\beta\rho}{\beta+\rho}\norm{\bar{\bg}}^2_{\leq \delta n}
\end{align}
To see the inequality, let $S$ be the smallest $\lfloor\delta n \rfloor$ coordinates of $\bz$ (in absolute value). Then, by triangle inequality,
$$ \norm{\bar{\bg}}_{\leq \delta n} = \sqrt{\sum_{i\in S} \bar{g}_i^2} \underset{\text{triangle ineq}}{\leq} \sqrt{\sum_{i\in S} (\bar{g}_i-z_i)^2} + \sqrt{\sum_{i\in S} z_i^2} \leq \norm{\bar{\bg}-z} + \norm{\bz}_{\leq \delta n}  $$
Therefore, by the Cauchy-Schwarz inequality, 
$$ \norm{\bar{\bg}}^2_{\leq \delta n} \leq \bigg(\frac{1}{\beta}+\frac{1}{\rho} \bigg)(\beta\norm{\bar{\bg}-\bz}^2+\rho\norm{\bz}_{\leq \delta n}^2)$$
Having \eqref{eq:gtzbound} and using $\E\norm{\bar{\bg}}^2=1$, we have that
$$\E \sup_{\bz \in \mathbb{S}^{n-1}} \bigg \{ \frac{2\beta}{\sqrt{n}}\bg^\top \bz - \rho  \norm{\bz}^2_{\leq \delta n} \bigg\} \leq 2\beta-\frac{\beta\rho}{\beta+\rho}\frac{\E\norm{\bg}^2_{\leq \delta n}}{n} 
$$
So, to finish the proof it is enough to show that $\E\norm{\bg}^2_{\leq \delta n}\geq n \frac{\pi\delta^3}{6}.$ 

For every $t > 0$, 

\begin{align*}
    \frac{\E \|\bg \|_{\leq \delta n}^2}{n} &\geq \frac{\E \min_{S \subset [n]:|S|=\lfloor\delta n \rfloor} \sum_{i\in S} \min \{ g_i^2 ,t^2 \}}{n} \\
    &\geq \frac{\E \min_{S:|S| =\lfloor\delta n \rfloor} \bigg\{\delta n t^2 - \sum_{i=1}^n (t^2 - g_i^2)_+ \bigg \}}{n} \\
    &= \delta t^2 - \E_{g \sim \mathcal{N}(0,1)} (t^2 - g^2)_+ = \delta t^2 - \int_{-t}^t (t^2 - x^2) \frac{\exp(-x^2/2)}{\sqrt{2\pi}}dx \\
    &\underset{e^{-x^2/2}\leq 1}{\geq} \delta t^2 - \frac{4t^3}{3\sqrt{2} \pi} \underset{t = \sqrt{\pi/2}\delta}{= }\frac{\pi}{6} \cdot \delta^3. 
\end{align*}
where we write $(a)_+=\max\{a,0\}.$
\end{proof}

Next, we use the following theorem to prove that the LHS is concentrated around its expectation.

    \begin{theorem}[Borell-TIS inequality {\cite{Bor76,TIS76}}, see {\cite{Sel24}} for non-centered version]\label{thm:Borell}
        Let $\{X_i\}_{i\in {\cal I}}$ be a not necessarily centered Gaussian process. Suppose for every $i\in {\cal I}$, $\Var X_i \leq \sigma^2.$ Then, 
        $$ \P \big(\max_{i\in {\cal I}} X_i - \E\big[\max_{i\in {\cal I}} X_i]\geq t \big)\leq \exp(-t^2/(8\sigma^2)).$$
    \end{theorem}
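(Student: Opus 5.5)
The plan is to reduce the statement to Gaussian concentration for Lipschitz functions, first on finite index sets and then on countable ones by a limiting argument. This is all the application in Lemma~\ref{lem:sparse-delocalization} needs, since there we work on a countable dense subset of $\mathbb{S}^{n-1}$. The constant $8$ in the exponent is weaker than the sharp $2$, which leaves room to use a soft form of Gaussian concentration.

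\emph{Finite index sets.} First take ${\cal I}=\{1,\dots,N\}$. The vector $(X_1,\dots,X_N)$ is jointly Gaussian, so we can write $X_i = m_i + \langle \mathbf a_i, \bg\rangle$ with $\bg\sim\mathcal N(0,I_N)$, $m_i=\E X_i$, and the $\mathbf a_i$ the rows of a square root of the covariance matrix. Then $\|\mathbf a_i\|_2^2=\Var X_i\le\sigma^2$. Define $F(\bg)=\max_i\,(m_i+\langle \mathbf a_i,\bg\rangle)$. A maximum of affine functions whose slopes have norm at most $\sigma$ is $\sigma$-Lipschitz in $\ell_2$: for any $\bg,\bg'$ we have $|F(\bg)-F(\bg')|\le \max_i |\langle \mathbf a_i,\bg-\bg'\rangle|\le\sigma\|\bg-\bg'\|_2$. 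The non-centered means $m_i$ play no role here, since they are only additive constants inside the maximum; this is why the non-centered version costs nothing.

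\emph{Concentration step.} Next apply Gaussian concentration for Lipschitz functions. If we allow ourselves the Tsirelson--Ibragimov--Sudakov inequality, we get $\P(F-\E F\ge t)\le e^{-t^2/(2\sigma^2)}$ directly. To keep the argument self-contained, I would instead use the Maurey--Pisier interpolation. Let $\bg'$ be an independent copy of $\bg$, set $\bg_\theta=\bg\sin\theta+\bg'\cos\theta$, and note that its derivative $\bg_\theta'$ is again standard Gaussian and independent of $\bg_\theta$. Smoothing $F$ if needed, one obtains for every $\lambda>0$
\[ \E e^{\lambda(F(\bg)-\E F)}\le \E e^{\lambda(F(\bg)-F(\bg'))}\le \exp\Big(\frac{\pi^2\lambda^2\sigma^2}{8}\Big). \]
The Chernoff bound with the optimal $\lambda$ then gives $\P(F-\E F\ge t)\le \exp(-2t^2/(\pi^2\sigma^2))$. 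Since $\pi^2/2<8$, this is stronger than the claimed $\exp(-t^2/(8\sigma^2))$.

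\emph{Countable index sets.} Enumerate ${\cal I}$ and let $M_N$ be the maximum over the first $N$ indices, with $M=\sup_i X_i$. Then $M_N\uparrow M$ pointwise. If $\E M<\infty$, monotone convergence gives $\E M_N\to\E M$, and
\[ \P(M-\E M> t)\le\liminf_N \P(M_N-\E M_N> t), \]
so the finite bound passes to the limit. Strict versus non-strict inequality is handled by letting $t$ decrease. I expect the main obstacle to be precisely this limiting step, namely making sure $\E\sup_i X_i$ is finite so that the statement is meaningful. In our application this is immediate, because the process is bounded: $|\bz^\top J\bz|\le\|J\|_{\op}$ and $\norm{\bz}^2_{\leq\delta n}\le1$. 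In general I would add finiteness of $\E\sup_i X_i$ as an implicit hypothesis.
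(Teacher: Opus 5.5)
Your proof is correct, but it takes a different route from the paper. The paper gives no proof at all: it invokes the classical Borell--TIS inequality as a black box (Gaussian isoperimetry, or the Tsirelson--Ibragimov--Sudakov argument) and points to Sellke's notes for the non-centered form. The sharp classical bound is $\exp(-t^2/(2\sigma^2))$, and the stated $\exp(-t^2/(8\sigma^2))$ is a weakening of it.

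You instead reduce to concentration of the $\sigma$-Lipschitz function $F(\bg)=\max_i(m_i+\langle \mathbf a_i,\bg\rangle)$, and prove that concentration from scratch via Maurey--Pisier interpolation. This gives $\exp(-2t^2/(\pi^2\sigma^2))$, which suffices since $\pi^2/2<8$. Your remark that the means $m_i$ enter only as additive constants inside the maximum is exactly why the non-centered case costs nothing.

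Your route buys self-containedness and an explicit treatment of two points the citation leaves implicit. One is the passage from finite to countable index sets: monotone convergence for $\E M_N\uparrow\E M$, Fatou for the tail, then $t'\uparrow t$. The other is the needed finiteness of $\E\sup_i X_i$, which indeed holds in the application of Lemma~\ref{lem:sparse-delocalization}. Your route costs a factor $\pi^2/4$ in the exponent relative to the sharp bound, which is irrelevant there.

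The ``smoothing if needed'' step is harmless. $F$ is differentiable off a finite union of hyperplanes, with $\|\nabla F\|_2\le\sigma$ wherever it is differentiable. Almost surely the path $\theta\mapsto\bg_\theta$ meets those hyperplanes at only finitely many $\theta$, so the interpolation identity holds without regularization.
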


For convenience let $X_{\bz}' = X_{\bz} - \rho \|\bz \|_{\leq \delta n}^2 $. Then, 
\begin{equation}\label{eq:VarXz}
    \Var(X_{\bz}')=\E X_{\bz}^2=\E\Big[\Big(\sum_{i<j}2J_{ij}z_iz_j\Big)^2\Big]=\frac{2\beta}{n}\sum_{i,j} 2z_i^2z_j^2\leq \frac{2\beta}{n}\norm{\bz}^4=\frac{2\beta}{n}.
\end{equation}

Consequently, this implies that 
    \begin{align}\label{eq:high-prob-Xz}
        &\PP\bigg(\sup_{\bz \in \mathbb{S}^{n-1}} \big\{  X_{\bz}' \big\} \geq 2\beta -\frac{\pi \beta \rho}{6.1(\beta + \rho)} \delta^3\bigg) \notag \\&\underset{\text{Claim~\ref{claim:expectedXz}}}{\leq}  \PP\bigg(  \sup_{\bz \in \mathbb{S}^{n-1}} \big\{  X_{\bz}' \big\} - \E \sup_{\bz \in \mathbb{S}^{n-1}} \big\{  X_{\bz}'\big\}\geq \frac{\pi \beta \rho}{36.6(\beta + \rho)} \delta^3\bigg) \notag \\&\hspace{-2em}\underset{\text{Theorem~\ref{thm:Borell} and}\eqref{eq:VarXz}}{\leq} \exp\left( -\left( \frac{\pi \beta \rho}{14520(\beta + \rho)} \delta^3 \right)^2 \frac{n}{36.6\beta}  \right).
    \end{align}
Consequently, it follows that with probability $1 - o_n(1)$, we have 
\begin{align*}
    \bz^\top (2 \beta I - J) \bz =  2 \beta - X_{\tilde{\bz}}
    \underset{\eqref{eq:high-prob-Xz}}{\geq} \frac{\pi \beta \rho}{6.1(\beta + \rho)}\delta^3 - \rho \|\bz \|_{\leq \delta n}^2,
\end{align*}
with probability $1 - o_n(1)$. This proves the lemma. 
\end{proof}

In the context of Sherrington-Kirkpatrick $\mu_{J, \bh}$ with external field $\bh \in \R^n$, we define $H_i(\bx) = h_i + \sum_{j \neq i}J_{ij} x_j$ to be the local field felt by spin $i$. 

\begin{lemma}[Typical large local fields]\label{lem:typ-large}
Let $\beta>0$ and $J\in \R^{n\times n}$ be $\beta$-GOE. Suppose $0<\delta' < \delta <1 $ is such that 
\begin{equation}\label{eq:delta-delta-prime}
    D_{\mathrm{KL}}(\mathrm{Ber}(\delta) \parallel \mathrm{Ber}(\delta')) > \ln 2.
\end{equation}

Let $\bh \in \R^n$ be arbitrary. With high probability, for all $\bx \in \{\pm1 \}^n$, we have 
\[ \bigg| \bigg\{ i: |H_i(\bx)\big| \leq \beta \Phi^{-1}\big(\tfrac{1+\delta'}{2} \big)\bigg\} \bigg| < \delta n.\]
\end{lemma}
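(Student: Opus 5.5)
Fix $\bx\in\{\pm1\}^n$ and a set $S\subseteq[n]$ with $|S|=\lceil\delta n\rceil$. We bound the probability that every $i\in S$ has a small local field, $|H_i(\bx)|\le \beta t$, where $t:=\Phi^{-1}\big(\tfrac{1+\delta'}{2}\big)$, so that $\P[|Z|\le t]=\delta'$ for $Z\sim\mathcal N(0,1)$. Then we take a union bound over all $\bx$ and all $S$. The count is $2^n\binom{n}{\delta n}=\exp\big(n\ln 2+nH(\delta)+o(n)\big)$. So we will need the probability for fixed $(\bx,S)$ to be at most $\exp\big(-n(\delta\ln(1/\delta')+(1-\delta)\cdot 0)\,\cdots\big)$. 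More precisely, the natural target is $(\delta')^{\delta n}$ up to subexponential factors. Note that $D_{\mathrm{KL}}(\mathrm{Ber}(\delta)\|\mathrm{Ber}(\delta'))=\delta\ln(\delta/\delta')+(1-\delta)\ln\tfrac{1-\delta}{1-\delta'}$. Together with the entropy count this is exactly the exponent produced by a Chernoff bound on the number of small-field sites. So the cleaner route is the following. Show that for fixed $\bx$ the indicators $\1\{|H_i(\bx)|\le\beta t\}$ behave like (or are dominated by) independent $\mathrm{Ber}(\delta')$ variables. Then Chernoff gives $\P[\#\ge\delta n]\le \exp(-n D_{\mathrm{KL}}(\mathrm{Ber}(\delta)\|\mathrm{Ber}(\delta')))$, and the union over $2^n$ configurations succeeds by \eqref{eq:delta-delta-prime}.

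The key distributional step is as follows. For fixed $\bx$, $H_i(\bx)=h_i+\sum_{j\ne i}J_{ij}x_j$ is Gaussian with variance $\beta^2(n-1)/n$. By Anderson's inequality, centering at $0$ maximizes the small-ball probability. So, with the slight variance deficit, $\P[|H_i|\le\beta t]\le \P[|Z|\le t\sqrt{n/(n-1)}]=\delta'+O(1/n)$. The fields are not independent, since $J_{ij}$ appears in both $H_i$ and $H_j$. I will handle this by decoupling. Write $H_i=h_i+\sum_{j<i}J_{ij}x_j+\sum_{j>i}J_{ij}x_j$. Process the sites in $S$ in increasing order and condition on the entries $J_{ij}$ with $j<i$. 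Given these, the remaining part $\sum_{j>i}J_{ij}x_j$ is a fresh Gaussian, independent of the earlier sites' events only if we handle the overlap carefully. Instead I will take the simpler route and bound the joint small-ball probability directly. The vector $(H_i)_{i\in S}$ is Gaussian with covariance $\Sigma=\tfrac{\beta^2}{n}\big((n-1)I+E\big)$, where $E_{ij}=x_ix_j$ off the diagonal within $S$. This is a perturbation of $\beta^2 I$ of size $\beta^2|S|/n$ in operator norm, but it is rank-one up to the diagonal: $E=\bx_S\bx_S^\top-I$.

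To handle the dependence, I will write $(H_i)_{i\in S}=\bh_S+\mathbf{G}+\xi\,\bx_S$, where $\mathbf{G}\sim\mathcal N(0,\beta^2\tfrac{n-1-1}{n}I)$ and $\xi\sim\mathcal N(0,\beta^2/n)$ is a single shared scalar. This decomposition is valid because $\Sigma=\tfrac{\beta^2(n-2)}{n}I+\tfrac{\beta^2}{n}\bx_S\bx_S^\top$. Conditioning on $\xi$, the coordinates are independent, and each is a Gaussian of variance $\beta^2(1-2/n)$ with some shift. By Anderson's inequality each has small-ball probability at most $\P[|Z|\le t(1-2/n)^{-1/2}]=\delta'+O(1/n)$. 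Hence, for fixed $\bx$, the count of small-field sites in $[n]$ is stochastically dominated, conditionally on $\xi$, by $\mathrm{Bin}(n,\delta'+O(1/n))$. Strictly, for the whole index set $[n]$ the common factor must be applied to all of $[n]$; the same decomposition works with $S=[n]$. Chernoff then gives
$$\P\big[\#\{i:|H_i(\bx)|\le\beta t\}\ge\delta n\big]\le \exp\big(-n D_{\mathrm{KL}}(\mathrm{Ber}(\delta)\|\mathrm{Ber}(\delta'+O(1/n)))\big).$$
By continuity of the KL divergence in its second argument this is $\le e^{-n(\ln2+c)}$ for some $c>0$. Multiplying by $2^n$ finishes the proof.

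The main obstacle is the dependence between local fields through the shared $J_{ij}$. The plan relies on the covariance having the exact form $\tfrac{\beta^2}{n}\big((n-1)I+\bx\bx^\top-I\big)$, which allows a one-dimensional common factor plus independent noise. If the negative identity correction ever made the independent part ill-defined, it does not here, since $n-2>0$. A minor point is that \eqref{eq:delta-delta-prime} is strict, so the $O(1/n)$ slack is absorbed for $n$ large.
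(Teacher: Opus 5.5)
Your final argument is correct and is essentially the paper's proof. The shared structure is:
\begin{itemize}
\item the covariance $\frac{\beta^2}{n}\big((n-2)I+\bx\bx^\top\big)$ yields a single common Gaussian factor plus independent noise;
\item the shifts are removed by symmetric unimodality (Anderson);
\item a Chernoff bound against $\mathrm{Ber}(\delta'+O(1/n))$, followed by a union bound over the $2^n$ configurations, finishes.
\end{itemize}
Your conditional stochastic domination by $\mathrm{Bin}(n,\delta'+O(1/n))$ is a slightly cleaner phrasing of the paper's ``maximized at $h_i=0$, $Z_0=0$'' step. The opening detours (the union over sets $S$ and the ordered decoupling) are unnecessary and can be deleted.
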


\begin{proof}
  Fix a vector $\bx\in \{\pm1\}^n$. We will prove that
  \begin{equation}\label{eq:large-local-field}
      \P[|\{i:|H_i(\bx)|\leq\beta\Phi^{-1}(\tfrac{1+\delta'}{2})\}| \geq \delta n] \le \exp\Big(-nD_{\mathrm{KL}}\Big(\mathrm{Ber}(\delta) \parallel \mathrm{Ber}\Big( \delta'+o_n(1)\Big)\Big) \Big)
  \end{equation}

  Now running a union bound over all $2^n$ many $\bx$ vectors, we obtain that
  $$ \P[\exists\bx: |\{i:|H_i(\bx)|\leq\beta\Phi^{-1}(\tfrac{1+\delta'}{2})\}| \geq \delta n] \le 2^n \exp\Big(-nD_{\mathrm{KL}}\Big(\mathrm{Ber}(\delta) \parallel \mathrm{Ber}\Big( \delta'+o_n(1)\Big)\Big) \Big)\underset{\eqref{eq:delta-delta-prime}}{\leq} o(1),$$
where the last inequality uses the assumption that $D_\mathrm{KL}(\mathrm{Ber}(\delta) \parallel \mathrm{Ber}( \delta'))>\ln 2$.
  
 It remains to prove \eqref{eq:large-local-field}. 
The covariance matrix of
$(H_1(\bx),\ldots,H_n(\bx))$ is
\begin{equation}\label{eq:local-field-covariance}
\Cov\bigl(H_1(\bx),\ldots,H_n(\bx)\bigr)
=\frac{\beta^2}{n}\left((n-2)I_n+\bx\bx^{\top}\right).
\end{equation}

Note that $\E[H_i(\bx)^2]=\beta^2(n-1)/n=\beta^2(n-2)/n+x_i^2.$
Consequently, if $Z_0,Z_1,\ldots,Z_n$ are independent $\mathcal{N}(0,1)$ random variables, then $(H_1(\bx),\ldots,H_n(\bx))$ has the same law as
\begin{equation}\label{eq:local-field-representation}H_i'(\bx) = \bigg(h_i + \beta\sqrt{\frac{n-2}{n}}\,Z_i
+\frac{\beta}{\sqrt n}x_iZ_0 \bigg)_{i\in[n]}.
\end{equation}
So, to prove \eqref{eq:large-local-field} it is enough to show that 
\begin{equation}\label{eq:large-local-fielda}\P[\big| \big\{ i: |H_i'| \leq \beta \Phi^{-1}\big(\tfrac{1+\delta'}{2} \big)\big\} \big| \geq \delta n] \le \exp\Big(-nD_{\mathrm{KL}}\Big(\mathrm{Ber}(\delta) \parallel \mathrm{Ber}\Big( \delta'+o_n(1)\Big)\Big) \Big).
\end{equation}

For $Z_i,Z_0 \sim \mathcal{N}(0,1)$, since the distribution of $\beta\sqrt{(n-2)/n}+\beta/\sqrt{n}x_i Z_0$ is symmetric around the origin, the function
$$h_i\mapsto \P\bigg[\bigg| h_i+\beta\sqrt{\frac{n-2}{n}}Z_i+\frac{\beta}{\sqrt{n}} x_iZ_0 \bigg|\leq \beta\Phi^{-1}\big(\tfrac{1+\delta'}{2}\big)\bigg]$$ 
is  nonincreasing for $h_i\ge0$.  Since conditioned on $Z_0$, the random variables $H'_i(\bx)$ are independent, for a fixed $Z_0$, the probability of the above event is maximized when for all $i$, $h_i=0$. By a similar argument, the probability is maximized when $Z_0=0$. So, let $H''_i=\beta\sqrt{\frac{n-2}{n}}Z_i$ and note that it is independent of $\bx$. To prove \eqref{eq:large-local-fielda} it is enough to show that 
\begin{equation}\label{eq:H''bound} \P[\big| \big\{ i: |H_i''| \leq \beta \Phi^{-1}\big(\tfrac{1+\delta'}{2} \big)\big\} \big| \geq \delta n]\leq \exp\Big(-nD_{\mathrm{KL}}\Big(\mathrm{Ber}(\delta) \parallel \mathrm{Ber}\Big( \delta'+o_n(1)\Big)\Big) \Big).\end{equation}
Let $Y_i$ be the indicator random variable that $H''_i>\beta\Phi^{-1}\big(\tfrac{1+\delta'}{2} \big)$. Then for each $i \in [n]$, we have the Gaussian tail bound
\[ \PP(Y_i) \leq 2\Phi\left(\Phi^{-1}\big(\tfrac{1+\delta'}{2} \big)\sqrt{\frac{n}{n-2}}\right)-1 = \delta'+ o_n(1).\]
It follows by the Chernoff bound that 
\[ \P\bigg[ \sum_{i=1}^nY_i \geq \delta n \bigg] \underset{\delta' < \delta}{\leq} \exp\Big(-nD_{\mathrm{KL}}\Big(\mathrm{Ber}(\delta) \parallel \mathrm{Ber}\Big( \delta'+o_n(1)\Big)\Big) \Big).  \]
This proves \eqref{eq:H''bound} which concludes the proof.
\end{proof}

\section{Proof of Theorem~\ref{thm:gap-L}}\label{sec:proof-main}
In this section we prove Theorem \ref{thm:gap-L}. All expectations in this section are with respect to $\mu_J$ unless otherwise specified. We also fix $f \colon \{ \pm 1 \}^n \to \R$. 

First, we observe that
\begin{equation}\label{eq:LfLf-proof}
\langle \cL_J f,\cL_J f\rangle=\sum_{i} \E[(L_i f)^2] + \sum_{i \neq j} \E[L_ifL_j f] = \sum_i \E[(L_i f)^2] + \sum_{i \neq j} \E_X \big[ \E_{X_{-i-j}}[L_ifL_jf]], 
\end{equation}

where $\E_{X_{-i-j}}$ is with respect to $\mu_{X_{-i-j}}$ is the conditional measure on $(X_i,X_j)$ after pinning $X_{-i-j}$. In particular, for $\bx\in \{\pm1\}^n$ we have
 \begin{align}\label{eq:mu-two-site}
     \mu_{\bx_{-i,-j}}(x_i, x_j) &= \frac{\mu(\bx)}{\sum_{X_i,X_j \in \{ \pm 1 \}}\mu(\bx_{-i-j}, X_i, X_j)} \propto \exp\Big( \frac12 J_{ij}x_ix_j + \frac12 x_i\sum_{k \neq i,j}J_{ik}x_k  + \frac12 x_j\sum_{k \neq i,j}J_{jk}x_k  \Big), 
 \end{align}

The following theorem is our local rank-1-perturbation lemma; here, we rename $X_i,X_j$ to $X,Y$ to simplify notation ($L_X,L_Y$ are defined accordingly). We note that it has an additional $\E\big[\sol{\bX}(y)^2\cdot L_Xf(x,y)^2 + \sol{\bY}(x)^2\cdot  L_Yf(x,y)^2\big]$ term compared to the rank-1-perturbation trickledown condition (Definition~\ref{def:rank-1-perturbed}).
We remark that the following lemma is the most challenging part of our proof.
\begin{theorem}[Main Technical Lemma]\label{lem:main}
    Let $d,s\leq 1$ and $\lambda > 0$ be such that 
    \begin{equation}\label{eq:d-s-constraint}
         d \geq \max \{ -(s^2 + 2s -1), 2s^2 \}
    \end{equation}
    Then there exists $j(d,\lambda,s)$ with the following property. For all $J$ such that $|J| \leq j(d,\lambda,s)$ and $a,b \in \R$, 

    For all $f \colon \{ \pm 1\}^2\to \R$, we have 
    \begin{align}\label{eq:local-ineq} &\E\big[(L_X f)(L_Yf)\big] + J\cdot \E\big[M(L_Xf)(L_Yf)\big] + \frac{(\lambda J)^2}{2}\cdot \E\big[(L_Xf)^2
    + (L_Yf)^2\big]\notag\\&+ \frac{dJ^2}{2}\E\big[\sol{\bX}(y)^2\cdot L_Xf(x,y)^2 + \sol{\bY}(x)^2\cdot  L_Yf(x,y)^2\big] \geq 0.
    \end{align}
    where 
        $M(x,y) = \big(x+ s \sol{\bX}\big) \left(y+ s\sol{\bY} \right)$.
\end{theorem}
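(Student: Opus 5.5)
The plan is to reduce \eqref{eq:local-ineq} to a quadratic form in two scalars and then check its positive semidefiniteness by Taylor expansion in $\sol{J}$.

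\emph{Reduction to a $2\times 2$ quadratic form.} By Fact~\ref{fact:L-one-dim}, applied conditionally on $y$, we have $L_Xf(x,y)=\kappa_y\,(x-\sol{\bX}(y))$ with $\kappa_y=\frac{f(1,y)-f(-1,y)}{2}$. Similarly $L_Yf(x,y)=\eta_x\,(y-\sol{\bY}(x))$. The four numbers $\kappa_{\pm1},\eta_{\pm1}$ are linear in $f$, but $f$ has only four values and is defined modulo constants. So only three of them are independent: they satisfy $\kappa_1-\kappa_{-1}=\eta_1-\eta_{-1}$, both sides being the mixed difference of $f$. I will parametrize $\kappa_y=u+vy/2$ and $\eta_x=w+vx/2$, with $(u,w,v)$ free. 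Every term in \eqref{eq:local-ineq} is then an explicit quadratic form $Q_{\sol{J}}(u,w,v)$, whose coefficients are expectations under $\mu_{J,a,b}$ of polynomials in $x$, $y$, $\sol{\bX}(y)$ and $\sol{\bY}(x)$. Lemma~\ref{lem:mu} writes all of these as rational functions of $\sol{X}_0,\sol{Y}_0,\sol{J}$. The claim becomes $Q_{\sol{J}}\succeq 0$, uniformly in $a,b$ (equivalently in $\sol{X}_0,\sol{Y}_0\in(-1,1)$), for $|J|\le j(d,\lambda,s)$.

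\emph{Expansion in $\sol{J}$.} Next I expand each coefficient using the Master Lemma~\ref{lem:master} and Lemma~\ref{lem:bfx}. The $\sol{J}$-derivatives of the integrands are bounded uniformly in $a,b$, since $|\sol{X}_0|,|\sol{Y}_0|\le 1$ and $|\sol{J}|\le 1/2$, so the error terms are uniform.

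At $\sol{J}=0$ the measure is the product $\mu_{0,a,b}$. The cross term vanishes, and the two diagonal terms $\E(L_Xf)^2$ and $\E(L_Yf)^2$ carry the weights $\sigma_{X_0}^2,\sigma_{Y_0}^2$ on $u^2+v^2/4$ and $w^2+v^2/4$. So the leading behaviour is governed by the $\sol{J}$- and $\sol{J}^2$-coefficients.

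The exact cancellation from the overview is the guide here: multiplying by $(1+\sol{J}xy)(1+\sol{J}\sol{X}_0\sol{Y}_0)(1+\sol{J}\sol{X}_0y)(1+\sol{J}\sol{Y}_0x)$ turns $\E[(L_Xf)(L_Yf)]$ into a product of two $\mu_0$-means. I expect the $O(\sol{J})$ part of $\E[(L_Xf)(L_Yf)]$ to cancel exactly against $J\,\E[M(L_Xf)(L_Yf)]$ at $s=1$. With $s<1$, a first-order remainder proportional to $(1-s)\sol{J}\,\E[\sol{X}_0(L_Xf)\cdot\sol{Y}_0(L_Yf)]$ survives. This is precisely why the $\frac{dJ^2}{2}\E[\sol{\bX}^2(L_Xf)^2+\dots]$ term is present: by AM-GM, that cross term is dominated by $\frac{c}{2}\sol{X}_0^2(L_Xf)^2+\frac{1}{2c}\sol{J}^2\sol{Y}_0^2(L_Yf)^2$-type terms.

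\emph{Main obstacle: positivity at order $\sol{J}^2$.} The hardest step is showing that the $\sol{J}^2$ coefficient matrix, after the first-order cancellations, is positive semidefinite up to the $\lambda^2J^2$ slack. This is needed uniformly in $\sol{X}_0,\sol{Y}_0$, including near $\pm1$ where $\sigma_{X_0}\to 0$ and the diagonal degenerates.

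I would first collect the second-order terms, which are $-O(1)\sol{J}^2(\cdot)$ contributions from $\mu^R$, from $\sol{J}^2$ in $M$, and from expanding $\sol{\bX}$ and $\sol{\bY}$. I would then group them as a quadratic form in $(\sol{X}_0\kappa,\sol{Y}_0\eta,\sigma\kappa,\sigma\eta)$, completing squares against the $d$-term. The constraint \eqref{eq:d-s-constraint}, $d\ge 2s^2$ and $d\ge 1-2s-s^2$, should appear exactly as the condition that the resulting $2\times2$ blocks are nonnegative. One block comes from the $s^2\sol{\bX}\sol{\bY}$ part of $M$, the other from the $(1-s)$ leftover combined with the $\sol{J}^2$ curvature.

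The terms not absorbed this way have order $\sol{J}^2(\kappa^2+\eta^2)\sigma^2$. These are controlled by the strictly positive slack $\frac{\lambda^2J^2}{2}\E[(L_Xf)^2+(L_Yf)^2]$ once $|J|\le j$ makes the $O(\sol{J}^3)$ remainders smaller than $\lambda^2J^2$. Here I should track that each remainder carries a factor $\sigma_{X_0}^2$ or $\sigma_{Y_0}^2$ (they come from $L_Xf$, which vanishes when $\sigma_{X_0}=0$), so the slack dominates uniformly.

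If the hand computation becomes unwieldy, I would verify the PSD condition for the explicit $3\times3$ matrix symbolically, for instance via its principal minors as polynomials in $\sol{X}_0,\sol{Y}_0$.
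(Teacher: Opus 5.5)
Your reduction to a three-variable quadratic form is sound and is, up to a change of coordinates, the paper's. Writing $f=\alpha\phi(x)+\beta\phi(y)+\gamma\phi(x)\phi(y)$ with $\phi(X)=(X-\sol{X}_0)/\sigma_{X_0}$, one has $\gamma=\tfrac v2\sigma_{X_0}\sigma_{Y_0}$, $\alpha=\sigma_{X_0}(u+\sol{Y}_0v/2)$ and $\beta=\sigma_{Y_0}(w+\sol{X}_0v/2)$. The gap is in the order-by-order analysis you build on this form.

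First, the cross term does \emph{not} vanish at $\sol{J}=0$. One has $\E_{\mu_{0,a,b}}[(L_Xf)(L_Yf)]=\tfrac{v^2}{4}\sigma_{X_0}^2\sigma_{Y_0}^2=\gamma^2$; for $f=xy$ and $a=b=0$ it equals $1$. This is the only order-one term in \eqref{eq:local-ineq}. The $\lambda$- and $d$-terms, which you treat as the $\sol{J}=0$ diagonal, carry a prefactor $J^2$.

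Second, the first-order terms do not cancel for $s<1$. What survives is $-(1-s)J\gamma\bigl(\sol{X}_0\sigma_{Y_0}\alpha+\sol{Y}_0\sigma_{X_0}\beta\bigr)$, plus a harmless $O(J)\gamma^2$. This is a coupling between the mixed direction and the linear ones. It is not your $(1-s)\sol{J}\,\E[\sol{X}_0(L_Xf)\,\sol{Y}_0(L_Yf)]$: that expression vanishes identically when $b=0$, while the true term $-(1-s)J\sol{X}_0\alpha\gamma$ does not.

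Consequently your AM-GM split $\tfrac c2\sol{X}_0^2(L_Xf)^2+\tfrac1{2c}\sol{J}^2\sol{Y}_0^2(L_Yf)^2$ cannot be paid for, because the inequality contains no order-one $(L_Xf)^2$ term. For the same reason, ``the $\sol{J}^2$-coefficient matrix after first-order cancellations is PSD'' is the wrong criterion: an indefinite $O(\sol{J})$ term is never dominated by $O(\sol{J}^2)$ terms alone.

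The missing idea is the anisotropic scaling on which the paper's proof rests. The form has the following structure:
\begin{itemize}
\item it is of order $1$ in $\gamma$ and of order $J^2$ in $(\alpha,\beta)$;
\item the coupling between $\gamma$ and $(\alpha,\beta)$ is of order $J$;
\item the coupling between $\alpha$ and $\beta$ is of order $J^2$, since the first-order $\alpha\beta$ parts of $\E[R_XR_Y]$ and $J\E[MR_XR_Y]$ cancel for every $s$.
\end{itemize}
Hence one writes $A=\diag(J,J,1)(A_0+E)\diag(J,J,1)$ with $\|E\|_{\op}=O(|J|)$ uniformly in $a,b$, and proves $A_0\succeq cI$ with $c=c(\lambda,s)>0$. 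Equivalently, one takes the Schur complement of the order-one $\gamma^2$ block.

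Concretely, set $p=J\sol{X}_0\sigma_{Y_0}\alpha$ and $q=J\sol{Y}_0\sigma_{X_0}\beta$. Completing the square against $\gamma^2$ costs $\tfrac{(1-s)^2}{4}(p+q)^2$. This must be paid by the $d$-term, which is at least $\tfrac d2(p^2+q^2)$, together with the second-order coupling $-(s^2+2s-1)pq$. That gives exactly $d\ge-(s^2+2s-1)$ and $d-(s^2+2s-1)-(1-s)^2=d-2s^2\ge0$, i.e.\ \eqref{eq:d-s-constraint}. Then $\lambda>0$ provides the uniform margin that absorbs $E$.

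Working in the normalized basis, where $\E_{\mu_{0,a,b}}\phi(X)^2=1$, is also what makes this margin and the error bounds uniform as $\sigma_{X_0}\to0$. In your unnormalized $(u,w,v)$ coordinates the leading form itself degenerates there.
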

We defer the proof of Theorem~\ref{lem:main} to Section~\ref{sec:local}. But we would like to point out that the main advantage of the above statement is that it has no dependency on the external fields $a,b$ on the two state Ising measure $\mu_{J,a,b}.$

We will later choose $|d|, |s| \leq 1$ satisfying \eqref{eq:d-s-constraint}. By Observation~\ref{lem:max-entry}, all entries of $J$ are at most $j(d, \lambda, s)$ so by applying Theorem~\ref{lem:main} to $\mu_{x_{-i-j}}$,  we have
\begin{align}\label{eq:x-i-j} &\E_{x_{-i-j}}\big[(L_i f)(L_jf)\big] + J_{ij}\cdot \E_{x_{-i-j}}\big[M_{ij}(L_if)(L_jf)\big] + \frac{(\lambda J_{ij})^2}{2}\cdot \E_{x_{-i-j}}\big[(L_if)^2
    + (L_jf)^2\big] \notag \\&+ \frac{d J_{ij}^2}{2}\E_{x_{-i-j}}\big[\sol{\bX}_i^2\cdot (L_if)^2 + \sol{\bX}_j^2\cdot  (L_jf)^2\big] \geq 0.
\end{align}
For $1\leq i\leq n$ we write $\norm{J_i}^2=\sum_j J_{i,j}^2$.
Averaging the above inequality with respect to $\mu_{-i,-j}(X_{-i,-j})$ and summing over all $i,j \in [n]$, we obtain 
\begin{align}\label{eq:main}
    \sum_{i \neq j} \E[(L_if)(L_jf)]&\geq -\sum_{i \neq j} J_{ij}\E[M_{ij}(L_if)(L_jf)] - \lambda^2\sum_i\norm{J_i}^2\E[L_if^2] - 2d\sum_i \norm{J_i}^2\E[\sol{\bX}_i^2 L_if^2]\notag\\&\hspace{-1.7em}\underset{\text{Lemma}~\ref{lem:row-sum}}{\geq}-\sum_{i \neq j} J_{ij}\E[M_{ij}(L_if)(L_jf)] - \lambda^2(\beta^2+o(1)){\cal E}(f) - d(\beta^2+o(1))\sum_i \E[\sol{\bX}_i^2 L_if^2]\notag\\
    &=:-\sum_{i \neq j} J_{ij}\E[M_{ij}(L_if)(L_jf)] - \lambda^2(\beta^2+o(1)){\cal E}(f) - d(\beta^2+o(1)){\cal D}(f)
\end{align}
We note that we lose a factor of 2 for the two terms in the RHS of the first inequality  because we sum \eqref{eq:x-i-j} over all un-ordered pairs $i,j$.

\begin{lemma}\label{lem:Term1}
Let $\tilde{J} = \| J \|_{\op} I - J$ and $\mathbf{w} \colon \{ \pm 1 \}^n \to \R^n$ is defined as follows: for each $1\leq i\leq n$, we let $w_i \colon \{ \pm 1 \}^n \to \R$ be defined by
\begin{equation}\label{def:w}w_i(x_1,\dots,x_n) = (x_i + s\sol{\bX}_i(\bx_{-i}))\cdot L_i f(\bx).
\end{equation}
Then,
    \begin{equation}\label{eq:Term1}
    \sum_{i \neq j} J_{ij}\E[M_{ij}(L_if)(L_jf)] = \| J \|_{\op} \, \big(\cE(f) - s(2-s) \cD(f) \big) + \E\big[\mathbf{w}^\top \tilde{J} \mathbf{w}\big]
    \end{equation}
\end{lemma}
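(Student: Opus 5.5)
The identity is pure algebra: rewrite the left-hand side as a quadratic form in $\bw$, then add and subtract $\|J\|_{\op}\|\bw\|^2$. The first step is to check that $M_{ij}(L_if)(L_jf) = w_iw_j$ pointwise. By definition $M_{ij} = (x_i + s\sol{\bX}_i)(x_j+s\sol{\bX}_j)$. Here $\sol{\bX}_i$ in the two-site lemma is $\E[X_i\mid x_j, \bx_{-i-j}] = \E[X_i \mid \bx_{-i}]$, which is exactly the $\sol{\bX}_i(\bx_{-i})$ appearing in \eqref{def:w}. Hence
\[
\sum_{i\neq j} J_{ij}\E[M_{ij}(L_if)(L_jf)] = \E\Big[\sum_{i\neq j}J_{ij}w_iw_j\Big] = \E[\bw^\top J\bw],
\]
where we used $J_{ii}=0$. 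Writing $J = \|J\|_{\op}I - \tilde J$ gives
\[
\E[\bw^\top J\bw] = \|J\|_{\op}\,\E\|\bw\|^2 - \E[\bw^\top\tilde J\bw].
\]
This produces a minus sign in front of $\E[\bw^\top \tilde J\bw]$. Since \eqref{eq:main} has the sum with a minus sign, the intended statement is presumably the identity for $\langle\cL f,\cL f\rangle$-type expressions, or the sign convention is absorbed there. I would carry the computation with this sign and flag the discrepancy.

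The remaining task is to compute $\E\|\bw\|^2 = \sum_i \E[(X_i+s\sol{\bX}_i)^2 (L_if)^2]$. Expanding the square gives
\[
(X_i+s\sol{\bX}_i)^2 = 1 + 2sX_i\sol{\bX}_i + s^2\sol{\bX}_i^2,
\]
using $X_i^2=1$. The cross term is handled by Fact~\ref{fact:L-one-dim}: conditionally on $\bx_{-i}$, we have $L_if = \kappa_i(\bx_{-i})(X_i-\sol{\bX}_i)$. Therefore
\[
\E[X_i(L_if)^2\mid \bx_{-i}] = \kappa_i^2\,\E[X_i(X_i-\sol{\bX}_i)^2\mid\bx_{-i}].
\]
Expanding with $X_i^2=1$ gives $X_i(X_i-\sol{\bX}_i)^2 = X_i - 2\sol{\bX}_i + \sol{\bX}_i^2X_i$, whose conditional mean is $\sol{\bX}_i - 2\sol{\bX}_i + \sol{\bX}_i^3 = -\sol{\bX}_i(1-\sol{\bX}_i^2)$. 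Since $\E[(L_if)^2\mid\bx_{-i}] = \kappa_i^2(1-\sol{\bX}_i^2)$, we get
\[
\E[X_i(L_if)^2\mid\bx_{-i}] = -\sol{\bX}_i\,\E[(L_if)^2\mid\bx_{-i}].
\]
Because $\sol{\bX}_i$ is $\bx_{-i}$-measurable, this yields $\E[X_i\sol{\bX}_i(L_if)^2] = -\E[\sol{\bX}_i^2(L_if)^2]$.

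Summing over $i$ and using \eqref{eq:dirichlet-form} together with the definition $\cD(f) = \sum_i\E[\sol{\bX}_i^2(L_if)^2]$ from \eqref{eq:main}:
\[
\E\|\bw\|^2 = \cE(f) + (s^2 - 2s)\cD(f) = \cE(f) - s(2-s)\cD(f).
\]
Substituting this into the previous display gives \eqref{eq:Term1}, up to the sign on the $\tilde J$ term noted above.

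The only nontrivial step is the conditional cross-term computation, which is where Fact~\ref{fact:L-one-dim} is essential. The other point needing care is identifying the two-site conditional mean $\sol{\bX}(y)$ with the global $\sol{\bX}_i(\bx_{-i})$, so that $M_{ij}(L_if)(L_jf)$ genuinely factors as $w_iw_j$.
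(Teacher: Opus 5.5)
Your argument is correct and matches the paper's proof. Like you, the paper writes the left side as $J\bullet\E[\bw\bw^\top]$ (using $J_{ii}=0$), splits $J=\|J\|_{\op}I-\tilde J$, and obtains $\E\|\bw\|^2=\cE(f)-s(2-s)\cD(f)$ from Fact~\ref{fact:L-one-dim}. The only cosmetic difference is that Claim~\ref{claim:w-computation} expands $w_i=\kappa\bigl(1-s\sol{\bX}_i^2+(s-1)\sol{\bX}_iX_i\bigr)$ directly, whereas you isolate the cross term $\E[X_i(L_if)^2\mid \bx_{-i}]=-\sol{\bX}_i\,\E[(L_if)^2\mid\bx_{-i}]$.

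Your sign flag is also right, and you can state the resolution more sharply. The paper's own proof ends with $\|J\|_{\op}\E\|\bw\|^2-\E[\bw^\top\tilde J\bw]$. The subsequent lower bound on $\langle\cL_Jf,\cL_Jf\rangle$, which carries $+\E[\bw^\top\tilde J\bw]$, is exactly what the minus-sign identity gives once substituted into \eqref{eq:main}. So the $+$ in the displayed statement is simply a typo, not a convention absorbed elsewhere.
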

\begin{proof}
Given $\bx\in\{\pm1\}^n$, let $W(\bx)=\bw\bw^\top$. The main observation is that 
$$ \sum_{i\neq j} M_{ij}(\bx) J_{ij}L_if(\bx)L_jf(\bx)=J\bullet W(\bx),$$
where we are crucially using that the matrix $M$ with entries $M_{ij}=(x_i+s\sol{\bX}_i(\bx_{-i}))(x_j+s\sol{\bX}_j(x_{-j}))$ is a rank 1 matrix.
Letting  $W=\E_{\bx\sim\mu_J} W(\bx)$ by linearity of expectation we have
\begin{align*} \sum_{i \neq j} J_{ij}\E[M_{ij}(L_if)(L_jf)] &= J\bullet \E[W(\bx)] = (\|J\|_{\op}I - \tilde{J}) \bullet W = \|J\|_{\op}(I \bullet W) - \tilde{J}\bullet W \\&= \|J\|_{\op}\E_{\mu_{\beta}}\big[\|w \|_2^2 \big] - \E[\bw^\top \tilde{J} \bw].
\end{align*}
Therefore, to finish the proof it is enough to show that
   $ \E_{\mu_{\beta}} \big[\|w \|_2^2\big] = \cE(f) - s(2-s) \cD(f)$
This can be done by applying the following Claim~\ref{claim:w-computation}
to $g\equiv 1$ and summing up 
over all $i \in [n]$.
\end{proof}

\begin{claim}\label{claim:w-computation}
    Let $\mathbf{w}$ be defined as in Lemma~\ref{lem:Term1} and  $g \colon [-1,1] \to \R$. For $i \in [n]$, we have
    \[ \E\big[g\big(\sol{\bX}_i \big) w_i(X)^2 \big] =  \E\big[ g\big(\sol{\bX}_i \big)\big(1-s(2-s) \sol{\bX}_i^2\big)(L_if)^2\big].\]
\end{claim}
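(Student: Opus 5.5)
The plan is to condition on $X_{-i}$ and use the one-site structure of $L_i$. Fix $\bx_{-i}$. Then $\sol{\bX}_i=\sol{\bX}_i(\bx_{-i})$ is a constant, and so is $g(\sol{\bX}_i)$. Both can be pulled out of the conditional expectation $\E[\,\cdot\,|X_{-i}=\bx_{-i}]$, under which $X_i\in\{\pm1\}$ has mean $\sol{\bX}_i$. The claim therefore reduces to the pointwise identity
\[
\E\big[(X_i+s\sol{\bX}_i)^2 (L_if)^2 \,\big|\, \bx_{-i}\big] = \big(1-s(2-s)\sol{\bX}_i^2\big)\,\E\big[(L_if)^2\,\big|\,\bx_{-i}\big].
\]
After proving it, I multiply by $g(\sol{\bX}_i)$ and average over $\bx_{-i}$ with respect to the marginal of $\mu_J$, which gives the claim by the tower property.

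To prove the conditional identity I use Fact~\ref{fact:L-one-dim} applied to $x_i\mapsto f(x_i,\bx_{-i})$. It gives $L_if(\bx)=\kappa\,(x_i-\sol{\bX}_i)$, where $\kappa=\kappa(\bx_{-i})$ does not depend on $x_i$. Both sides then carry the common factor $\kappa^2$, and it suffices to check the one-dimensional identity
\[
\E\big[(X+s\sol X)^2(X-\sol X)^2\big]=(1-s(2-s)\sol X^2)\,\E[(X-\sol X)^2]
\]
for $X\in\{\pm1\}$ with mean $\sol X$. The right side equals $(1-s(2-s)\sol X^2)(1-\sol X^2)$.

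For the left side I use $X^2=1$ to write
\[
(X+s\sol X)(X-\sol X)=1+(s-1)\sol X X-s\sol X^2 .
\]
Squaring and taking expectations, with $\E X=\sol X$ and $\E X^2=1$, gives
\[
(1-s\sol X^2)^2+2(1-s\sol X^2)(s-1)\sol X^2+(s-1)^2\sol X^2 .
\]
Expanding and collecting the $\sol X^2$ and $\sol X^4$ terms should reproduce $(1-\sol X^2)(1-(2s-s^2)\sol X^2)$. At $s=0$ both sides are $1-\sol X^2$, and at $s=1$ both are $(1-\sol X^2)^2$, which is a useful check.

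There is no real obstacle here beyond this polynomial bookkeeping. The one point needing care is that $\sol{\bX}_i$ depends only on $\bx_{-i}$, which is what lets $g(\sol{\bX}_i)$ and $\kappa$ be treated as constants in the conditional expectation.
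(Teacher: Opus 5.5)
Your proposal is correct and follows essentially the same route as the paper: condition on $\bx_{-i}$, write $L_if=\kappa(\bx_{-i})(x_i-\sol{\bX}_i)$ via Fact~\ref{fact:L-one-dim}, check the one-site identity, and finish with the tower property. The polynomial bookkeeping you left open does close: both sides equal $1+(s^2-2s-1)\sol{X}^2+s(2-s)\sol{X}^4=(1-\sol{X}^2)\bigl(1-s(2-s)\sol{X}^2\bigr)$.
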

\begin{proof}
    By Fact~\ref{fact:L-one-dim}, it follows that there exists some function $\kappa(\bx_{-i})$ (which is also a function of $f$) for which
\begin{equation}\label{eq:one-dim-L}
L_if(x) = \kappa(\bx_{-i})(x_i - \sol{\bX}_i(\bx_{-i})).
\end{equation}
Consequently, it follows that 
\[ w_i(\bx) = (x_i + s \sol{\bX}_i(\bx_{-i}))L_if(\bx) = \kappa(\bx_{-i})\big(1 -s \sol{\bX}_i(\bx_{-i})^2+(s-1)\sol{\bX}_i(\bx_{-i})x_i\big).\]
That is, we have 
\begin{align*}
    \E\big[g(\sol{\bX}_i)w_i^2 \, | \, \bX_{-i} = \bx_{-i}\big]  &= g(\sol{\bX}_i)\kappa(\bx_{-i})^2 \E\bigg[ \big(1 - s \sol{\bX}_i(\bx_{-i})^2 + (s-1)\sol{\bX}_i(\bx_{-i})X_i \big)^2 \, \bigg| \bX_{-i} = \bx_{-i} \bigg] \\
    &= g(\sol{\bX}_i)\kappa(\bx_{-i})^2  \big(1 - \sol{\bX}_i(\bx_{-i})^2 \big)\big(1 - s(2-s)\sol{\bX}_i(\bx_{-i})^2 \big) \\
    &=\E[g(\sol{\bX}_i)(L_if)^2|\bX_{-i}=\bx_{-i}]\big(1 - s(2-s)\sol{\bX}_i(\bx_{-i})^2 \big)\\
    &= \E\big[g\big(\sol{\bX}_i \big)\big(1-s(2-s) \sol{\bX}_i^2)(L_if)^2 \big | \, \bX_{-i} = \bx_{-i}\big].
\end{align*}

Taking another expectation and using the tower property gives the desired conclusion.
\end{proof}

Putting equations \eqref{eq:LfLf-proof} and \eqref{eq:main} and Lemma \ref{lem:Term1} together we obtain
\begin{align*}\langle \cL_Jf,\cL_Jf \rangle &\underset{\eqref{eq:dirichlet-form}}{=}\cE(f) + \sum_{i \neq j} \E\big[ (L_if) (L_jf)\big] \\&\hspace{-1.7em}\geq \big(1 - \|J \|_{\op} - \lambda^2 \beta^2 - o(1) \big) \, \cE(f) + \underbrace{\E\big[\mathbf w^\top \tilde{J} \mathbf w\big] + \big( s(2-s)\|J \|_{\op} -  d\beta^2 -o(1)\big)\,\cD(f)}_{\cG(f)}, 
\end{align*}
where $\mathbf w \colon \{ \pm 1 \}^n \to \R^n$ is defined as in Lemma~\ref{eq:Term1}. Let $\eps'>0$ be a constant that we choose later.

{\bf Case 1: $\beta\in(0,\tfrac{1}{2} -\eps)$.}
Since $\tilde{J}\succeq 0$ and $\cD(f) \geq 0$, it follows by Proposition~\ref{prop:Bai-Yin} that if $d,s$ satisfy $s(2-s)(2\beta) - d \beta^2 > 0$ and \eqref{eq:d-s-constraint}, then we have for large enough $n$ that 
\[\langle\cL_J f, \cL_J f \rangle \geq \big(1 - 2 \beta - \lambda^2 \beta^2 - o(1) \big) \, \cE(f) \underset{\beta\leq (1-\eps)/2, \,\lambda\text{ sufficiently small}} > \eps \cdot {\cE}(f)\]

It can be checked for instance that $ s = \tfrac{1}{3}$ and $d = \tfrac{2}{9}$ satisfies $s(2-s)(2\beta) - d \beta^2 > 0$ and \eqref{eq:d-s-constraint}. 
This concludes the proof of Theorem~\ref{thm:gap-L} in this case.

{\bf Case 2: $\tfrac{1}{2} - \eps \leq \beta \leq \tfrac{1}{2}+ \eps$.} We claim that for every $\beta \leq \tfrac{1}{2} + \eps$, with high probability we have
 \begin{equation}\label{eq:bound-GE}
     \cG(f) >  (2 \eps + 9 \cdot 10^{-6}) \cdot \cE(f). 
 \end{equation}

 We first show how to complete the proof assuming \eqref{eq:bound-GE}. By Proposition~\ref{prop:Bai-Yin}, with high probability we have $\| J \|_{\op} = 2 \beta \pm o(1)$, which gives 
  \[\E\big[ (\cL_J f)^2\big] \geq \big(1 - 2 \beta - \lambda^2\beta^2 + 2\eps + 9 \cdot 10^{-6} \big) \cdot \cE(f) \geq \big(9 \cdot 10^{-6}-\lambda^2 \beta^2 \big) \cdot \cE(f),\]
    which would complete the proof of Theorem~\ref{thm:gap-L} in this case by taking $\lambda$ sufficiently small.

  It remains to justify \eqref{eq:bound-GE}. For the remainder of the proof, we work in the high probability event that Proposition~\ref{prop:Bai-Yin}, Lemma~\ref{lem:sparse-delocalization} and Lemma~\ref{lem:typ-large} hold. We will choose $0<\delta'<\delta$ later such that $D_{\mathrm{KL}}(\mathrm{Ber}(\delta) \parallel \mathrm{Ber}(\delta')) > \ln 2$  so that we may apply Lemma~\ref{lem:typ-large}.  
  
  Fix $\mathbf{x} \in \{ \pm 1 \}^n$ arbitrary. We define 
  \[ H_{\leq}(\bx) = \big\{ i: |H_i(\bx)| \leq \beta \Phi^{-1} \big( \tfrac{1+\delta'}{2}\big) \big\},\]
  and we define $H_{>}(\bx) = [n]\smallsetminus H_{\leq}(\bx)$.
  
  By Lemma~\ref{lem:typ-large},  $|H_{\leq}(\bx)| \leq \delta n$, so we have, 
  \begin{equation}\label{eq:min-w}
      \| \mathbf{w}(\bx) \|_{\leq (1-\delta) n}^2 \leq \sum_{i \in H_{>}(\bx)} w_i(\bx)^2,
  \end{equation}

Next, by Lemma~\ref{lem:sparse-delocalization} with $\beta$ and $\rho > 0$ to be chosen later applied to the first term in $\cG(f)$, we get 
\[ \mathbf{w}^\top(\bx) \tilde{J} \mathbf{w}(\bx) \geq (c - o(1)) \| \mathbf{w} (\bx) \|_2^2  - \rho   \norm{\bw(\bx)}^2_{\leq(1-\delta)n} \underset{\eqref{eq:min-w}}{\geq} (c - o(1))   \| \mathbf{w} (\bx) \|_2^2  - \rho \sum_{i \in H_{>}(\bx)} w_i(\bx)^2,\]
where $c = c(\beta, \rho, \delta) = \frac{\pi \beta \rho}{6.1(\beta + \rho)}(1-\delta)^3 $ is as in Lemma~\ref{lem:sparse-delocalization}.

Now, taking expectations in the above, we obtain
  \begin{equation}\label{eq:quad-Jbar}
      \E\big[ \mathbf{w}^\top \tilde{J} \mathbf{w} \big] \geq (c - o(1))\E\big[\| \mathbf{w} \|_2^2 \big] - \rho \E\bigg[ \sum_{i \in H_{>}(\bx)} w_i^2\bigg].
  \end{equation}

  We write,
  \begin{align}\label{eq:G-bound} \cG(f) &= \E\big[\mathbf w^\top \tilde{J} \mathbf w\big] + \big( s(2-s)\|J \|_{\op} -  d\beta^2 - o(1)\big)\,\cD(f) \notag \\
      &\underset{\eqref{eq:quad-Jbar}}{\geq} (c-o(1))\E[ \|\mathbf{w} \|_2^2] - \rho \E \bigg[ \sum_{i} \big(\one\{ i \in H_{>}(\bx)\} \cdot w_i^2 \big)\bigg] + (2s(2-s)\beta - \beta^2 d -o(1)) \sum_i \E\big[\sol{\bX}_i^2 (L_if)^2 \big]  \notag  \\ 
    &\underset{\text{Claim}~\ref{claim:w-computation}}{\geq} (c-o(1))\E[ \|\mathbf{w} \|_2^2] - \rho \sum_i \E \bigg[ \one\big\{ H_i(\bx) > \beta \Phi^{-1}(\tfrac{1+\delta'}{2}) \big\} (1- s(2-s) \sol{\bX}_i^2) (L_if)^2 \bigg] \notag \\[-0.25em]&\hspace{7.5em} + (2s(2-s)\beta - \beta^2 d - o(1)) \sum_i \E\big[\sol{\bX}_i^2 (L_if)^2 \big] \notag 
    &\intertext{\text{using $\sol{\bX}_i(\bx) = \tanh H_i(\bx)$,  $\sol{\delta} =  \tanh\big(\beta \Phi^{-1}\big(\tfrac{1+\delta'}{2}\big)\big)$, $\tanh(x)$  strictly monotone for $x>0$ we write}} \notag 
    & \geq (c-o(1))\E[ \|\mathbf{w} \|_2^2] - \rho \sum_i \E \bigg[ \one\big\{ \sol{\bX}_i > \sol{\delta}) \big\} (1- s(2-s)\sol{\bX}_i^2) (L_if)^2 \bigg] \notag \\[-0.25em]
&\hspace{7.5em} + (2s(2-s)\beta - \beta^2 d - o(1)) \sum_i \E\big[\sol{\bX}_i^2 (L_if)^2 \big]  \notag \\ 
&\underset{\text{Claim}~\ref{claim:w-computation}}{\geq} \sum_i (c-o(1))\E\big[(1-s(2-s)\sol{\bX}_i^2 \big) (L_if)^2 \big] - \rho \sum_i \E \bigg[ \one\big\{ \sol{\bX}_i > \sol{\delta} \big\} (1- s(2-s) \sol{\bX}_i^2) (L_if)^2 \bigg] \notag \\[-0.25em]
&\hspace{7.5em} + (2s(2-s)\beta - \beta^2 d - o(1)) \sum_i \E\big[\sol{\bX}_i^2 (L_if)^2 \big]  \\
&> (2 \eps + 9 \cdot 10^{-6}) \cdot \cE(f).
  \end{align}
  
To see the last inequality we set $ \eps=5\cdot10^{-5},\,
    s=\tfrac23,\, d=\tfrac89,\,
    \delta=0.642,\, \delta'=0.13319$, and $
    \rho=0.004693$.

    First, we observe that $\max\{ - (s^2 + 2s -1), 2s^2 \} = \max \{ -\tfrac{7}{9}, \tfrac{8}{9}\} = d$ and $D_{\mathrm{KL}}(\mathrm{Ber}(0.642)\parallel \mathrm{Ber}(0.13319)) > 0.69317 > \ln  2$. Next, we can also check that 
\begin{equation}\label{eq:parameter-1}
2s(2-s)\beta-d\beta^2 =\frac89(2\beta-\beta^2)\underset{\text{monotonic in } \beta}{\geq}
    \frac89\left(
        2\left(\frac12-\eps\right)
        -\left(\frac12-\eps\right)^2
    \right)
    =0.66662222.
\end{equation}
\begin{equation}\label{eq:parameter-2}
0.004693=\rho >c(\beta,\rho,\delta) \underset{\text{monotonic in }\beta}{>} \frac{\pi\cdot0.49995\cdot0.004693}
         {6.1(0.49995+0.004693)}
    (0.358)^3>0.0001098
    >2\eps+9\cdot10^{-6}.
\end{equation}
\begin{equation}\label{eq:parameter-3}
\sol{\delta}^{2} =
    \tanh^2\left(
        \beta\Phi^{-1}\left(\frac{1+\delta'}2\right)
    \right) \underset{\tanh \text{ is monotonic}}{\geq} 
    \tanh^2\left(
        0.49995\cdot\Phi^{-1}(0.566595)
    \right)
    >0.00699758.
\end{equation}

For $i \in [n]$ such that $\big|\sol{\bX}_i\big| \leq \sol{\delta}$, the coefficient of $(L_if)^2$ in \eqref{eq:G-bound} is 
\begin{align}\label{eq:small}
c+
    \left(
        2s(2-s)\beta-\beta^2d-cs(2-s)
    \right)\big|\sol{\bX}_i\big|^2&\underset{\eqref{eq:parameter-1}}{\geq}
    c+
    \underbrace{\left(
        0.66662222-\frac89\cdot0.004693
    \right)}_{>0}
    \big|\sol{\bX}_i\big|^2 \notag \\&\ge c
    \underset{\eqref{eq:parameter-2}}{\geq}0.00010986
    >2\eps+9\cdot10^{-6}.
\end{align}

For $i \in [n]$ such that $\big|\sol{\bX}_i\big| > \sol{\delta}$, the coefficient of $(L_if)^2$ in \eqref{eq:G-bound} is 
\begin{align}\label{eq:large}
&(c-\rho)
    +\left(
        2s(2-s)\beta-\beta^2d-s(2-s)(c-\rho)
    \right)\big|\sol{\bX}_i\big|^2 = (c-\rho)\left(1-\frac89 \sol{\bX}_i^2 \right)     +\left(2s(2-s)\beta-\beta^2d\right)\sol{\bX}_i^2 \notag\\ &\hspace{2em}\underset{\eqref{eq:parameter-1}, \, \eqref{eq:parameter-2}}{\geq}
    -0.00458314+0.67069612 \cdot \big|\sol{\bX}_i\big|^2 \underset{\eqref{eq:parameter-3}}{\geq}
    0.000110109 >
    2\eps+9\cdot10^{-6}.
\end{align}

Summing up \eqref{eq:small} and \eqref{eq:large} over all $i \in [n]$ and using $\sum_i \E[(L_if)^2] = \cE(f)$ (see \eqref{eq:dirichlet-form}), it follows that $\cG(f) > (2\eps + 9 \cdot 10^{-6}) \cdot \cE(f)$, establishing \eqref{eq:bound-GE}.

\section{Proof of the local inequality}\label{sec:local}

In this section, we prove Theorem~\ref{lem:main}. We begin with 
decomposing a given function $f:\{\pm1\}^2\to\R$.
Recall $\mu_{0,a,b}$ corresponds to the Ising measure with no interaction between the spins $x$ and $y$. 

For the random variable $X$ (and similarly for $Y$), we define
$$ \phi(X) = \frac{X-\sol{X}_0}{\sigma_{X_0}},$$
where $\sol{X}_0$ is the expectation of $X$ and $\sigma_{X_0}$ is the standard deviation of $X$ when there is no interaction between $X$ and $Y$.

Note that by definition $\E_{\mu_{0,a,b}}{\phi(X)}=\E_{\mu_{0,a,b}}{\phi(Y)}=0$. Furthermore, $\E_{\mu_{0,a,b}}[\phi(X)\phi(Y)]=0.$ This implies that the four functions $1, \phi(X),\phi(Y),\phi(X)\phi(Y)$ form an orthonormal bases w.r.t. the inner-product defined by the measure $\mu_{0,a,b}$. Consequently, these four functions are linearly independent. Therefore, we can write any function $f:\{\pm1\}^2\to\R$ by the following decomposition: 
$$ f(x,y) = f_0+ \alpha\phi(x)+\beta\phi(y)+\gamma\phi(x)\cdot\phi(y).$$

Recall that $L_Xf = f - \E[f|Y],$ and $L_Yf = f - \E[f|X].$ It is not hard to see that the constant term of $f$ will vanish when applying $L_X,L_Y$ operators. So, without loss of generality we henceforth drop the constant term of $f$.

\begin{lemma}[Explicit description of $A$]
\label{lem:explicit-A} For a function $f:\{\pm1\}^2 \to \R$ with $f=f_0 + \alpha\phi(x)+\beta\phi(y)+\gamma\phi(x)\phi(y)$,
let $A \in \R^{3 \times 3}$ be such that 
\begin{align}\label{eq:local-ineqa} &\begin{bmatrix}\alpha& \beta & \gamma\end{bmatrix} A \begin{bmatrix}\alpha& \beta & \gamma\end{bmatrix}^\top = \E\big[(L_X f)(L_Yf)\big] + J\cdot \E\big[M(L_Xf)(L_Yf)\big] \notag\\&\hspace{4em}+  \frac{(\lambda J)^2}{2}\cdot \E\big[(L_Xf)^2
    + (L_Yf)^2\big] + \frac{dJ^2}{2}\E\big[\sol{\bX}(y)^2\cdot L_Xf^2 + \sol{\bY}(x)^2\cdot  L_Yf^2\big]. 
    \end{align}
    Suppose also that $|J|\leq \tfrac{1}{2}$. Then we can write,  $A = \diag(J, J, 1)(A_0 + E)\diag(J,J,1)$
    where $\|E\|_{\op}\leq O(|J|)$ and
\begin{equation}
A_0 =
\begin{pmatrix}
 \frac{1}{2}\bigl(\lambda^2+d\sol{X}_0^2\bigr)
&
 -\frac{1}{2}\bigl(s^2+2s-1\bigr)
 \sol{X}_0\sol{Y}_0 \sigma_{X_0}\sigma_{Y_0}
&
 -\frac{1}{2}(1-s)\sol{X}_0\sigma_{Y_0}
\\[0.4em]
 -\frac{1}{2}\bigl(s^2+2s-1\bigr)
 \sol{X}_0\sol{Y}_0 \sigma_{X_0}\sigma_{Y_0}
&
 \frac{1}{2}\bigl(\lambda^2+d\sol{Y}_0^2\bigr)
&
 -\frac{1}{2}(1-s)\sol{Y}_0\sigma_{X_0}
\\[0.4em]
 -\frac{1}{2}(1-s)\sol{X}_0\sigma_{Y_0}
&
 -\frac{1}{2}(1-s)\sol{Y}_0\sigma_{X_0}
&
 1
\end{pmatrix}.
\end{equation}
\end{lemma}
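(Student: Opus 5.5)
The plan is to compute every term of \eqref{eq:local-ineqa} explicitly as a quadratic form in $(\alpha,\beta,\gamma)$ and then Taylor-expand in $\sol{J}$ (equivalently $J$, since $\sol{J}=J+O(J^3)$) using Lemma~\ref{lem:master}. The first step is to express $L_Xf$ and $L_Yf$ in closed form. By Fact~\ref{fact:L-one-dim}, for fixed $y$ we have $L_Xf(x,y)=\kappa_X(y)\,(x-\sol{\bX}(y))$ with $\kappa_X(y)=\frac{f(1,y)-f(-1,y)}{2}$. In the decomposition $f=\alpha\phi(x)+\beta\phi(y)+\gamma\phi(x)\phi(y)$ this reads $\kappa_X(y)=\sigma_{X_0}^{-1}(\alpha+\gamma\phi(y))$. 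Symmetrically, $L_Yf=\sigma_{Y_0}^{-1}(\beta+\gamma\phi(x))(y-\sol{\bY}(x))$. The coefficient $\beta$ never appears in $L_Xf$ and $\alpha$ never appears in $L_Yf$. Each of the four summands is therefore a bilinear expression in $(\alpha,\gamma)$ and $(\beta,\gamma)$, with coefficients of the form $\E_{\mu_{J,a,b}}[\,\cdot\,]$ of explicit functions of $x,y,\sol{J}$. By Lemma~\ref{lem:bfx}, $x-\sol{\bX}(y)=\sigma_{X_0}\phi(x)-\sol{J}\sigma_{X_0}^2y+O(J^2)$.

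Next I expand each entry. For the main term $\E[(L_Xf)(L_Yf)]$, at $J=0$ it equals $\E_{\mu_0}[(\alpha+\gamma\phi(y))\phi(x)(\beta+\gamma\phi(x))\phi(y)]=\gamma^2$ by orthonormality, which gives the $(3,3)$ entry $1$. The cross terms $\alpha\beta$, $\alpha\gamma$, $\beta\gamma$ vanish at order $0$. I then apply Lemma~\ref{lem:master} to obtain the order-$J$ and order-$J^2$ coefficients. Here the Radon--Nikodym factor $XY-\sol{X}_0\sol{Y}_0$ and the linear drift of $\sol{\bX}(y)$ both contribute. I expect the order-$J$ part of the $\alpha\beta$ coefficient to be exactly $-J\,\E_{\mu_0}[\ldots]$, which is cancelled by the $J\,\E[M L_Xf L_Yf]$ term evaluated at $J=0$. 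This cancellation is the whole point of the rank-1 shift. What survives at order $J^2$ should be the $-\frac12(s^2+2s-1)\sol{X}_0\sol{Y}_0\sigma_{X_0}\sigma_{Y_0}$ term.

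Similarly, the $\alpha\gamma$ coefficient has order $J$: it combines the order-$J$ part of the first term with $J\,\E_{\mu_0}[(x+s\sol{X}_0)(y+s\sol{Y}_0)\phi(x)\phi(y)\phi(y)]$. After the $J=0$ orthonormality computation this should leave $-(1-s)\sol{X}_0\sigma_{Y_0}J$. The quadratic terms $\frac{(\lambda J)^2}{2}\E[(L_Xf)^2]$ and $\frac{dJ^2}{2}\E[\sol{\bX}^2(L_Xf)^2]$ only need their $J=0$ values. These are $\frac{J^2}{2}(\lambda^2+d\sol{X}_0^2)\alpha^2+O(J^2)\gamma^2+\dots$; the $\gamma^2$ and $\alpha\gamma$ parts of order $J^2$ are absorbed into $E$ after the scaling.

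The final step is bookkeeping of orders. Writing $A=\diag(J,J,1)(A_0+E)\diag(J,J,1)$ means that the $(\alpha,\beta)$ block must be $J^2(A_0+O(J))$, the $(\alpha,\gamma)$ entries $J(A_0+O(J))$, and the $(3,3)$ entry $1+O(J)$. So I must verify three things. First, the $(\alpha,\beta)$ block has no order-$0$ or order-$J$ terms; orthogonality kills order $0$, and the $M$-term cancels order $J$. Second, the $\alpha\gamma$ and $\beta\gamma$ entries have no order-$0$ term. Third, all remainders are uniformly bounded independently of $a,b$. For the third point, the $C$ in Lemma~\ref{lem:master} must be uniform. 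This holds because $\sigma_{X_0}^{-1}$ factors always multiply $(x-\sol{\bX}(y))$, and $|x-\sol{\bX}(y)|/\sigma_{X_0}$ stays bounded under $\mu_0$ in expectation (for example $\E_{\mu_0}[(x-\sol{X}_0)^2]/\sigma_{X_0}^2=1$), while $\sol{\bX}(y)-\sol{X}_0=O(\sol{J}\sigma_{X_0}^2)$.

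The main obstacle is the second-order computation for the $\alpha\beta$ entry. It needs the $J^2$ terms of the Radon--Nikodym expansion, of $\sol{\bX}(y)$ and $\sol{\bY}(x)$, and of $M$ to one order less, all combined. It also requires care to keep the $\sigma^{-1}$ factors uniformly controlled. I would do this entry symbolically: expand the exact formulas of Lemma~\ref{lem:mu} to second order, and check the resulting coefficient $-\frac12(s^2+2s-1)\sol{X}_0\sol{Y}_0\sigma_{X_0}\sigma_{Y_0}$ against the special cases $a=b=0$ and $s=1$.
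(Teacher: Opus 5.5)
Your proposal follows the paper's route. It writes $L_Xf=(\alpha+\gamma\phi(Y))R_X$ and $L_Yf=(\beta+\gamma\phi(X))R_Y$, reads the entries off \eqref{eq:LXfLYf}, and uses Lemma~\ref{lem:master} for the first-order expansions. It also expands $\E[R_XR_Y]$ exactly to second order from the closed forms of Lemma~\ref{lem:mu}, rather than from Lemma~\ref{lem:master}, which only gives an $O(\sol{J}^2)$ remainder. The $-\sigma_{X_0}\sigma_{Y_0}\sol{J}$ term then cancels against $J\,\E[MR_XR_Y]=J\sigma_{X_0}\sigma_{Y_0}+O(J^2)$, leaving the entry $-\tfrac12(s^2+2s-1)\sol{X}_0\sol{Y}_0\sigma_{X_0}\sigma_{Y_0}J^2$.

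One small slip concerns the $\alpha\gamma$ coefficient. It comes from $R_XR_Y\phi(X)$, so its $J=0$ integrand is $(X+s\sol{X}_0)(Y+s\sol{Y}_0)\phi(X)^2\phi(Y)$. Your integrand, with $\phi(Y)^2$, gives the $\beta\gamma$ value $(s-1)\sol{Y}_0\sigma_{X_0}$ instead. Your stated answer $-(1-s)\sol{X}_0\sigma_{Y_0}J$ is nonetheless correct, provided the order-$J$ part of $\E[R_XR_Y\phi(X)]$ vanishes, which the Master Lemma computation confirms.
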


We defer the proof of Lemma~\ref{lem:explicit-A} to the end of this section. First we give a proof Theorem \ref{lem:main} assuming the above lemma.

\begin{proof}[Proof of Theorem~\ref{lem:main}]
    We will see that entries of $A_0$ are the leading order terms in the Taylor expansions of the entries of $A$.

Since $\|E \|_{\op} \leq O(|J|)$, to prove that $A$ is PSD, we prove that there exists an absolute constant $c>0$ as a function of $d,s, \lambda$ such that we have $A_0 \succeq cI$. Having that, the lemma's statement follows by choosing $j(d,\lambda,s)=O(c)$ such that for $J\leq j(d,\lambda,s)$ we have $\norm{E}_{\op}\leq c$. 

As a first step, we will show that 
\begin{equation}\label{eq:A0-psd}
    [\alpha\quad\beta\quad\gamma]\,
A_0\,
[\alpha\quad\beta\quad\gamma]^\top \geq 
\frac{\lambda^2}{2}
(\alpha^2+\beta^2) + \left( \gamma - \frac{1-s}{2} \big( \sol{X}_0 \sigma_{Y_0} \alpha + \sol{Y}_0 \sigma_{X_0} \beta \big) \right)^2.
\end{equation}
    Assuming this, we note that 
\begin{align*}
    \alpha^2 + \beta^2 + \gamma^2 &\leq \frac{2}{\lambda^2} \cdot \left( \frac{\lambda^2}{2}(\alpha^2 +\beta^2) +  \left(
  \gamma
  -\frac{1-s}{2}
   \bigl(
     \sol{X}_0\sigma_{Y_0}\alpha
     +\sol{Y}_0\sigma_{X_0}\beta
   \bigr)
 \right)^2\right) \\&+ \left( 1 \cdot \left( \gamma
  -\frac{1-s}{2}
   \bigl(
     \sol{X}_0\sigma_{Y_0}\alpha
     +\sol{Y}_0\sigma_{X_0}\beta\right) +  \frac{(1-s)\sol{X}_0 \sigma_{Y_0}}{2} \cdot \alpha  + \frac{(1-s)\sol{Y}_0 \sigma_{X_0}}{2} \cdot \beta \right)^2 \\
     &\underset{\substack{\text{Cauchy-Schwarz} \\ \sigma_{X_0}, \sigma_{Y_0}, \sol{X}_0, \sol{Y}_0 \leq 1}}{\leq} \frac{2}{\lambda^2} \cdot \left( \frac{\lambda^2}{2}(\alpha^2 +\beta^2) +  \left(
  \gamma
  -\frac{1-s}{2}
   \bigl(
     \sol{X}_0\sigma_{Y_0}\alpha
     +\sol{Y}_0\sigma_{X_0}\beta
   \bigr)
 \right)^2\right)\\
     &+\left( 1 + \frac{(1-s)^2}{2\lambda^2} + \frac{(1-s)^2}{2\lambda^2} \right) \left( \left(
  \gamma
  -\frac{1-s}{2}
   \bigl(
     \sol{X}_0\sigma_{Y_0}\alpha
     +\sol{Y}_0\sigma_{X_0}\beta
   \bigr)
 \right)^2 + \frac{\lambda^2}{2}(\alpha^2 +\beta^2)  \right)  \\
 &= \left( \frac{2}{\lambda^2} + 1 + \frac{(1-s)^2}{\lambda^2} \right) \left( \left(
  \gamma
  -\frac{1-s}{2}
   \bigl(
     \sol{X}_0\sigma_{Y_0}\alpha
     +\sol{Y}_0\sigma_{X_0}\beta
   \bigr)
 \right)^2 + \frac{\lambda^2}{2}(\alpha^2 +\beta^2)  \right)\\
 &\leq \left( \frac{2}{\lambda^2} + 1 + \frac{(1-s)^2}{\lambda^2} \right)    [\alpha\quad\beta\quad\gamma]\,
\sol{A}_0\,
[\alpha\quad\beta\quad\gamma]^\top
\end{align*}
Consequently, this shows that for $c = \left( \frac{2}{\lambda^2} +1 + \frac{(1-s)^2}{\lambda^2} \right)^{-1} $ we have $A_0\succeq cI$. 

In the rest of the proof we show \eqref{eq:A0-psd}. Expanding the quadratic form gives
\begin{align*}
&[\alpha\quad\beta\quad\gamma]\,
A_0 \,
[\alpha\quad\beta\quad\gamma]^\top\\
&=
\frac{\lambda^2}{2}(\alpha^2+\beta^2)
+\frac{d}{2}
 \bigl(
   \sol{X}_0^2\alpha^2
   +\sol{Y}_0^2\beta^2
 \bigr)
-(s^2+2s-1)
 \sol{X}_0\sol{Y}_0
 \sigma_{X_0}\sigma_{Y_0}\,\alpha\beta -(1-s)\gamma
 \bigl(
   \sol{X}_0\sigma_{Y_0}\alpha
   +\sol{Y}_0\sigma_{X_0}\beta
 \bigr)
+\gamma^2 \\
&\underset{\sol{X}_0^2=\sol{X}_0^2\sol{Y}_0^2
+\sol{X}_0^2\sigma_{Y_0}^2} 
{=}
\frac{\lambda^2
+d\sol{X}_0^2\sol{Y}_0^2}{2}
(\alpha^2+\beta^2)
+\frac{d+(s^2+2s-1)}{4}
\bigl(
  \sol{X}_0\sigma_{Y_0}\alpha
  -\sol{Y}_0\sigma_{X_0}\beta
\bigr)^2
\\
&\quad\quad\quad
+\frac{d-(s^2+2s-1)-(1-s)^2}{4}
\bigl(
  \sol{X}_0\sigma_{Y_0}\alpha
  +\sol{Y}_0\sigma_{X_0}\beta
\bigr)^2+\left(
  \gamma
  -\frac{1-s}{2}
   \bigl(
     \sol{X}_0\sigma_{Y_0}\alpha
     +\sol{Y}_0\sigma_{X_0}\beta
   \bigr)
 \right)^2 .
\end{align*}

To get \eqref{eq:A0-psd} we just need to choose $s,d$ such that the coefficients of the second and the third terms in the RHS are non-negative. This holds when $d \geq \max \{ -(s^2 + 2s -1), 2s^2 \}$, which is exactly \eqref{eq:d-s-constraint}. This completes the proof \eqref{eq:A0-psd} and the theorem.
\end{proof}

In the remainder of this section, we prove Lemma~\ref{lem:explicit-A}. 

To prove the lemma we first need to calculate the action of the $L_X,L_Y$ operators on $\phi(X),\phi(Y),$ and $\phi(X)\phi(Y).$
First, notice
\begin{equation}\label{eq:LYphix}
    L_Y\phi(X) = \phi(X) - \E[\phi(X) \, | \, x] = 0 \quad\text{and}\quad L_X\phi(Y)=0.
\end{equation}
Second,
\begin{equation}\label{eq:LXphix}
    L_X \phi(X) = \phi(X) - \E[\phi(X) \, | \, y] = \frac{X - \sol{X}_0}{\sigma_{X_0}} - \frac{\sol{\bX}(y) - \sol{X}_0}{\sigma_{X_0}} = \frac{X - \sol{\bX}(y)}{\sigma_{X_0}} =: R_X(y),\quad L_Y\phi(Y)=:R_Y(x).
\end{equation}

Lastly,
\begin{equation}\label{eq:LXphixphiy}
    L_X(\phi(X)\phi(Y)) = \phi(X)\phi(Y)-\E[\phi(X) \phi(Y)|y] = \phi(Y) L_X \phi(X) =\phi(Y) \cdot R_X, 
\end{equation}
Similarly, we have 
$L_Y(\phi(X)\phi(Y))= \phi(X) \cdot R_Y$.

By linearity for $f = \alpha \phi(X) + \beta \phi(Y) + \gamma \phi(X) \phi(Y)$, we have $L_X f(x,y) = (\alpha + \gamma \phi(Y)) \cdot R_X$, $L_Y f(x,y) = (\beta + \gamma \phi(X)) \cdot R_Y$ and 
\begin{equation}\label{eq:LXfLYf}
    (L_Xf)(L_Yf) = R_XR_Y(\alpha + \gamma \phi(Y)) (\beta + \gamma \phi(X)) = R_XR_Y(\alpha \beta + \alpha \gamma \phi(X) + \beta \gamma \phi(Y)+\gamma^2 \phi(X) \phi(Y)).
\end{equation}

A necessary step to prove Theorem~\ref{lem:main} is to calculate the Taylor expansion of several polynomials of $M,\phi(.),R_X,R_Y$. We do that in the following lemma and then we prove the theorem.

\begin{restatable}[Taylor expansion lemma]{lemma}{Taylor}
\label{lem:consolidated}
Recall $\sol{J}=\tanh(J)$. We have,
    \begin{enumerate}
        \item [(i)] $\E\big[\phi(X)^2\big]= 1 - 2 \sol{J}\sol{X}_0 \sol{Y}_0 \pm  O(\sol{J}^2)$ and $\E\big[\phi(Y)^2\big]= 1 - 2 \sol{J}\sol{X}_0 \sol{Y}_0 \pm O(\sol{J}^2)$.
        \item [(ii)] $\E\big[ R_X^2\big] = 1 \pm O\big(\big|\sol{J}\big|\big)$ and $\E\big[ R_Y ^2\big] = 1 \pm O\big(\big|\sol{J}\big|\big)$.
        \item [(iii)] $\E[R_XR_Y] = -\sigma_{X_0}\sigma_{Y_0} \sol{J} + 2 \sol{X}_0\sol{Y}_0 \sigma_{X_0}\sigma_{Y_0}\sol{J}^2 \pm O\big(\big|\sol{J}\big|^3 \big)$.
        \item [(iv)] $\mathbb{E}[R_X R_Y\varphi(X)] = \pm O\big(\sol{J}^2\big)$ and  $\mathbb{E}[R_X R_Y \varphi(Y)] =\pm O\big(\sol{J}^2 \big) $.
        \item [(v)] $\E[R_XR_Y \varphi(X) \varphi(Y)]  = 1\pm O\big(\sol{J}^2 \big)$.
        \item [(vi)] $\E[MR_XR_Y] = \sigma_{X_0} \sigma_{Y_0} - (s+1)^2 \sol{X}_0 \sol{Y}_0  \sigma_{X_0} \sigma_{Y_0} \sol{J} \pm  O\big(\sol{J}^2\big)$.
        \item [(vii)] $\E[MR_XR_Y \phi(X)] = (s-1)\sol{X}_0\sigma_{Y_0} \pm O\big(\big|\sol{J}\big|\big)$ and $\E[MR_XR_Y \phi(Y)] = (s-1) \sol{Y}_0\sigma_{X_0} \pm O\big(\big| \sol{J} \big|\big)$.
        \item [(viii)] $\E[MR_XR_Y \phi(X) \phi(Y)] = (s-1)^2 \sol{X}_0\sol{Y}_0 \pm O\big(\big|\sol{J}\big|\big)$. 
    \end{enumerate}
    All the $\E$ here are with respect to $\mu_{J,a,b}$. 
\end{restatable}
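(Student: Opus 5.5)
Every expectation here is a finite sum over $(x,y)\in\{\pm1\}^2$ under $\mu_{J,a,b}$, and every integrand is an explicit rational function of $\sol{J}$, $x$, $y$, $\sol{X}_0$, $\sol{Y}_0$. The plan is therefore to reduce each item to a computation under the product measure $\mu_{0,a,b}$, in one of two ways: either invoke the Master Lemma (Lemma~\ref{lem:master}) with a suitable $F(\sol{J},x,y)$, or use the exact density from Lemma~\ref{lem:mu} and expand in $\sol{J}$ directly.

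\textbf{Reduction to polynomials in $x,y$.} First I rewrite all the ingredients in a uniform form.
\begin{itemize}
\item By Lemma~\ref{lem:mu}, $\sol{\bX}(y)=(\sol{X}_0+\sol{J}y)/(1+\sol{X}_0\sol{J}y)$.
\item Hence $R_X=(x-\sol{\bX}(y))/\sigma_{X_0}$. By Lemma~\ref{lem:bfx}, $R_X=\sigma_{X_0}\phi(x)-\sol{J}\sigma_{X_0}y\pm O(\sol{J}^2)$, and $R_Y$ has the symmetric expansion.
\item Similarly $M=(x+s\sol{\bX})(y+s\sol{\bY})$ expands as $(x+s\sol{X}_0)(y+s\sol{Y}_0)$ plus a first-order correction $s\sol{J}\bigl(\sigma_{X_0}^2y(y+s\sol{Y}_0)+\sigma_{Y_0}^2x(x+s\sol{X}_0)\bigr)$, plus $O(\sol{J}^2)$.
\end{itemize}
All these maps are smooth in $\sol{J}$ on $|\sol{J}|\le 1/2$: the denominators $1\pm\sol{X}_0\sol{J}$ stay at least $1/2$. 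So the hypotheses \eqref{eq:F-conditions} hold, with one caveat. The factors $1/\sigma_{X_0}$ could blow up, so the remainder bounds must not carry them. I will handle this by noting that $x-\sol{\bX}(y)$ and $x-\sol{X}_0$ both vanish to the right order when $\sigma_{X_0}\to 0$. Concretely, I write $x-\sol{\bX}(y)=\bigl((x-\sol{X}_0)-\sol{J}y(1-\sol{X}_0x)\bigr)/(1+\sol{X}_0\sol{J}y)$ and use $\E_{\mu_0}[(x-\sol{X}_0)^2]=\sigma_{X_0}^2$ and $1-\sol{X}_0x\le(1+|\sol{X}_0|)\cdot$(appropriate factor). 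This keeps every $\E_{\mu_0}|\cdot|$ bounded by an absolute constant.

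\textbf{Computing the coefficients.} With this in hand, each item becomes the following. The constant term is $\E_{\mu_0}[F(0)]$, computed using orthonormality of $1,\phi(X),\phi(Y),\phi(X)\phi(Y)$ under $\mu_0$ together with $\E_{\mu_0}[x\phi(x)]=\sigma_{X_0}$ and $\E_{\mu_0}[x^2\phi(x)]=\E_{\mu_0}[\phi(x)]=0$. The linear coefficient is $\E_{\mu_0}[\dot F(0)]+\E_{\mu_0}[(XY-\sol{X}_0\sol{Y}_0)F(0)]$.

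For example, in (iii) we have $F(0)=\sigma_{X_0}\sigma_{Y_0}\phi(x)\phi(y)$. Since $XY-\sol{X}_0\sol{Y}_0=\sigma_{X_0}\sigma_{Y_0}\phi\phi+\sol{X}_0\sigma_{Y_0}\phi(y)+\sol{Y}_0\sigma_{X_0}\phi(x)$, the second term contributes $\sigma_{X_0}^2\sigma_{Y_0}^2$. The derivative term contributes $-\sigma_{X_0}\sigma_{Y_0}\E[\phi(x)x+\phi(y)y]\cdot(\ldots)$, and these combine to $-\sigma_{X_0}\sigma_{Y_0}\sol{J}$.

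Items (i), (ii), (iv), (v), (vii), (viii) need only the constant term, or the constant term plus the linear term. Item (vi) needs the linear term, which is where the coefficient $(s+1)^2$ must emerge from combining:
\begin{itemize}
\item the $\dot M$ correction,
\item the $-\sol{J}y$, $-\sol{J}x$ corrections in $R_X$, $R_Y$,
\item the Radon--Nikodym term.
\end{itemize}

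\textbf{Main obstacle.} Item (iii) requires a \emph{second-order} term, with an $O(\sol{J}^3)$ error, which the Master Lemma does not give. Here I would avoid Taylor expansion and compute exactly. Using Lemma~\ref{lem:mu},
\[
\E[R_XR_Y]=\frac{1}{\sigma_{X_0}\sigma_{Y_0}}\sum_{x,y}\mu(x,y)\bigl(x-\sol{\bX}(y)\bigr)\bigl(y-\sol{\bY}(x)\bigr),
\]
which is an explicit rational function of $\sol{J}$. Clearing the common denominators $(1+\sol{J}\sol{X}_0\sol{Y}_0)$ and $(1\pm\sol{X}_0\sol{J})(1\pm\sol{Y}_0\sol{J})$, I expect the numerator to factor with $\sigma_{X_0}^2\sigma_{Y_0}^2$, which cancels the prefactor. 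A series expansion to order $\sol{J}^2$ then gives $-\sigma_{X_0}\sigma_{Y_0}\sol{J}+2\sol{X}_0\sol{Y}_0\sigma_{X_0}\sigma_{Y_0}\sol{J}^2$, with a uniformly bounded cubic remainder because the denominators are bounded below.

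This algebra, together with the bookkeeping of all first-order contributions in (vi), is the heavy part. I would carry it out symbolically, checking the factorization by evaluating at $\sol{X}_0=\sol{Y}_0=0$, where the answer should reduce to $-\sol{J}(1-\sol{J}^2)\cdots$, consistent with the $a=b=0$ computation in the overview.
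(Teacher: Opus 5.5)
Your plan is the paper's proof. The paper also uses the Master Lemma (Lemma~\ref{lem:master}, with the $L^1(\mu_{0,a,b})$ bounds of Lemma~\ref{lem:bounded-second-der}) for every item except (iii), and evaluates (iii) exactly via Lemma~\ref{lem:mu}. There $(1+\sol{X}_0x)(x-\sol{X}_0)=\sigma_{X_0}^2x$ supplies exactly the factor you anticipate, and the sum comes out to
\[
\E[R_XR_Y]=-\sigma_{X_0}\sigma_{Y_0}\sol{J}\cdot\frac{(1-\sol{J}^2)(1-\sol{J}\sol{X}_0\sol{Y}_0)}{(1+\sol{J}\sol{X}_0\sol{Y}_0)(1-\sol{J}^2\sol{X}_0^2)(1-\sol{J}^2\sol{Y}_0^2)}.
\]

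One slip needs fixing. The identity $1-\sol{X}_0x=x(x-\sol{X}_0)$ turns your expression for $x-\sol{\bX}(y)$ into exactly $R_X=\phi(x)\,\frac{1-\sol{J}xy}{1+\sol{J}\sol{X}_0y}=\phi(x)-\sol{J}\sigma_{X_0}y\pm O(\sol{J}^2)$. So the leading term is $\phi(x)$, not $\sigma_{X_0}\phi(x)$. This factorization also settles your $1/\sigma_{X_0}$ concern, since only $\E_{\mu_{0,a,b}}|\phi|^k\le 1$ for $k\le 2$ is needed. It also means $F(0)=\phi(x)\phi(y)$ in (iii), not $\sigma_{X_0}\sigma_{Y_0}\phi(x)\phi(y)$. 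The Radon--Nikodym term then contributes $\sigma_{X_0}\sigma_{Y_0}$ and the derivative term contributes $-2\sigma_{X_0}\sigma_{Y_0}$, giving the linear coefficient $-\sigma_{X_0}\sigma_{Y_0}$.
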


Equipped with these, we are ready to prove Lemma~\ref{lem:explicit-A}.

\begin{proof}[Proof of Lemma~\ref{lem:explicit-A}] We make a case for every entry of $A$ and use the fact that $\sol{J} = J + O(J^3)$.

\noindent{{$\mathbf{A_{11}, A_{22}}$}.} Plugging in $f = \phi(X)$ (corresponding to the vector $[1 \quad 0 \quad 0]$) into the RHS of \eqref{eq:local-ineqa}. 

Consequently, substituting $\phi(X)$ into the LHS of the theorem gives 
\[ A_{11} \underset{\eqref{eq:LXphix}}{=} \frac{(\lambda J)^2}{2} \cdot \E[R_X^2] + \frac{dJ^2}{2} \cdot \E\big[ \sol{\bX}(Y)^2 R_X^2\big], \]
where we used \eqref{eq:LYphix} so only the last two terms remain.
We bound the first term using Lemma~\ref{lem:consolidated}(i). For the second term, we combine Lemmas~\ref{lem:bfx} and ~\ref{lem:consolidated}(i) to get 
\[ \E\big[ \sol{\bX}(Y)^2 R_X^2\big] = \E\big[\sol{\bX}(Y)^2 \E \big[  R_X^2\, \big | \, Y\big]\big] = \big(\sol{X}_0^2 \pm  O\big(\big|\sol{J}\big|\big)\big)\big(1 \pm O\big(\big|\sol{J}\big|\big)\big) = \sol{X}_0^2 \pm O(|J|).\]
All in all, we get 
\[ A_{11} = \frac{J^2}{2}\,\big( \lambda^2 + d \sol{X}_0^2 \big) \pm O(|J|^3).\]

Similarly, we have $A_{22} = \frac{J^2}{2}\big(\lambda^2 + d \sol{Y}_0^2 \big) \pm O(|J|^3).$

\noindent{$\mathbf{A_{33}}$.} Plugging in $f = \phi(X)\phi(Y)$ (corresponding to the vector $[0 \quad 0 \quad 1]$) into the RHS of \eqref{eq:local-ineqa}, gives
\begin{align*}
    A_{33} &\underset{\eqref{eq:LXphixphiy}}{=} \E[R_XR_Y \phi(X)\phi(Y)] + J \E[MR_XR_Y\phi(X)\phi(Y)] + \frac{(\lambda J)^2}{2} \E \big[ R_X^2 \phi(Y)^2 + R_Y^2 \phi(X)^2 \big] \\&\hspace{4em}+ \frac{dJ^2}{2}\E\big[\sol{\bX}(Y)^2 R_X^2 \phi(Y)^2 + \sol{\bY}(X)^2R_Y^2 \phi(X)^2 \big] \underset{\text{Lemma~\ref{lem:consolidated}(iv)}}{=} 1 \pm O(|J|).
\end{align*}
To see the that third and fourth term are bounded by $O(J^2)$ we note, for example, that
\begin{align*}
    \E_{\mu_{J,a,b}} \big[ R_X^2 \phi(Y)^2 \big] &\underset{(\dagger)}{\leq}  \E_{\mu_{J,a,b}}[\phi(X)^2 \phi(Y)^2] \\
    &\underset{\text{Lemma~\ref{lem:Radon-Nikodym}}}{\leq} \big(1 + \sol{J}(xy - \sol{X}_0\sol{Y}_0 \big) + O\big( \sol{J}^2 \big) \big) \E_{\mu_0,a,b}[ \phi(X)^2 \phi(Y)^2] \\
    &\underset{\text{independence}}{\leq} \big(1 + 2 \sol{J} + O\big(\sol{J}^2\big) \big)\,\E_{\mu_0,a,b}[\phi(X)^2] \, \E_{\mu_0,a,b}[\phi(Y)^2]  \leq O(1),
\end{align*}
where $(\dagger)$ follows because
\begin{equation}\label{eq:var-cond-Y}
    \E_{\mu_{J,a,b}}[R_X^2 \, | \, Y] = \Var_{\mu_{J,a,b}}(\phi(X) \, | \, Y)\leq \E_{\mu_{J,a,b}}[\phi(X)^2 \, | \, Y]
\end{equation}
and so the tower property implies that 
\begin{equation}\label{eq:phiY2-RX2}
    \E_{\mu_{J,a,b}}[\phi(Y)^2 R_X^2] = \E_{\mu_{J,a,b}}[\phi(Y)^2 \E_{\mu_{J,a,b}}[R_X^2 \, | \, Y]] \underset{\eqref{eq:var-cond-Y}}{\leq} \E_{\mu_{J,a,b}}[\phi(Y)^2 \E_{\mu_{J,a,b}}[\phi(X)^2 \, | \, Y]] = \E_{\mu_{J,a,b}}[\phi(Y)^2 \phi(X)^2].  
\end{equation}
The above shows the third term is bounded. To show the boundedness of the fourth term we combine the above with $\sol{\bX}(y)^2 \leq 1$, for $y \in \{\pm 1\}$. 

\noindent{$\mathbf{A_{12}}$.} 
Plugging in $f = \alpha \phi(X) + \beta \phi(Y)$ into the RHS of \eqref{eq:local-ineqa}, we can read off $A_{12}$ as half of the coefficient of $\alpha\beta$,
 
\begin{align*}
    A_{12} &\underset{\eqref{eq:LXfLYf}}{=} \frac{1}{2} \E[R_XR_Y] + \frac{J}{2}\E[MR_XR_Y] \\
    &\underset{\text{Lemma~\ref{lem:consolidated}(ii),(iv)}}{=}-\frac{1}{2}\sigma_{X_0}\sigma_{Y_0} \sol{J}+ \sol{X}_0\sol{Y}_0\sigma_{X_0} \sigma_{Y_0} \sol{J}^2  + \frac{\sol{J}}{2}\big(\sigma_{X_0}\sigma_{Y_0}- (s+1)^2 \sol{X}_0 \sol{Y}_0 \sigma_{X_0} \sigma_{Y_0} \sol{J} \big) + O\big(\big|\sol{J}\big|^3\big) \\
    &= -\frac{J^2}{2}(s^2 + 2s -1)\sol{X}_0\sol{Y}_0\sigma_{X_0} \sigma_{Y_0} \pm O(|J|^3).
\end{align*}

\noindent{$\mathbf{A_{13}, A_{23}}$.} 
Plugging in $f = \alpha \phi(X) + \gamma\phi(X) \phi(Y)$ into the RHS of \eqref{eq:local-ineqa}, we can read off $A_{13}$ as half of the coefficient of $\alpha\gamma$,

\begin{align*}
     A_{13} &= \frac{1}{2}\E[R_XR_Y\phi(X)] + \frac{J}{2}\E[MR_XR_Y\phi(X)] + (\lambda J)^2 \E[\phi(Y) R_Y^2] + dJ^2\E\big[\sol{\bX}(y)^2 R_X^2 \phi(Y) \big] \\
     &\hspace{-1.5em}\underset{\text{Claim~\ref{lem:consolidated}(iii), (vi)}}{\leq} \frac{J}{2} (s-1)\sol{X}_0\sigma_{Y_0} + J^2 \left( \lambda^2 \cdot \E[\phi(Y) R_X^2 ] + d \E[\sol{\bX}(y)^2 R_X^2 \phi(Y)] \right) \leq J (s-1)\sol{X}_0\sigma_{Y_0} \pm O(J^2).
\end{align*}
Here the final inequality holds because similarly to \eqref{eq:phiY2-RX2}, 
we can show that
\begin{equation}\label{eq:phiY-RX}
    \E_{\mu_{J,a,b}}[|\phi(Y)| R_X^2] = \E_{\mu_{J,a,b}}[|\phi (Y)|\E_{\mu_{J,a,b}}[R_X^2 \, | \, Y]] \underset{\eqref{eq:var-cond-Y}}{\leq} \E_{\mu_{J,a,b}}[|\phi(Y)| \E_{\mu_{J,a,b}}[\phi(X)^2 \, | \, Y]] = \E_{\mu_{J,a,b}}[|\phi(Y)|\phi(X)^2], 
\end{equation}
which in turn shows that
\begin{align*}
    \E_{\mu_{J,a,b}}[\phi(Y) R_X^2] &\leq \E_{\mu_{J,a,b}}[|\phi(Y)| R_X^2] \underset{\substack{\text{Lemma~\ref{lem:Radon-Nikodym}}\\ \eqref{eq:phiY-RX}}}{\leq} \big(1 + 2 \sol{J} + O\big(\sol{J}^2\big) \big) \E_{\mu_{0,a,b}}[|\phi(Y)| \phi(X)^2] \\
    &\leq \big(1 + 2 \sol{J} + O\big(\sol{J}^2\big) \big) \E_{\mu_{0,a,b}}[|\phi(Y)|] \E_{\mu_{0,a,b}}[\phi(X)^2] \\
    &\underset{\text{Cauchy-Schwarz}}{\leq}\big(1 + 2 \sol{J} + O\big(\sol{J}^2\big) \big) \big(\E_{\mu_{0,a,b}}[\phi(Y)^2] \big)^{1/2}\E_{\mu_{0,a,b}}[\phi(X)^2] = O(1).
\end{align*}
This shows that $ \lambda^2 \cdot \E[\phi(Y) R_X^2 ]$ is bounded. To show that $d \E[\sol{\bX}(y)^2 R_X^2 \phi(Y)]$ is bounded we combine the above with the observation that $ \sol{\bX}(y)^2  \leq 1$ for $y \in \{ \pm 1 \}$. 

Similarly, we have  $A_{23} = \frac{J}{2} (s-1)\sol{Y}_0\sigma_{X_0} \pm O(J^2).$ 
To finish the proof we notice that after subtracting $A_0$, every entry of error matrix $E$ is bounded by $O(|J|)$. So, $\|E\|_{\op}\leq O(|J|)$. This completes the proof of Lemma~\ref{lem:explicit-A}.
\end{proof}
\section{Proof of Lemma~\ref{lem:consolidated}}
In this section, we prove Lemma~\ref{lem:consolidated}.
Before proving the statement we record a few equations which will be useful:
\begin{equation}\label{eq:Xphi}
    \E_{\mu_{0,a,b}}[X \phi(X)]= \frac{\E_{\mu_{0,a,b}}[X^2]- \sol{X}_0\E_{\mu_{0,a,b}}[X]}{\sigma_{X_0}} = \frac{1 - \sol{X}_0^2}{\sigma_{X_0}} = \sigma_{X_0}
\end{equation}
and we note that $(X- \sol{X}_0)^2 =   (1- \sol{X}_0^2) - 2X\sol{X}_0 + 2\sol{X}_0^2= \sigma_{X_0}^2-2\sol{X}_0(X - \sol{X}_0)$ so that
\begin{align}\label{eq:Xphisquare}
     \E_{\mu_{0,a,b}}[X \phi(X)^2]&= \E_{\mu_{0,a,b}}\bigg[X \cdot  \frac{\sigma_{X_0}^2 - 2 \sol{X}_0\big(X - \sol{X}_0\big)}{\sigma_{X_0}^2}\bigg] = \E_{\mu_{0,a,b}}\bigg[X \cdot \bigg( 1 - \frac{2 \sol{X}_0 \phi(X)}{\sigma_{X_0}}\bigg)\bigg] \notag \\&= \sol{X}_0- \frac{2\sol{X}_0}{\sigma_{X_0}} \cdot \E_{\mu_{0,a,b}}[X \phi(X)] \underset{\eqref{eq:Xphi}}{=}  \sol{X}_0- \frac{2\sol{X}_0}{\sigma_{X_0}} \cdot \sigma_{X_0}= - \sol{X}_0.
\end{align}

Throughout we assume that $\big|\sol{J} \big|\leq \tfrac{1}{2}$, and we recall $ |\sol{X}_0|, |\sol{Y}_0| \leq 1$ We have 
\begin{align}\label{eq:explicit-R}
    R_X(y) &= \frac{X - \sol{\bX}(y)}{\sigma_{X_0}} \underset{\text{Lemma}~\ref{lem:mu}}{=} \frac{\big(X - \sol{X}_0\big) \big(1 - \sol{J}Xy\big)}{\sigma_{X_0}\big(1+ \sol{J} \sol{X}_0 y)} \notag \\ &\hspace{2em} = \phi(X) \cdot \frac{1 - \sol{J}Xy}{1+ \sol{J}\sol{X}_0y}\leq O(\phi(X)) \quad\text{and}\quad R_Y(x) \leq O(\phi(Y)). 
\end{align}

Furthermore, 
\begin{equation}\label{eq:explicit-dotR}
    |\dot{R_X}(y)| = \bigg|\phi(X) \cdot \frac{d}{d\tilde{J}} \frac{1 - \sol{J}Xy}{1+ \sol{J}\sol{X}_0y} \bigg|= \bigg|\phi(X) \cdot \frac{y(X+\sol{X}_0)}{\big(1 + \sol{J} \sol{X}_0y \big)^2}\bigg| \leq O(\phi(X)) \quad\text{and}\quad |\dot{R_Y}(x)| \leq  O(\phi(Y)). 
\end{equation}
We also have 
\begin{equation}\label{eq:explicit-ddotR}
    |\ddot{R_X}(y)| =  \bigg|\phi(X) \cdot \frac{d}{d\sol{J}}\frac{y(X+\sol{X}_0)}{\big(1 + \sol{J} \sol{X}_0y \big)^2}\bigg| = \bigg|\phi(X) \cdot \frac{2 \sol{X}_0(X+ \sol{X}_0)}{(1+ \sol{J} \sol{X}_0y)^3}\bigg|\leq O(\phi(X)) \quad\text{and}\quad |\dot{R_Y}(x)| \leq  O(\phi(Y)). 
\end{equation}

By Lemma~\ref{lem:bfx}, we also have 
\begin{equation}\label{eq:taylor-R}
    R_X(y) \underset{\eqref{eq:LXphix}}{=} \phi(X) - \sol{J} \sigma_{X_0}y \pm O\big(\sol{J}^2 \big) \quad \text{and} \quad R_Y(x) \underset{\eqref{eq:LXphix}}{=} \phi(Y) - \sol{J} \sigma_{Y_0}x \pm O\big( \sol{J}^2\big).
\end{equation}

Next, we write,
\begin{align}\label{eq:taylor-M}
    M(\sol{J},x,y) &\underset{\text{Theorem }\ref{lem:main}}{=}
    (x+s\sol{X})(y+s\sol{Y})\notag\\&\underset{\text{Lemma~\ref{lem:bfx}}, \; s\leq 1}{=} \big(x + s\big(\sol{X}_0 + \sol{J} \sigma_{X_0}^2 y \big) \big) \big( y+ s\big( \sol{Y}_0 + \sol{J}\sigma_{Y_0}^2 x \big) \big)  \pm O\big(\sol{J}^2 \big) \notag \\
    &= \big( x + s \sol{X}_0 \big) \big( y + s \sol{Y}_0 \big) + \sol{J}\big( \big(x + s \sol{X}_0 \big) s \sigma_{Y_0}^2 x + \big(y + s \sol{Y}_0 \big) s \sigma_{X_0}^2 y \big) \pm O\big( \sol{J}^2 \big),
\end{align}
and we note that 
\begin{align}\label{eq:dotM}
    |\dot{M}(J,x,y)| &\leq s\big|\dot{\sol{\bX}}(y)\big|\cdot \big|y + s \sol{\bY}(x)\big| + s \big|\dot{\sol{\bY}}(x) \big| \cdot\big|x + s \sol{\bX}(y) \big| \notag\\
    &\hspace{-1em}\underset{\substack{x,y,s,\sol{\bX},\sol{\bY} \leq 1 \\ \text{Lemma~\ref{lem:mu}}}}{\leq} O\Bigg( \frac{y\big(1 - \sol{X}_0^2 \big)}{(1 + \sol{X}_0\sol{J}y)^2} \Bigg) + O\Bigg(\frac{x\big(1 - \sol{Y}_0^2 \big)}{(1 + \sol{Y}_0\sol{J}x)^2} \Bigg) \underset{\substack{x,y,\sol{X}_0,\sol{Y}_0 \leq 1 \\ \big|\sol{J}\big| \leq \tfrac{1}{2}}}{\leq} O(1).
\end{align}

Finally, we have 
\begin{align}\label{eq:ddotM}
    |\ddot{M}(J,x,y)| &\leq s\big| \ddot{\sol{\bX}}(y)\big|\cdot \big|y + s \sol{\bY}(x) \big| + s \big|\ddot{\sol{\bY}}(x) \big|\cdot \big| x +s \sol{\bX}(y) \big| + 2s^2 \big| \dot{\sol{\bX}}(y) \big| \big| \dot{\sol{\bY}}(x) \big| \notag\\
    &\hspace{-1em}\underset{\substack{x,y,s,\sol{\bX},\sol{\bY} \leq 1 \\ \text{Lemma~\ref{lem:mu}}}}{\leq} O\Bigg( \frac{\sol{X}_0(1 - \sol{X}_0^2)}{(1 + \sol{X}_0\sol{J}y)^3}  \Bigg) +  O\Bigg( \frac{\sol{Y}_0(1 - \sol{Y}_0^2)}{(1 + \sol{Y}_0\sol{J}x)^3}  \Bigg) + O(1)\underset{\substack{x,y,\sol{X}_0,\sol{Y}_0 \leq 1 \\ \big|\sol{J}\big| \leq \tfrac{1}{2}}}{\leq}  O(1).
\end{align}

\begin{lemma}\label{lem:bounded-second-der}
Let $k \geq 1$ be an integer and suppose Let $F\colon \mathbb{R} \times \{ \pm 1 \}^2 \to \mathbb{R}$ is such that 
\[ F(\sol{J},x,y) =  R_X({\sol{J}},y)^{n_{R_X}} R_Y({\sol{J}},x)^{n_{R_Y}} \phi(X)^{n_X} \phi(Y)^{n_Y} M(\sol{J},x,y)^{n_M} \]
such that $n_{R_X}+ n_X \leq 2$, $n_{R_Y} + n_Y \leq 2$ and $n_M \leq 1$. 
Then there is a constant $C$ such that
\begin{equation}\label{eq:derivative-L1-bound}
\sup_{|{\sol{J}}|\leq 1/2} \E_{\mu_{0,a,b}}|F|, \, \sup_{|{\sol{J}}|\le1/2}\E_{\mu_{0,a,b}}\bigg|\frac{\partial F}{\partial \sol{J}}\bigg|, \, \sup_{|{\sol{J}}|\le1/2}\E_{\mu_{0,a,b}}\bigg|\frac{\partial^2 F}{\partial \sol{J}^2}\bigg|
 \leq C.
\end{equation}
\end{lemma}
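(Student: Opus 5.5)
Since $\mu_{0,a,b}$ is a product measure, the plan is to bound each factor of $F$ and its two $\sol J$-derivatives pointwise by a constant times a fixed power of $|\phi(X)|$ or $|\phi(Y)|$. Then the expectation under $\mu_{0,a,b}$ factorizes into one moment in $X$ and one in $Y$, and only these two one-variable moments need to be controlled.

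The pointwise bounds come from earlier in the section. By \eqref{eq:explicit-R}, \eqref{eq:explicit-dotR} and \eqref{eq:explicit-ddotR}, for $|\sol J|\le 1/2$ each of $R_X$, $\dot R_X$, $\ddot R_X$ is at most $O(|\phi(X)|)$, and likewise for $R_Y$ with $\phi(Y)$. The point is that the denominators $1+\sol J\sol X_0 y$ lie in $[1/2,3/2]$ and the numerators are bounded. By \eqref{eq:dotM}, \eqref{eq:ddotM} and $|M|\le (1+s)^2\le 4$, the function $M$ and its first two derivatives are $O(1)$. The factors $\phi(X)^{n_X}$ and $\phi(Y)^{n_Y}$ do not depend on $\sol J$ at all.

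Differentiating $F$ once or twice in $\sol J$ by the Leibniz rule gives a bounded number of terms. Each term is the same product with some factors replaced by their derivatives. Every such term is therefore bounded by
\[
C_0\,|\phi(X)|^{n_{R_X}+n_X}\,|\phi(Y)|^{n_{R_Y}+n_Y},
\]
with $C_0$ depending only on the exponents, which are bounded by the hypotheses. The same bound holds for $F$ itself, uniformly in $|\sol J|\le 1/2$.

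Taking $\E_{\mu_{0,a,b}}$ and using independence of $X$ and $Y$ under $\mu_{0,a,b}$, the expectation becomes $C_0\,\E|\phi(X)|^{p}\,\E|\phi(Y)|^{q}$ with $p,q\le 2$. Since $\E_{\mu_{0,a,b}}\phi(X)^2=1$, Jensen's inequality (or Cauchy–Schwarz) gives $\E|\phi(X)|^p\le 1$ for $p\le 2$, and the same holds for $Y$. This yields \eqref{eq:derivative-L1-bound} with $C$ independent of $a,b$.

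The one delicate point is uniformity in $a,b$. We cannot bound $|\phi(X)|$ pointwise, because $\sigma_{X_0}$ can be tiny. This is exactly why the exponent constraints $n_{R_X}+n_X\le 2$ and $n_{R_Y}+n_Y\le 2$ are needed: only second moments of $\phi$ are uniformly controlled, while higher moments blow up as $\sigma_{X_0}\to 0$.

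I also need to check that the derivative bounds on $R_X$ keep a factor of $\phi(X)$ and not $1/\sigma_{X_0}$. This holds because \eqref{eq:explicit-R} factors $R_X=\phi(X)\cdot g(\sol J, X,y)$ with $g$ smooth and bounded together with its derivatives, so all $\sol J$-derivatives act on $g$ only.
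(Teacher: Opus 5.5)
Your proposal is correct and follows the paper's proof. Both bound $R_X,R_Y$ and their $\sol{J}$-derivatives pointwise by $O(|\phi(X)|)$ and $O(|\phi(Y)|)$ via \eqref{eq:explicit-R}--\eqref{eq:explicit-ddotR}, bound $M,\dot{M},\ddot{M}$ by $O(1)$, and then use independence under $\mu_{0,a,b}$ together with $\E_{\mu_{0,a,b}}|\phi(X)|^{p}\le 1$ for $p\le 2$.

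Your derivative bound $C_0|\phi(X)|^{n_{R_X}+n_X}|\phi(Y)|^{n_{R_Y}+n_Y}$ is slightly sharper than the paper's $O(1)(1+|\phi(X)|^{n_{R_X}+n_X})(1+|\phi(Y)|^{n_{R_Y}+n_Y})$. Because it is uniform in $\sol{J}$, it also directly covers the evaluation at $\sol{J}^*(x,y)$ required in Lemma~\ref{lem:master}. One trivial slip: if $s$ may be negative, the bound should read $|M|\le(1+|s|)^2$ rather than $(1+s)^2$.
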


\begin{proof}
By \eqref{eq:explicit-R}, \eqref{eq:explicit-dotR}, \eqref{eq:explicit-ddotR} and $\big|\sol{J} \big| \leq \tfrac{1}{2}$ we write
\begin{align*}
    F \leq O(1) |\phi(X)|^{n_X+n_{R_X}}|\phi(Y)|^{n_Y+n_{R_Y}}|M|^{n_M} \underset{\eqref{eq:taylor-M}}{\leq} O\big(|\phi(X)|^{n_X+n_{R_X}}|\phi(Y)|^{n_Y+n_{R_Y}}\big),
\end{align*}
and
\begin{equation*}
    \big| \dot{F} \big|, \big|\ddot{F}\big|\underset{\eqref{eq:taylor-M}, \eqref{eq:dotM}, \eqref{eq:ddotM}}{\leq} O(1) \big(1+|\phi(X)|^{n_X+n_{R_X}} \big) \big(1+ |\phi(Y)|^{n_Y+n_{R_Y}} \big). 
\end{equation*}

By assumption, $n_X + n_{R_X}, n_Y + n_{R_Y} \in \{ 0, 1 , 2 \}$. It remains to show that we can upper bound $\E_{\mu_{0,a,b}}[|\phi(X)|^{m_X} |\phi(Y)|^{m_Y}]$ by a constant for $m_X \in \{0,1,2\}$ and $m_Y \in \{0,1,2\}$. To that end, note
\begin{equation}\label{eq:phi-properties}
 \E_{\mu_{0,a,b}}|\phi(X)| \underset{\text{Cauchy-Schwarz}}{\leq} (\E[\phi(X)^2])^{1/2} = 1 \quad\text{and} \quad\E_{\mu_{0,a,b}}|\phi(Y)|\leq 1.
\end{equation}
Consequently, it follows that
\[ \E_{\mu_{0,a,b}}[|\phi(X)|^{m_X} |\phi(Y)|^{m_Y}] \underset{\text{independence}}{\leq} \E_{\mu_{0,a,b}}[|\phi(X)|^{m_X}]\E_{\mu_{0,a,b}}[ |\phi(Y)|^{m_Y}] \underset{\eqref{eq:phi-properties}}{\leq} 1. \qedhere\]
\end{proof}

\Taylor*
\begin{proof}
\textbf{(i).} Let $F = \phi(X)^2$. By Lemma~\ref{lem:bounded-second-der}, the conditions  \eqref{eq:F-conditions}  ($\E[F(0)] \leq \E[|F(0)|] \leq O(1)$, $\E[|\dot{F}(0)|] \leq O(1) $  and $ \E\big[\big|\ddot{F}\big(\sol{J}^* \big)\big|\big] \leq O(1) $)  hold for this choice of $F$. So we apply Lemma~\ref{lem:master} to $F$. 
We observe that $\frac{d}{d\sol{J}} \phi(X)^2 \bigr|_{\sol{J}= 0} = 0$ since $\phi(X)$ does not depend on $\sol{J}$.  
Consequently, 
\begin{align*}
    \E_{\mu_{J,a,b}}\big[\phi(X)^2\big] &= \E_{\mu_{0,a,b}}[\phi(X)^2] + \sol{J}\mathbb{E}_{\mu_{0,a,b}}[(XY - \sol{X}_0 \sol{Y}_0)\phi(X)^2] \pm O\big(\sol{J}^2\big) \\&= 1+ \sol{J}\big(\E_{\mu_{0,a,b}}[X\phi(X)^2] \E_{\mu_{0,a,b}}[Y] - \sol{X_0}\sol{Y_0})\big) \pm O \big(\sol{J}^2 \big) =1 - 2 \sol{X}_0\sol{Y}_0\sol{J} \pm O\big( \sol{J}^2 \big).
\end{align*}
The second assertion follows by exchanging the roles of $X$ and $Y$.

\textbf{(ii).} Let $F = R_X^2$. By Lemma~\ref{lem:bounded-second-der}, the conditions \eqref{eq:F-conditions} hold for this choice of $F$, so we apply Lemma~\ref{lem:master} to $F$.  We first use \eqref{eq:taylor-R} to write 
\[ R_X(y)^2 = \phi(X)^2 -2 \sol{J}\sigma_{X_0} \phi(X) y \pm O(\sol{J}^2),\] 
where there is an absolute constant in the $O(\cdot)$. This implies that $\E_{\mu_{0,a,b}}[F(0)] = \E_{\mu_{0,a,b}}[\phi(X)^2] = 1$. Furthermore,
\begin{align*}
    \E_{\mu_{0,a,b}}[\dot{F}(0)] + \E_{\mu_{0,a,b}}\big[\big(XY - \sol{X}_0\sol{Y}_0 \big) F(0) \big] &\underset{\text{triangle ineq}}{\leq} \E_{\mu_{0,a,b}}[|\dot{F}(0)|] + \E_{\mu_{0,a,b}}\big[\big|\big(XY - \sol{X}_0\sol{Y}_0 \big) F(0) \big| \big] \\
    &\leq \E_{\mu_{0,a,b}}[|\dot{F}(0)|] + 2\E_{\mu_{0,a,b}}[|F(0)|] \underset{\text{Lemma~\ref{lem:bounded-second-der}}}{\leq} O(1).
\end{align*}
Consequently, the linear term in $\sol{J}$ of $R_X^2$ is $O\big(\sol{J} \big)$.

\textbf{(iii).} We have the following explicit expressions:
\[ R_X(y)= \frac{X - \sol{X}(y)}{\sigma_{X_0}} \underset{\text{Lemma~\ref{lem:mu}}}{=} \frac{(X-\sol{X}_0)(1-\sol{J}Xy)}{\sigma_{X_0}\,\Big(1+\sol{J}\sol{X}_0y\Big)} \quad \text{and}\quad R_Y(x)\underset{\text{Lemma~\ref{lem:mu}}}{=} \frac{(Y-\sol{Y}_0)(1-\sol{J}Yx)}{\sigma_{Y_0}\,\Big(1+\sol{J}\sol{Y}_0x\Big)}.\]
It follows that
\begin{equation}\label{eq:muUV}
    \mu(X,Y) R_X(Y)R_Y(X) \underset{\text{Lemma~\ref{lem:mu}}}{=} \frac{\sigma_{X_0}\sigma_{Y_0} (1- \sol{J}^2)}{4(1+\sol{J}\sol{X}_0\sol{Y}_0)} \cdot \frac{XY(1 -\sol{J} XY)}{(1+\sol{J}\sol{X}_0Y)(1+\sol{J}\sol{Y}_0X)}
\end{equation}
    so that 
    \begin{align*}
        \E[R_XR_Y] &= \big(-\sigma_{X_0} \sigma_{Y_0} \sol{J} \big) \cdot \frac{\big(1 - \sol{J}^2)}{ \big(1 - \sol{J}^2 \sol{X}_0^2 \big)\big(1 - \sol{J}^2 \sol{Y}_0^2\big)} \cdot \frac{1- \sol{J}\sol{X}_0\sol{Y}_0}{1+ \sol{J}\sol{X}_0\sol{Y}_0 }  \\
        &= \big(-\sigma_{X_0} \sigma_{Y_0} \sol{J} \big) \cdot \big(1 \pm O\big(\sol{J}^2 \big)\big) \cdot \big (1 - 2 \sol{J}\sol{X}_0\sol{Y}_0 \pm O\big(|\sol{J}|^2\big)\big).
    \end{align*}
Expanding this gives the desired.

\textbf{(iv).} 
Let $F = R_X R_Y \phi(X)$.  By Lemma~\ref{lem:bounded-second-der}, the conditions \eqref{eq:F-conditions} of  Lemma~\ref{lem:master} hold, so we apply Lemma~\ref{lem:master} to $F$. We first write 
\[ R_XR_Y \phi(X)= \phi(X)^2 \phi(Y) - \sol{J} \big( \phi(X)^2 \sigma_{Y_0}X + \phi(X) \phi(Y) \sigma_{X_0} Y \big) \pm O\big(\sol{J}^2 \big). \]

The expectation of the leading term is $\E_{\mu_{0,a,b}}[F(0)] = \E_{\mu_{0,a,b}}[\phi(X)^2] \E_{\mu_{0,a,b}}[\phi(Y)] = 0$. The expectation of the term linear in $\sol{J}$ is 
\begin{align*}
    &-\E_{\mu_{0,a,b}}\big[\phi(X)^2 \sigma_{Y_0}X + \phi(X)\phi(Y) \sigma_{X_0}Y\big] + \E_{\mu_{0,a,b}}\big[(XY - \sol{X}_0\sol{Y}_0) \phi(X)^2 \phi(Y)\big] \\
    &= -\sigma_{Y_0} \E_{\mu_{0,a,b}}[\phi(X)^2X] -\sigma_{X_0} \E_{\mu_{0,a,b}}[\phi(X)]\E_{\mu_{0,a,b}}[\phi(Y)Y] \\ 
    &\hspace{8em}+ \E_{\mu_{0,a,b}}[X\phi(X)^2] \E_{\mu_{0,a,b}}[Y\phi(Y)]-\sol{X}_0\sol{Y}_0\E_{\mu_{0,a,b}}[\phi(X)^2]\E[\phi(Y)] \\
    &\underset{\eqref{eq:Xphi},\, \eqref{eq:Xphisquare}}{=} \sigma_{Y_0}\sol{X}_0 -\sol{X}_0\sigma_{Y_0}  = 0.
\end{align*}

This gives $\E_{\mu_{J,a,b}}[R_XR_Y\phi(X)] = O\big(\sol{J}^2 \big)$. Interchanging the roles of $X$ and $Y$, we also get $\E_{\mu_{J,a,b}}[R_XR_Y\phi(Y)] = O\big(\sol{J}^2 \big)$.

\textbf{(v).} Let $F = R_X R_Y \phi(X)\phi(Y)$.  By Lemma~\ref{lem:bounded-second-der}, the conditions \eqref{eq:F-conditions} of  Lemma~\ref{lem:master} hold, so we apply Lemma~\ref{lem:master} to $F$. We first write 
\[ R_XR_Y \phi(X) \phi(Y)\underset{\eqref{eq:taylor-R}}{=} \phi(X)^2 \phi(Y)^2 - \sol{J} \big( \phi(X)^2 \phi(Y) \sigma_{Y_0}X + \phi(X) \phi(Y)^2 \sigma_{X_0} Y \big) \pm O\big(\sol{J}^2 \big). \]

% \MA{The LHS should be $\E[R_XR_Y\phi(X)]$}

The expected value of the leading term is $\E_{\mu_{0,a,b}}[\phi(X)^2\phi(Y)^2] = \E_{\mu_{0,a,b}}[\phi(X)^2] \E_{\mu_{0,a,b}}[\phi(Y)^2] = 1$. The expected value of the term linear in $\sol{J}$ is 
\begin{align*}
    &-\E_{\mu_{0,a,b}}\big[\phi(X)^2 \phi(Y) \sigma_{Y_0}X + \phi(X)\phi(Y)^2 \sigma_{X_0}Y] + \E_{\mu_{0,a,b}}[(XY - \sol{X}_0\sol{Y}_0) \phi(X)^2 \phi(Y)^2\big] \\
    &= -\sigma_{Y_0} \E_{\mu_{0,a,b}}[X\phi(X)^2] \E_{\mu_{0,a,b}}[\phi(Y)]-\sigma_{X_0} \E_{\mu_{0,a,b}}[Y\phi(Y)^2] \E_{\mu_{0,a,b}}[\phi(X)] \\&\hspace{4em}+ \E_{\mu_{0,a,b}}[X\phi(X)^2]\E_{\mu_{0,a,b}}[Y\phi(Y)^2]- \sol{X}_0\sol{Y}_0\E_{\mu_{0,a,b}}[\phi(X)^2]\E_{\mu_{0,a,b}}[\phi(Y)^2]  \underset{\eqref{eq:Xphisquare}}{=} 0. 
\end{align*}

\textbf{(vi).} Let $F = MR_X R_Y$.  By Lemma~\ref{lem:bounded-second-der}, the conditions \eqref{eq:F-conditions} of  Lemma~\ref{lem:master} hold, so we apply Lemma~\ref{lem:master} to $F$. (Henceforth we no longer write the Taylor expansion of the functions explicitly and instead write each term separately.) First, we compute the constant term.
\begin{align}\label{eq:constant-MRXRY}
    \E_{\mu_{0,a,b}} [M(0)R_X(0)R_Y(0)] &\underset{\eqref{eq:taylor-R},\eqref{eq:taylor-M}}{=} \E_{\mu_{0,a,b}}\big[ \big(X + s \sol{X}_0 \big) \phi(X)\big] \cdot \E_{\mu_{0,a,b}}\big[ \big( Y + s \sol{Y}_0 \big) \phi(Y) \big] \notag \\
    &= \big(\E_{\mu_{0,a,b}}[X \phi(X)]+ s \sol{X}_0\E_{\mu_{0,a,b}}[\phi(X)] \big)\big(\E_{\mu_{0,a,b}}[Y \phi(Y)]+ s \sol{Y}_0\E_{\mu_{0,a,b}}[\phi(Y)] \big) \notag \\
    &\underset{\eqref{eq:Xphi}}{=} \sigma_{X_0}\sigma_{Y_0}.
\end{align}

The linear in $\sol{J}$ term of $ MR_XR_Y$ can be computed as follows: first, we compute using \eqref{eq:taylor-M} and \eqref{eq:taylor-R} the coefficient coming from $\dot{F}_1 (0)$. All the expectations below are with respect to $\mu_{0,a,b}$.
\begin{align}\label{eq:first-MRXRY}
    &\EE \bigg[\frac{d}{d\sol{J}}MR_XR_Y \big|_{\sol{J} = 0} \bigg]= \E\big[ \phi(X) \phi(Y) \big( \big(X + s \sol{X}_0 \big) s \sigma_{Y_0}^2 X + \big(Y + s \sol{Y}_0 \big) s \sigma_{X_0}^2 Y\big)  \notag \\ &\hspace{3em} - \sigma_{X_0} Y \phi(Y) \big(X + s \sol{X}_0 \big) \big(Y + s \sol{Y}_0 \big) - \sigma_{Y_0} X \phi(X) \big(X + s \sol{X}_0 \big) \big( Y + s \sol{Y}_0 \big) \big] \notag \\
    &= s \sigma_{Y_0}^2\E_{\mu_{0,a,b}}[\phi(Y)] \E\big[X\phi(X)\big( X + s \sol{X}_0\big) \big]+s \sigma_{X_0}^2\E[\phi(X)] \E_{\mu_{0,a,b}}\big[Y\phi(Y)\big( Y + s \sol{Y}_0\big) \big] \notag \\
    &\hspace{3em} -\sigma_{X_0}\left( \E [\phi(Y)] s\sol{X}_0 + s^2 \sol{X}_0 \sol{Y}_0 \E[Y \phi(Y)] + \E[\phi(Y)]\E[X] + s \sol{Y}_0\E[Y\phi(Y)] \E[X]\right) \notag \\
     &\hspace{3em} -\sigma_{Y_0}\left( \E [\phi(X)] s\sol{Y}_0 + s^2 \sol{Y}_0 \sol{X}_0 \E[X \phi(X)] + \E[\phi(X)]\E[Y] + s \sol{X}_0\E[X\phi(X)] \E[Y]\right) \notag \\
    &= -2(s^2 + s) \sigma_{X_0} \sigma_{Y_0} \sol{X}_0 \sol{Y}_0.
\end{align}
Second, we compute the coefficient coming from $\big(XY - \sol{X}_0\sol{Y}_0 \big)M(0)R_X(0)R_Y(0)$ using $\E_{\mu_{0,a,b}}[\phi(X)] = \E_{\mu_{0,a,b}}[\phi(Y)] = 0$ as follows 
\begin{align}\label{eq:second-MRXRY}
    \E_{\mu_{0,a,b}} \big[\big(XY - \sol{X}_0 \sol{Y}_0 \big) \big(X + s \sol{X}_0 \big) \phi(X) \big(Y + s \sol{Y}_0 \big)  \phi(Y)\big] &\underset{\eqref{eq:constant-MRXRY}}{} \E_{\mu_{0,a,b}}\big[s^2XY \sol{X}_0 \sol{Y}_0 \phi(X) \phi(Y)\big] - \sol{X}_0\sol{Y}_0 \sigma_{X_0} \sigma_{Y_0} \notag \\
    &= (s^2 -1) \sigma_{X_0} \sigma_{Y_0}\sol{X}_0\sol{Y}_0.
\end{align}

We add \eqref{eq:first-MRXRY} and \eqref{eq:second-MRXRY} to get the term linear in $\sol{J}$, and combine with the constant term \eqref{eq:constant-MRXRY}, which gives the first assertion. The second assertion of (v) follows by interchanging the roles of $X$ and $Y$ above.

\textbf{(vii).} Let $F = MR_XR_Y\phi(X)$. By Lemma~\ref{lem:bounded-second-der}, the conditions \eqref{eq:F-conditions} of  Lemma~\ref{lem:master} hold, so we apply Lemma~\ref{lem:master} to $F$. 
\begin{align}\label{eq:constant-MRXRYphi}
    &\E_{\mu_{0,a,b}} [F(0)] = \E_{\mu_{0,a,b}}\big[ \big(X + s \sol{X}_0 \big) \phi(X)^2\big] \cdot \E_{\mu_{0,a,b}}\big[ \big( Y + s \sol{Y}_0 \big) \phi(Y) \big] \notag \\
    &= \big(\E_{\mu_{0,a,b}}[X \phi(X)^2]+ s \sol{X}_0\E_{\mu_{0,a,b}}[\phi(X)^2] \big)\big(\E_{\mu_{0,a,b}}[Y \phi(Y)]+ s \sol{Y}_0\E_{\mu_{0,a,b}}[\phi(Y)] \big) \notag \\
    &\underset{\eqref{eq:Xphisquare}}{=} (s-1)\sol{X}_0 \cdot \sigma_{Y_0}.
\end{align}
As in (ii), using another application of Lemma~\ref{lem:bounded-second-der}, the linear term in $\sol{J}$ of $ MR_XR_Y\phi(X)$ is $O\big(\sol{J}\big)$. Interchanging the roles of $X$ and $Y$, we obtain the second assertion similarly. 

\textbf{(viii).} Let $F = MR_XR_Y\phi(X)\phi(Y)$. By Lemma~\ref{lem:bounded-second-der}, the conditions \eqref{eq:F-conditions} of  Lemma~\ref{lem:master} hold, so we apply Lemma~\ref{lem:master} to $F$. 
\begin{align}\label{eq:constant-MRXRYphisquare}
    &\E_{\mu_{0,a,b}} [F(0)] \\&= \E_{\mu_{0,a,b}}\big[ \big(X + s \sol{X}_0 \big) \phi(X)^2\big] \cdot \E_{\mu_{0,a,b}}\big[ \big( Y + s \sol{Y}_0 \big) \phi(Y)^2 \big] \notag \\
    &= \big(\E_{\mu_{0,a,b}}[X \phi(X)^2]+ s \sol{X}_0\E_{\mu_{0,a,b}}[\phi(X)^2] \big)\big(\E_{\mu_{0,a,b}}[Y \phi(Y)^2]+ s \sol{Y}_0\E_{\mu_{0,a,b}}[\phi(Y)^2] \big) \notag \\
    &\underset{\eqref{eq:Xphisquare}}{=} (s-1)^2\sol{X}_0 \sol{Y}_0.
\end{align}

As in (ii), using another application of Lemma~\ref{lem:bounded-second-der}, the linear term in $\sol{J}$ of $ MR_XR_Y\phi(X)\phi(Y)$ is $O\big(\sol{J}\big)$. \qedhere
\end{proof}
\bibliographystyle{alpha}
\bibliography{ref.bib}

@inproceedings {AKV24,
    AUTHOR = {Anari, Nima and Koehler, Frederic and Vuong, Thuy-Duong},
     TITLE = {Trickle-down in localization schemes and applications},
 BOOKTITLE = {S{TOC}'24---{P}roceedings of the 56th {A}nnual {ACM}
              {S}ymposium on {T}heory of {C}omputing},
     PAGES = {1094--1105},
 PUBLISHER = {ACM, New York},
      YEAR = {[2024] \copyright 2024},
      ISBN = {979-8-4007-0383-6},
   MRCLASS = {68Q87},
  MRNUMBER = {4764887},
       DOI = {10.1145/3618260.3649622},
       URL = {https://doi.org/10.1145/3618260.3649622},
}

@article {CE25,
    AUTHOR = {Chen, Yuansi and Eldan, Ronen},
     TITLE = {Localization schemes: a framework for proving mixing bounds
              for {M}arkov chains},
   JOURNAL = {Duke Math. J.},
  FJOURNAL = {Duke Mathematical Journal},
    VOLUME = {174},
      YEAR = {2025},
    NUMBER = {8},
     PAGES = {1431--1510},
      ISSN = {0012-7094,1547-7398},
   MRCLASS = {60J22 (60H30 60K35 82M60)},
  MRNUMBER = {4916109},
MRREVIEWER = {Udrea\ P\u aun},
       DOI = {10.1215/00127094-2024-0063},
       URL = {https://doi.org/10.1215/00127094-2024-0063},
}

@article {V00,
    AUTHOR = {Vitale, Richard A.},
     TITLE = {Some comparisons for {G}aussian processes},
   JOURNAL = {Proc. Amer. Math. Soc.},
  FJOURNAL = {Proceedings of the American Mathematical Society},
    VOLUME = {128},
      YEAR = {2000},
    NUMBER = {10},
     PAGES = {3043--3046},
      ISSN = {0002-9939,1088-6826},
   MRCLASS = {60G15 (60E15)},
  MRNUMBER = {1664383},
       DOI = {10.1090/S0002-9939-00-05367-3},
       URL = {https://doi.org/10.1090/S0002-9939-00-05367-3},
}

@article {BY88,
    AUTHOR = {Bai, Z. D. and Yin, Y. Q.},
     TITLE = {Necessary and sufficient conditions for almost sure
              convergence of the largest eigenvalue of a {W}igner matrix},
   JOURNAL = {Ann. Probab.},
  FJOURNAL = {The Annals of Probability},
    VOLUME = {16},
      YEAR = {1988},
    NUMBER = {4},
     PAGES = {1729--1741},
      ISSN = {0091-1798,2168-894X},
   MRCLASS = {60F99 (15A52)},
  MRNUMBER = {958213},
MRREVIEWER = {Eric\ V.\ Slud},
       URL =
              {http://links.jstor.org/sici?sici=0091-1798(198810)16:4<1729:NASCFA>2.0.CO;2-C&origin=MSN},
}

@article{SZ81,
  title = {Dynamic Theory of the Spin-Glass Phase},
  author = {Sompolinsky, H. and Zippelius, Annette},
  journal = {Phys. Rev. Lett.},
  volume = {47},
  issue = {5},
  pages = {359--362},
  numpages = {0},
  year = {1981},
  month = {Aug},
  publisher = {American Physical Society},
  doi = {10.1103/PhysRevLett.47.359},
  url = {https://link.aps.org/doi/10.1103/PhysRevLett.47.359}
}

@book{MPV86,
author = {Mezard, M and Parisi, G and Virasoro, M},
title = {Spin Glass Theory and Beyond},
publisher = {WORLD SCIENTIFIC},
year = {1986},
doi = {10.1142/0271},
address = {},
edition   = {},
URL = {https://www.worldscientific.com/doi/abs/10.1142/0271},
eprint = {https://www.worldscientific.com/doi/pdf/10.1142/0271}
}

@article {BB19,
    AUTHOR = {Bauerschmidt, Roland and Bodineau, Thierry},
     TITLE = {A very simple proof of the {LSI} for high temperature spin
              systems},
   JOURNAL = {J. Funct. Anal.},
  FJOURNAL = {Journal of Functional Analysis},
    VOLUME = {276},
      YEAR = {2019},
    NUMBER = {8},
     PAGES = {2582--2588},
      ISSN = {0022-1236,1096-0783},
   MRCLASS = {82D30},
  MRNUMBER = {3926125},
MRREVIEWER = {Ji\ Oon\ Lee},
       DOI = {10.1016/j.jfa.2019.01.007},
       URL = {https://doi.org/10.1016/j.jfa.2019.01.007},
}

@article{EKZ22,
  author  = {Eldan, Ronen and Koehler, Frederic and Zeitouni, Ofer},
  title   = {A spectral condition for spectral gap: fast mixing in
             high-temperature {Ising} models},
  journal = {Probability Theory and Related Fields},
  year    = {2022},
  volume  = {182},
  number  = {3},
  pages   = {1035--1051},
  doi     = {10.1007/s00440-021-01085-x},
  url     = {https://doi.org/10.1007/s00440-021-01085-x},
  issn    = {1432-2064}
}

@inproceedings{AJKPV22,
author = {Anari, Nima and Jain, Vishesh and Koehler, Frederic and Pham, Huy Tuan and Vuong, Thuy-Duong},
title = {Entropic independence: optimal mixing of down-up random walks},
year = {2022},
isbn = {9781450392648},
publisher = {Association for Computing Machinery},
address = {New York, NY, USA},
url = {https://doi.org/10.1145/3519935.3520048},
doi = {10.1145/3519935.3520048},
booktitle = {Proceedings of the 54th Annual ACM SIGACT Symposium on Theory of Computing},
pages = {1418–1430},
numpages = {13},
location = {Rome, Italy},
series = {STOC 2022}
}

@INPROCEEDINGS{AMS22,
  author={Alaoui, Ahmed El and Montanari, Andrea and Sellke, Mark},
  booktitle={2022 IEEE 63rd Annual Symposium on Foundations of Computer Science (FOCS)}, 
  title={Sampling from the {S}herrington-{K}irkpatrick {G}ibbs measure via algorithmic stochastic localization}, 
  year={2022},
  volume={},
  number={},
  pages={323-334},
  doi={10.1109/FOCS54457.2022.00038}}

@article{Celetano24,
author = {Michael Celentano},
title = {{Sudakov–Fernique post-AMP, and a new proof of the local convexity of the TAP free energy}},
volume = {52},
journal = {The Annals of Probability},
number = {3},
publisher = {Institute of Mathematical Statistics},
pages = {923 -- 954},
year = {2024},
doi = {10.1214/23-AOP1675},
URL = {https://doi.org/10.1214/23-AOP1675}
}

@misc{DLSS26,
      title={Weak {P}oincar{\'e} {I}nequalities via {A}pproximate {S}tochastic {L}ocalization: {A}pplication to {S}ampling the {S}herrington-{K}irkpatrick {M}odel}, 
      author={Ewan Davies and Holden Lee and Juspreet Singh Sandhu and Jonathan Shi},
      year={2026},
      eprint={2607.08160},
      archivePrefix={arXiv},
      primaryClass={math.PR},
      url={https://arxiv.org/abs/2607.08160}, 
}

@misc{GJMPPS26,
      title={A simple proof of rapid mixing on random regular graphs beyond uniqueness}, 
      author={Andreas Göbel and Matthew Jenssen and Marcus Michelen and Marcus Pappik and Will Perkins and Leon Schiller},
      year={2026},
      eprint={2606.27545},
      archivePrefix={arXiv},
      primaryClass={math.PR},
      url={https://arxiv.org/abs/2606.27545}, 
}

@article{Opp18,
author = {Oppenheim, Izhar},
title = {Local Spectral Expansion Approach to High Dimensional Expanders Part I: Descent of Spectral Gaps},
year = {2018},
issue_date = {March     2018},
publisher = {Springer-Verlag},
address = {Berlin, Heidelberg},
volume = {59},
number = {2},
issn = {0179-5376},
url = {https://doi.org/10.1007/s00454-017-9948-x},
doi = {10.1007/s00454-017-9948-x},
journal = {Discrete Comput. Geom.},
month = mar,
pages = {293–330},
numpages = {38}
}

@article{ALO24,
author = {Anari, Nima and Liu, Kuikui and Gharan, Shayan Oveis},
title = {Spectral Independence in High-Dimensional Expanders and Applications to the Hardcore Model},
journal = {SIAM Journal on Computing},
volume = {53},
number = {6},
pages = {FOCS20-1-FOCS20-37},
year = {2024},
doi = {10.1137/20M1367696},

URL = { 
    
        https://doi.org/10.1137/20M1367696
    
    

},
eprint = { 
    
        https://doi.org/10.1137/20M1367696
    
    

}
}

@misc{Sel24,
  author       = {Mark Sellke},
  title        = {{Statistics 291: Lecture 4: Geometric and Statistical Consequences of Annealed Free Energy}},
  year         = {2024},
  month        = feb,
  note         = {Lecture notes, scribed by Zad Chin},
  url          = {https://msellke.com/courses/STAT_291/slides_notes/Stat291Lecture4.pdf},
  urldate      = {2026-09-08}
}

@article {Bor76,
    AUTHOR = {Borell, Christer},
     TITLE = {Gaussian {R}adon measures on locally convex spaces},
   JOURNAL = {Math. Scand.},
  FJOURNAL = {Mathematica Scandinavica},
    VOLUME = {38},
      YEAR = {1976},
    NUMBER = {2},
     PAGES = {265--284},
      ISSN = {0025-5521,1903-1807},
   MRCLASS = {60G15},
  MRNUMBER = {436303},
MRREVIEWER = {Raoul\ LePage},
       DOI = {10.7146/math.scand.a-11634},
       URL = {https://doi.org/10.7146/math.scand.a-11634},
}

@inproceedings {TIS76,
    AUTHOR = {Cirel'son, B. S. and Ibragimov, I. A. and Sudakov, V.
              N.},
     TITLE = {Norms of {G}aussian sample functions},
 BOOKTITLE = {Proceedings of the {T}hird {J}apan-{USSR} {S}ymposium on
              {P}robability {T}heory ({T}ashkent, 1975)},
    SERIES = {Lecture Notes in Math.},
    VOLUME = {Vol. 550},
     PAGES = {20--41},
 PUBLISHER = {Springer, Berlin-New York},
      YEAR = {1976},
   MRCLASS = {60G15},
  MRNUMBER = {458556},
MRREVIEWER = {R.\ M.\ Dudley},
}

@incollection {CCCYZ25,
    AUTHOR = {Chen, Xiaoyu and Chen, Zejia and Chen, Zongchen and Yin,
              Yitong and Zhang, Xinyuan},
     TITLE = {Rapid mixing on random regular graphs beyond uniqueness},
 BOOKTITLE = {2025 {IEEE} 66th {A}nnual {S}ymposium on {F}oundations of
              {C}omputer {S}cience---{FOCS} 2025},
     PAGES = {2170--2193},
 PUBLISHER = {IEEE Comput. Soc. Press, Los Alamitos, CA},
      YEAR = {[2025] \copyright 2025},
      ISBN = {979-8-3315-7132-0},
   MRCLASS = {68Q87},
  MRNUMBER = {5048082},
       DOI = {10.1109/FOCS63196.2025.00115},
       URL = {https://doi.org/10.1109/FOCS63196.2025.00115},
}

@book {BGL14,
    AUTHOR = {Bakry, Dominique and Gentil, Ivan and Ledoux, Michel},
     TITLE = {Analysis and geometry of {M}arkov diffusion operators},
    SERIES = {Grundlehren der mathematischen Wissenschaften [Fundamental
              Principles of Mathematical Sciences]},
    VOLUME = {348},
 PUBLISHER = {Springer, Cham},
      YEAR = {2014},
     PAGES = {xx+552},
      ISBN = {978-3-319-00226-2; 978-3-319-00227-9},
   MRCLASS = {60J25 (58J65 60J35 60J60)},
  MRNUMBER = {3155209},
MRREVIEWER = {Ming\ Liao},
       DOI = {10.1007/978-3-319-00227-9},
       URL = {https://doi.org/10.1007/978-3-319-00227-9},
}

@article{W26,
  title={Optimal {M}ixing of {G}lauber {D}ynamics for the {S}herrington-{K}irkpatrick {M}odel at {$\beta< 1/2$}},
  author={Wang, Sihan},
  journal={arXiv preprint arXiv:2608.22159},
  year={2026}
}

@article {DLP09,
    AUTHOR = {Ding, Jian and Lubetzky, Eyal and Peres, Yuval},
     TITLE = {The mixing time evolution of {G}lauber dynamics for the
              mean-field {I}sing model},
   JOURNAL = {Comm. Math. Phys.},
  FJOURNAL = {Communications in Mathematical Physics},
    VOLUME = {289},
      YEAR = {2009},
    NUMBER = {2},
     PAGES = {725--764},
      ISSN = {0010-3616,1432-0916},
   MRCLASS = {82C20 (82C26)},
  MRNUMBER = {2506768},
MRREVIEWER = {Marc\ Wouts},
       DOI = {10.1007/s00220-009-0781-9},
       URL = {https://doi.org/10.1007/s00220-009-0781-9},
}

@article {LLP10,
    AUTHOR = {Levin, David A. and Luczak, Malwina J. and Peres, Yuval},
     TITLE = {Glauber dynamics for the mean-field {I}sing model: cut-off,
              critical power law, and metastability},
   JOURNAL = {Probab. Theory Related Fields},
  FJOURNAL = {Probability Theory and Related Fields},
    VOLUME = {146},
      YEAR = {2010},
    NUMBER = {1-2},
     PAGES = {223--265},
      ISSN = {0178-8051,1432-2064},
   MRCLASS = {82C20 (60K35 82C27)},
  MRNUMBER = {2550363},
MRREVIEWER = {Elena\ A.\ Zhizhina},
       DOI = {10.1007/s00440-008-0189-z},
       URL = {https://doi.org/10.1007/s00440-008-0189-z},
}

@article{D70,
  title={Prescribing a system of random variables by conditional distributions},
  author={Dobrushin, Roland L},
  journal={Theory of Probability \& Its Applications},
  volume={15},
  number={3},
  pages={458--486},
  year={1970},
  publisher={SIAM}
}

@inproceedings{H06,
  title={A simple condition implying rapid mixing of single-site dynamics on spin systems},
  author={Hayes, Thomas P},
  booktitle={2006 47th Annual IEEE Symposium on Foundations of Computer Science (FOCS'06)},
  pages={39--46},
  year={2006},
  organization={IEEE}
}

@incollection {DS84,
    AUTHOR = {Dobrushin, R. L. and Shlosman, S. B.},
     TITLE = {Constructive criterion for the uniqueness of {G}ibbs field},
 BOOKTITLE = {Statistical physics and dynamical systems ({K}\"oszeg, 1984)},
    SERIES = {Progr. Phys.},
    VOLUME = {10},
     PAGES = {347--370},
 PUBLISHER = {Birkh\"auser Boston, Boston, MA},
      YEAR = {1985},
      ISBN = {0-8176-3300-6},
   MRCLASS = {82A05 (82A68)},
  MRNUMBER = {821306},
MRREVIEWER = {Richard\ Holley},
}

@article {DGJ09,
    AUTHOR = {Dyer, Martin and Goldberg, Leslie Ann and Jerrum, Mark},
     TITLE = {Matrix norms and rapid mixing for spin systems},
   JOURNAL = {Ann. Appl. Probab.},
  FJOURNAL = {The Annals of Applied Probability},
    VOLUME = {19},
      YEAR = {2009},
    NUMBER = {1},
     PAGES = {71--107},
      ISSN = {1050-5164,2168-8737},
   MRCLASS = {60J10 (05C15 15A60 68Q25 82B20)},
  MRNUMBER = {2498672},
MRREVIEWER = {Alessandra\ Bianchi},
       DOI = {10.1214/08-AAP532},
       URL = {https://doi.org/10.1214/08-AAP532},
}

@article{LO25,
  title={Trickle-down Theorems via C-Lorentzian Polynomials II: Pairwise Spectral Influence and Improved Dobrushin's Condition},
  author={Leake, Jonathan and Gharan, Shayan Oveis},
  journal={arXiv preprint arXiv:2510.06549},
  year={2025}
}

@article {DGJ08,
    AUTHOR = {Dyer, Martin and Goldberg, Leslie Ann and Jerrum, Mark},
     TITLE = {Dobrushin conditions and systematic scan},
   JOURNAL = {Combin. Probab. Comput.},
  FJOURNAL = {Combinatorics, Probability and Computing},
    VOLUME = {17},
      YEAR = {2008},
    NUMBER = {6},
     PAGES = {761--779},
      ISSN = {0963-5483,1469-2163},
   MRCLASS = {60K40 (82C05)},
  MRNUMBER = {2463409},
MRREVIEWER = {Lee-Peng\ Teo},
       DOI = {10.1017/S0963548308009437},
       URL = {https://doi.org/10.1017/S0963548308009437},
}

@article {BPCPSDV22,
    AUTHOR = {Blanca, Antonio and Caputo, Pietro and Chen, Zongchen and
              Parisi, Daniel and \v Stefankovi\v c, Daniel and Vigoda, Eric},
     TITLE = {On mixing of {M}arkov chains: coupling, spectral independence,
              and entropy factorization},
   JOURNAL = {Electron. J. Probab.},
  FJOURNAL = {Electronic Journal of Probability},
    VOLUME = {27},
      YEAR = {2022},
     PAGES = {Paper No. 142, 42},
      ISSN = {1083-6489},
   MRCLASS = {60J10 (82B20)},
  MRNUMBER = {4505379},
MRREVIEWER = {Robert\ Stelzer},
       DOI = {10.1214/22-ejp867},
       URL = {https://doi.org/10.1214/22-ejp867},
}
\end{document}